\documentclass[11pt]{article}
\usepackage{amsmath,amssymb, fullpage, times}
\usepackage{amsmath, fullpage, times}
\usepackage{algorithmic}
\usepackage{algorithm}
\usepackage{lmodern}

\usepackage{setspace}
\usepackage{enumerate}

\usepackage{todonotes}

\usepackage{xspace} 
\usepackage{epsfig}

\setlength{\textheight}{8.8 in}

\title{$O(\log{\log{rank}})$ \cor \\
 for the \\ 
Matroid Secretary Problem}

\author{Oded Lachish\thanks{Birkbeck, University of London, London, UK. Email:
\hbox{oded@dcs.bbk.ac.uk}}}
\date{}

\date{}

\newcommand{\ignore}[1]{}


\newtheorem{theorem}{Theorem}
\newtheorem{proposition}[theorem]{Proposition}
\newtheorem{lemma}[theorem]{Lemma}
\newtheorem{corollary}[theorem]{Corollary}

\newtheorem{prob}[theorem]{Problem}

\newtheorem{obs}[theorem]{Observation}

\newtheorem{assump}[theorem]{Assumption}

\newtheorem{definition}[theorem]{Definition}

\newcommand{\qed}{\rule{2mm}{2mm}}
\newenvironment{proof}{\par\noindent{\bf Proof.}\quad}{\hfill  $\qed$}


\newcommand{\Prob}[1]{\textrm{prob}\left(#1\right)}

\newcommand{\MSP}{Matroid Secretary Problem}
\newcommand{\CSP}{Classical Secretary Problem}

\newcommand{\SSPI}{Single Sample Prophet Inequality}

\newcommand{\HN}{H}

\newcommand{\PN}{P}

\newcommand{\RN}{R}
\newcommand{\SN}{S}

\newcommand{\ZZ}{\mathbb{Z}}
\newcommand{\NN}{\mathbb{N}}
\newcommand{\NNP}{\NN^{+}}

\newcommand{\OC}{Known-Cardinality}
\newcommand{\OO}{Order-Oblivious}
\newcommand{\MU}{Matroid-Unknown}
\newcommand{\MN}{Matroid-Known}


\newcommand{\medN}{\textrm{med}}
\newcommand{\med}[1]{\medN\left(#1\right)}
\newcommand{\pseudoMid}{\beta(n,1/2)}


\newcommand{\cor}{competitive-ratio}

\newcommand{\val}[1]{\textrm{val}(#1)}

\newcommand{\elementN}{e}

\newcommand{\UU}{U}
\newcommand{\II}{\mathcal I}

\newcommand{\rankN}{rank}
\newcommand{\rank}[1]{\textrm{rank}\left(#1\right)}


\newcommand{\OPTN}{\textrm{OPT}}
\newcommand{\OPT}[1]{\textrm{OPT}\left({#1}\right)}
\newcommand{\LOPTN}{\textrm{LOPT}}
\newcommand{\LOPT}[1]{\textrm{LOPT}\left(#1\right)}

\newcommand{\SpanN}{Closure}
\newcommand{\SpanNA}{\textrm{Cl}}
\newcommand{\Span}[1]{\SpanNA\left(#1\right)}

\newcommand{\Bucket}[2]{\BucketN^{#1}_{#2}}
\newcommand{\BucketN}{B}
\newcommand{\bucketN}{bucket}
\newcommand{\BucketU}[1]{\BucketN_{#1}}

\newcommand{\growthCnst}{2} 


\newcommand{\StageOne}{Gathering stage}
\newcommand{\StageTwo}{Preprocessing stage}
\newcommand{\StageThree}{Selection stage}
\newcommand{\StageTwoWS}{Preprocessing}
\newcommand{\StageThreeWS}{Selection}

\newcommand{\PAlg}{Gap}
\newcommand{\SA}{Simple Algorithm}

\newcommand{\MA}{Main Algorithm}
\newcommand{\TA}{Threshold Algorithm}
\newcommand{\GAPA}{\PAlg\ Algorithm}


\newcommand{\FirstN}{F}
\newcommand{\LastN}{U\setminus F}

\newcommand{\CT}{critical tuple}

\newcommand{\BlockN}{\textrm{Block}}
\newcommand{\Block}[1]{\textrm{Block}(#1)}
\newcommand{\GdN}{\textrm{Good}}
\newcommand{\Gd}[1]{\GdN({#1})}
\newcommand{\BdN}{\textrm{Bad}}
\newcommand{\Bd}[1]{\BdN({#1})}

\newcommand{\BLOCK}{BLOCK}

\newcommand{\uncovN}{\textrm{uncov}}
\newcommand{\uncov}[2]{\textrm{uncov}\left({#1},{#2}\right)}

\newcommand{\indepN}{\textrm{loss}}
\newcommand{\indep}[2]{\textrm{loss}\left({#1},{#2}\right)}

\newcommand{\TradeoffProb}{\frac{1}{4}}
\newcommand{\maxWeightPower}{19} 
\newcommand{\maxWeightPowerMT}{16} 



\newcommand{\KKSet}{K} 

\newcommand{\KSet}{K}
\newcommand{\KSetStar}{\KSet'}

\newcommand{\TalagrandMultCnst}{4}
\newcommand{\TalagrandMultCnstTT}{8}

\newcommand{\TalagrandMultCnstTTPAzzumaDevMultCnstTT}{2^4}

\newcommand{\TalagrandExpAdditiveCnst}{2.1}
\newcommand{\TalagrandExpAdditiveCnstPO}{2.75}


\newcommand{\HSetFirst}{H^{\FirstN}}
\newcommand{\HSetLast}{H^{\LastN}}
\newcommand{\numberOfIter}{4} 

\newcommand{\RVXN}{X}
\newcommand{\RVZN}{Z}

\newcommand{\AzzumaDevMultCnst}{4} 
\newcommand{\AzzumaDevMultCnstTT}{8}
\newcommand{\AzzumaMultCnst}{}
\newcommand{\AzzumaMultCnstTT}{2} 


\newcommand{\SuperSplitMultCnst}{2^7} 

\newcommand{\simpleUnionPower}{\frac{1}{4}} 
\newcommand{\simpleUnionPowerTT}{\frac{1}{2}} 
\newcommand{\rZZDiff}{\frac{3}{4}}
\newcommand{\AzzumaPowerCnst}{\frac{1}{2}}

\newcommand{\singleUnionPower}{\frac{1}{6}}
\newcommand{\singleUnionPowerTT}{\frac{1}{3}}
\newcommand{\singleUnionPowerPHalf}{\frac{2}{3}} 
\newcommand{\singleUnionPowerTTSuperPwr}{\frac{1}{4}} 

\newcommand{\maxWeightPowerTSUPTTSP}{4.75} 
\newcommand{\maxWeightPowerMTTSUPTTSP}{4} 

\newcommand{\simpleGapRankExp}{\frac{3}{4}} 

\newcommand{\manageablePowerDenom}{9}
\newcommand{\manageablePowerEnum}{8}
\newcommand{\manageablePower}{\frac{\manageablePowerEnum}{\manageablePowerDenom}}
\newcommand{\manageablePowerInv}{\frac{\manageablePowerDenom}{\manageablePowerEnum}}

\newcommand{\unionBoundGapPower}{\frac{2}{3}} 
\newcommand{\unionBoundGapPowerThird}{\frac{4}{9}} 
\newcommand{\unionBoundGapPowerSixth}{\frac{1}{3}} 

\newcommand{\concentrationRangeMultCnst}{2^{-5}} 

\newcommand{\manageableGdCnst}{\frac{1}{8}} 


\newcommand{\Super}{Super} 

\newcommand{\PSuperCnst}{-4} 
\newcommand{\PSuperCnstMSeven}{-12} 
\newcommand{\PSuperCnstTSP}{-2.75} 

\newcommand{\SuperPwrDenom}{4}
\newcommand{\SuperPwrEnum}{3}
\newcommand{\SuperPwr}{\frac{\SuperPwrEnum}{\SuperPwrDenom}} 

\newcommand{\superPortionDenom}{9}
\newcommand{\superPortionEnum}{8}
\newcommand{\superPortion}{\frac{\superPortionEnum}{\superPortionDenom}} 



\newcommand{\Valuable}{Valuable}

\newcommand{\valuablePortionDenom}{4}

\newcommand{\valuablePortion}{\frac{3}{4}} 
\newcommand{\valuablePortionInv}{\frac{4}{3}} 

\newcommand{\PValuableIndexLBAdd}{+3} 
\newcommand{\PValuableIndexLBAddMPO}{-1} 

\newcommand{\MValuableCnst}{-1}
\newcommand{\MValuableCnstTVP}{-0.75} 

\newcommand{\valuableRangeMult}{\valuablePortionInv}

\newcommand{\ValuablePwrDenom}{4}
\newcommand{\ValuablePwrEnum}{3}
\newcommand{\ValuablePwr}{\frac{\ValuablePwrEnum}{\ValuablePwrDenom}}


\newcommand{\JSet}{J}

\newcommand{\IncMultCnst}{32}
\newcommand{\IncMultCnstHalf}{16}
\newcommand{\IncMultCnstHalfMO}{15} 

\newcommand{\NLDropCnst}{3}

\newcommand{\NLMap}[1]{%
\ifx&#1&%
{w}%
\else{w(#1)}%
\fi%
}

\newcommand{\TeleCnst}{6}

\newcommand{\SSDropCnst}{18}


\newcommand{\USet}{K} 

\newcommand{\mFam}{\mathcal{H}}
\newcommand{\mFamCardUBCnst}{8}

\newcommand{\Cf}{Critical family}
\newcommand{\cf}{critical family}

\newcommand{\negligible}{negligible}
\newcommand{\splittable}{splittable}

\newcommand{\burned}{burnt}
\newcommand{\usefulN}{useful}

\newcommand{\CTreeN}{critical-tree} 
\newcommand{\CTree}{\mathcal{T}} 
\newcommand{\manageable}{manageable}

\newcommand{\QS}{\mathcal{Q}}

\newcommand{\preDepthMult}{2^7} 

\newcommand{\doNothingMult}{\frac{3}{4}} 
\newcommand{\doNothingMultInv}{\frac{4}{3}} 
\newcommand{\UNSMultConst}{\frac{7}{8}} 

\newcommand{\usefulLBCnst}{\frac{1}{32}} 
\newcommand{\splitLBCnst}{\frac{1}{32}} 
\newcommand{\splitLBCnstOM}{\frac{31}{32}} 
\newcommand{\splitUBCnst}{\frac{15}{16}} 

\newcommand{\PartitionCardCnst}{8} 

\newcommand{\portionOfNegligible}{\frac{1}{4}}
\newcommand{\portionOfBurned}{\frac{1}{8}}
\newcommand{\portionOfBurnedDepthOnePre}{2^{-8}} 
\newcommand{\portionOfBurnedDepthOne}{2^{-3}}

\newcommand{\portionOfUseful}{\frac{1}{2}}
\newcommand{\portionOfUsefulPostStrng}{\frac{1}{36}}

\newcommand{\UsefulDenumCnst}{2^{11}}
\newcommand{\PreTupleCnst}{2^{10}} 

\newcommand{\PreSetCnst}{2^{11}} 

\newcommand{\manageableMultCnst}{\frac{1}{2}}


\newcommand{\GTSetN}{M}
\newcommand{\GTSet}[1]{\GTSetN\left(#1\right)}
\newcommand{\GTSetH}[1]{\GTSetN_H\left(#1\right)}

\newcommand{\preLOPTLBDiv}{8}

\newcommand{\PISHeavy}{\gamma} 


\newcommand{\singleSuperRankLB}{9}

\newcommand{\slackMultCnst}{2^{9}} 

\newcommand{\slackPwrOM}{\frac{7}{8}} 
\newcommand{\slackPwr}{\frac{1}{8}} 
\newcommand{\slackPwrTVal}{\frac{3}{32}} 

\newcommand{\SlackCnst}{2^{30}}

\newcommand{\SlackCnstTTF}{2^{36}}

\newcommand{\slackLogExpCnst}{30} 

\newcommand{\valuableRankSplitMult}{2^{9}} %
\newcommand{\SuperSingleBucketFirstPortion}{\frac{1}{4}}
\newcommand{\SuperSingleBucketFirstPortionInv}{4}

\newcommand{\LastSetCnst}{2^{13}} 
\newcommand{\LastSetCnstTT}{2^{15}} 

\newcommand{\KOnePortion}{\frac{1}{12}}
\newcommand{\KTwoPortion}{\frac{1}{12}}
\newcommand{\KPortionTT}{\frac{1}{6}}
\newcommand{\FirstPortionInUU}{\frac{2}{9}}

\newcommand{\singleFirstLastRankRatio}{3} 

\newcommand{\tupleMainMultCnst}{2^{-5}} 
\newcommand{\SetMainMultCnst}{2^{-3}} 

\begin{document}
\maketitle

\thispagestyle{empty}

\begin{abstract}
\sloppypar
In the  \textit{\MSP} (MSP), the elements of the ground set of a Matroid are revealed on-line one by one, each together with its value.
An algorithm for the \MSP\  is \textit{\MU} if,
at every stage of its execution: (i) it only knows the elements that have been revealed so far and their values, and (ii) it has access to 
 an oracle for testing whether or not any subset of the elements that have been revealed so far is an independent set.
An algorithm is \textit{\OC} if, in addition to (i) and (ii), it  also initially knows the cardinality of the ground set of the Matroid. 

We present here a \OC\ and \textit{\OO} algorithm that, with constant probability, selects an independent set of elements, whose value is at least the optimal value divided by $O(\log{\log{\rho}})$, where $\rho$ is the rank of the Matroid;
that is, the algorithm has a \textit{\cor} of $O(\log{\log{\rho}})$.
The best previous results for a \OC\ algorithm are a \cor\ of $O(\log{\rho})$, by Babaioff \textit{et al.} (2007), and
a \cor\ of $O(\sqrt{\log{\rho}})$, by Chakraborty and Lachish (2012).

In many non-trivial cases the algorithm we present has a \cor\ that is better than the $O(\log{\log{\rho}})$.
The cases in which it fails to do so are easily characterized.
Understanding these cases may lead to improved algorithms for the problem or, conversely, to non-trivial lower bounds.

\end{abstract}

\section{Introduction}

The \textit{\MSP} is a generalization of the \textit{\CSP}, whose origins seem to still be a source of dispute.
One of the first papers on the subject~\cite{dynkin}, by Dynkin, dates back to 1963.
Lindley~\cite{lindley} and Dynkin~\cite{dynkin} each presented an algorithm that achieves a \textit{\cor} of $e$, which is the best possible.
See~\cite{Freeman} for more information about results preceding 1983.

In 2007, Babaioff \textit{et al.}~\cite{babaioff1} established a connection between the \MSP\ and {\it mechanism design}.
This is probably the cause of an increase of interest in generalizations of the \textit{\CSP } and specifically the \MSP.

In the \MSP, we are given a Matroid $\{\UU,\II\}$ and a value function assigning non-negative values to the Matroid elements.
The elements of the Matroid are revealed in an on-line fashion according to an unknown order selected uniformly at random.
The value of each element is unknown until it is revealed.
Immediately after each element is revealed, if the element together with the elements already selected does not form an independent set,  then that element cannot be selected; however, if it does, then an irrevocable decision must be made whether or not to select the element.
That is, if the element is selected, it will stay selected until the end of the process and likewise if it is not.
The goal is to design an algorithm for this problem wit ha small  \cor, that is the ratio between the maximum sum of values of an independent set and the expected sum of values of the independent set returned by the algorithm. 

An algorithm for the \textit{\MSP} (MSP) is called \textit{\MU} if,
at every stage of its execution, it only knows (i) the elements that have been revealed so far and their values and (ii)
 an oracle for testing whether or not a subset the elements that have been revealed so far forms an independent set.
An algorithm is called \textit{\OC} if it knows (i), (ii) and also knows from the start
 the cardinality $n$ of the ground set of the Matroid. 
An algorithm is called \textit{\MN}, if it knows, from the start, everything about the Matroid except for the values of the elements. 
These, as mentioned above, are revealed to the algorithm as each element is revealed.

\paragraph{Related Work}
Our work follows the path initiated by Babaioff \textit{et al.} in~\cite{babaioff1}.
There they formalized the \MSP\ and presented a \OC\ algorithm with a \cor\ of $\log{\rho}$. 
This line of work was continued in~\cite{improved}, where an algorithm with a  \cor\ of $O(\sqrt{\log{\rho}})$ was presented.
In Babaioff \textit{et al.}~\cite{babaioff1} (2007), it was conjectured that a constant \cor\ is achievable.
The best known result for a \textit{\MU} algorithm, implied by the works of Gharan and Vondr\'{a}ck~\cite{Variants} and Chakraborty and Lachish~\cite{improved} (2012): for every fixed $\epsilon > 0$, there exists a \MU\ algorithm with a \cor\ of  $O(\epsilon^{-1}(\sqrt{\log{\rho}})\log^{1+\epsilon}{n})$. 
Gharan and Vondr\'{a}ck showed that a lower bound of $\Omega(\frac{\log{n}}{\log\log{n}})$ on the \cor\ holds in this case. 

Another line of work towards resolving the \MSP\ is the study of the Secretary Problem for specific families of Matroids.
Most of the results of this type are for \MN\ algorithms and all achieve a constant \cor. 
Among the specific families of Matroids studied are \textit{Graphic Matroids} \cite{babaioff1}, \textit{Uniform/Partition Matroids} \cite{Knapsack,Uniform2}, \textit{Transversal Matroids} \cite{transversal1,transversal2}, \textit{Regular and Decomposable Matroids}~\cite{Dinitz} and \textit{Laminar Matroids}~\cite{laminar}.
For surveys that also include other variants of the \MSP~see~\cite{Assignment,Advances,dinitz2013recent}.

There are also results for other generalizations of the \CSP, including the
\textit{Knapsack Secretary Problem}~\cite{Knapsack}, 
\textit{Secretary Problems with Convex Costs}~\cite{convex}, \textit{Sub-modular Secretary Problems}~\cite{submodular,Gupta,feldman} and 
\textit{Secretary problems via linear programming}~\cite{LP}.

\sloppypar
\paragraph{Main result}
We present here a \OC\ algorithm with a \cor\ of $O(\log{\log{\rho}})$. 
The algorithm is also \OO\ as defined by Azar \textit{et al.}~\cite{ProphetLI}). Definition~\ref{def:OO} is a citation of their definition of an \OO\ algorithm for the \MSP.
According to~\cite{Variants}, this implies that, for every fixed $\epsilon > 0$, there exists a \MU\ algorithm with a \cor\ of  $O(\epsilon^{-1}(\log{\log{\rho}})\log^{1+\epsilon}{n})$. 
Our algorithm is also \OO\ as in Definition 1 of~\cite{ProphetLI}, and hence, by Theorem 1 of~\cite{ProphetLI}, this would imply that there exists a \SSPI\ for Matroids with a \cor\ of $O(\log{\log{\rho}})$.

In many non-trivial cases the algorithm we present has a \cor\ that is better than the $O(\log{\log{\rho}})$.
The cases in which it fails to do so are characterized.
Understanding these cases may lead to improved algorithms for the problem or, conversely, to non-trivial lower bounds.

\paragraph{High level description of result and its relation to previous work.}
As in~\cite{babaioff1} and~\cite{improved}, here we also partition the elements into sets which we call \textit{buckets}.
This is done by rounding down the value of each element to the largest possible power of two and then, 
for every power of two, 
defining a bucket to be the set of all elements with that value.
Obviously, the only impact this has on the order of the \cor\ achieved is a constant factor of at most $2$.

We call our algorithm the \MA. It has three consecutive stages: \textit{\StageOne}, \textit{\StageTwo} and \textit{\StageThree}.
In the \StageOne\ it waits, without selecting any elements, until about half of the elements of the matroid are revealed.
The set $\FirstN$ that consists of all the elements revealed during the \StageOne\ is the input to the \StageTwo.
In the \StageTwo, on out of the following three types of output is computed: 
(i) a non negative value, (ii) a set of bucket indices, or (iii) a \CT.
Given the output of the \StageTwo, before any element is revealed
the \MA\ chooses one of the following algorithms:
 the \textit{\TA},
the \textit{\SA} or the \textit{\GAPA}.
Then, after each one of the remaining elements is revealed,
the decision whether to select the element is made by
the chosen algorithm using the input received from the \StageTwo\ and the set of all the elements already revealed.
Once all the elements have been revealed the set of selected elements is returned.

The \TA\ is chosen when the output to the \StageTwo\ is a non-negative value,
which happens with probability half regardless of the contents of the set $\FirstN$.
 Given this input, the \TA, as in the algorithm for the \CSP, selects only the first element that has at least the given value.
The \SA\ is chosen when the output of \StageTwo\ is a set of bucket indices.
The \SA\, selects an element if it is in one of the buckets determined by the set of indices and if it is independent of all previously selected elements.
This specific algorithm was also used in~\cite{improved}.

The \GAPA\ is chosen when the output of \StageThree\ is a \CT, which we define further on.
The \GAPA\ works as follows:
 every element revealed is required to have one of a specific set of values and satisfy two conditions in order to be selected:
 it satisfies the first condition if it is in the closure of a specific subset of elements of $\FirstN$; it satisfies the second condition if it is not in the closure of the union of the set of elements already selected and a  specific subset of elements of $\FirstN$ (which is different than the one used in the first condition).

The proof that the \MA\ achieves the claimed \cor\ consists of the following parts:
a guarantee on the output of the  \SA\ as a function of the input and $\LastN$, where $\UU$ is the ground set of the matroid; 
a guarantee on the output of the  \GAPA\ as a function of the input and $\LastN$; 
a combination of a new structural result for matroids and probabilistic inequalities that imply that if the matroid does not have an element with a large value, then it is possible to compute an input for either the \SA\ or the \GAPA\ that, with high probability,  ensures that the output set has a high value.
This guarantees the claimed \cor , since the case when the matroid has 
an element with a large value is dealt with by the \TA.

\paragraph{The paper is organized as follows:}
Section~\ref{sec:preliminaries} contains the preliminaries; Section~\ref{sec:overview} presents \MA; 
Section~\ref{sec:GAPA} is devoted to the \SA\ and the \GAPA;
Section~\ref{sec:concentrations} contains the required concentrations;
the structural trade-off result is proved in
Section~\ref{sec:structure};
 the main result appears in Section~\ref{sec:MainResult};
 and in Section~\ref{sec:Discussion} we characterize the cases in which the algorithm performs exactly as guaranteed and give non-trivial example in which the algorithm performs better than the guaranteed \cor.

\section{Preliminaries}\label{sec:preliminaries}
All logarithms are to the base $\growthCnst$.
We use $\ZZ$ to denote the set of all integers, $\NN$ to denote the non-negative integers and 
$\NNP$ to denote the positive integers.
We use $[\alpha]$ to denote  $\{1,2,\dots,\lfloor\alpha\rfloor\}$ for any non-negative 
 real $\alpha$.
 We use $[\alpha,\beta]$ to denote $\{i\in \ZZ \mid \alpha \leq i \leq\beta\}$ and $(\alpha,\beta]$ to denote $\{i\in \ZZ \mid \alpha < i \leq\beta\}$, and so on.
 We use $\med{f}$ to denote the \textit{median} of a function $f$ from a finite set to the non-negative reals. If there are two possible values for $\med{f}$ the smaller one is chosen. 
 
We define $\pseudoMid$ to be a random variable whose value is the number of successes in $n$ independent probability $1/2$ Bernoulli trials.
\begin{obs}~\label{obs:binomial}
Let $A = \{a_1,a_2,\dots,a_n\}$ and $W = \pseudoMid$; let
 $\pi:[n]\longrightarrow [n]$ be a permutation selected uniformly at random, and let $D = \{a_{\pi(i)} \mid i \in [W]\}$.
For every $i\in [n]$, we have that $a_i\in D$ independently with probability $1/2$.
\end{obs}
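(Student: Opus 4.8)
The plan is to reduce the statement to a single expectation computation. Since the elements $a_1,\dots,a_n$ are distinct, I would first rewrite membership in $D$ in terms of indices: because $D$ consists exactly of the elements whose index occurs among $\pi(1),\dots,\pi(W)$, we have $a_j\in D$ if and only if $\pi^{-1}(j)\le W$. I would also recall that proving the events $\{a_j\in D\}_{j\in[n]}$ to be mutually independent, each of probability $1/2$, is equivalent to proving that $\Prob{\bigcap_{j\in T}\{a_j\in D\}}=2^{-|T|}$ for every $T\subseteq[n]$; this is a standard equivalent form of mutual independence (it is recovered from the complement form by inclusion–exclusion, and the case $|T|=1$ already gives the probability $1/2$).

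Fix $T\subseteq[n]$ and put $t=|T|$. By the reformulation above, $\bigcap_{j\in T}\{a_j\in D\}$ holds exactly when $W\ge M$, where $M:=\max_{j\in T}\pi^{-1}(j)$. The key point is that $\pi$ is drawn independently of $W$, so conditioning on $W$ does not change the law of $\pi$; and since $\pi^{-1}$ is again a uniformly random permutation, $\{\pi^{-1}(j):j\in T\}$ is a uniformly random $t$-subset of $[n]$. Hence, for each $w$, $\Prob{M\le w \mid W=w}=\binom{w}{t}/\binom{n}{t}$, i.e. the number of $t$-subsets of $[n]$ contained in $\{1,\dots,w\}$ divided by the total number of $t$-subsets (this is $0$ when $w<t$, consistently with $\binom{w}{t}=0$). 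Averaging over $W$ then gives
\[
\Prob{\textstyle\bigcap_{j\in T}\{a_j\in D\}}=\Prob{W\ge M}=\frac{\mathbb{E}\!\left[\binom{W}{t}\right]}{\binom{n}{t}}.
\]

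It remains to evaluate $\mathbb{E}[\binom{W}{t}]$. Writing $W=X_1+\dots+X_n$ for independent probability-$1/2$ Bernoulli variables (as in the definition of $\pseudoMid$), the quantity $\binom{W}{t}$ counts precisely the $t$-subsets $S\subseteq[n]$ all of whose trials succeed, so $\binom{W}{t}=\sum_{|S|=t}\prod_{k\in S}X_k$; taking expectations and using independence yields $\mathbb{E}[\binom{W}{t}]=\binom{n}{t}2^{-t}$. Substituting back gives $\Prob{\bigcap_{j\in T}\{a_j\in D\}}=2^{-t}=2^{-|T|}$, which is what was needed. I expect no genuine difficulty here — the only care required is bookkeeping: justifying the independence of $W$ and $\pi$, handling the degenerate range $w<t$, and invoking the correct "all subsets" characterization of mutual independence. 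As an alternative that avoids the identity $\mathbb{E}[\binom{W}{t}]=\binom{n}{t}2^{-t}$, one could argue purely in distribution: conditioned on its size, the index set $\{\pi(i):i\in[W]\}$ is uniform among subsets of $[n]$ of that size, and so is $\{k:X_k=1\}$; since $W$ and $\sum_k X_k$ are equidistributed, the two index sets have the same law, and for the latter the conclusion is immediate.
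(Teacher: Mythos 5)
Your proof is correct, but it takes a different route from the paper's. The paper argues directly in terms of the distribution of the whole set $D$: since $|D|=W$ is binomial, $\Prob{|D|=k}=\binom{n}{k}2^{-n}$, and conditioned on its size $D$ is uniform among subsets of that size, so $\Prob{D=C}=2^{-n}$ for every $C\subseteq A$; i.e.\ $D$ is a uniformly random subset of $A$, which is exactly the claimed independence. You instead verify the equivalent "all sub-collections" characterization of mutual independence by computing, for each $T$ with $|T|=t$, the inclusion probability $\Prob{\bigcap_{j\in T}\{a_j\in D\}}=\mathbb{E}\left[\binom{W}{t}\right]/\binom{n}{t}$ and then evaluating the factorial moment $\mathbb{E}\left[\binom{W}{t}\right]=\binom{n}{t}2^{-t}$. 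Both arguments rest on the same two facts (independence of $\pi$ and $W$, and exchangeability of $\pi$), and both are complete; the paper's version is shorter because it identifies the law of $D$ in one step, whereas yours requires the extra moment computation and the inclusion--exclusion remark about independence. The alternative you sketch in your last sentence --- matching the law of $\{\pi(i):i\in[W]\}$ with that of $\{k:X_k=1\}$ by conditioning on size --- is essentially the paper's proof.
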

\begin{proof}
To prove the proposition we only need to show that for every $C\subseteq A$, we have
$D = C$ with probability $2^{-n}$.
Fix $C$. 
There are $\binom{n}{|C|}$ subsets of $A$ of size $|C|$.
$D$ is equally likely to be one of these subsets.
Hence, the probability that $|D|=|C|$ is $\binom{n}{|C|}\cdot 2^{-n}$ and therefore the probability that $D=C$ is  $\binom{n}{|C|}\cdot 2^{-n}/\binom{n}{|C|} = 2^{-n}$.
\end{proof}

\subsection{Matroid definitions, notations and preliminary results}
\begin{definition}{\textbf{[Matroid]}}~\label{def:matroid}
A \textbf{matroid} is an ordered pair $M=(\UU,\II)$, where $\UU$ is a set of \textbf{elements}, called the \textit{ground set}, and $\II$ is a family of subsets of $\UU$ that satisfies the following:
\begin{itemize} 
\item If $I\in \II$ and $I'\subset I$, then $I'\in \II$
\item If $I, I'\in \II$ and $|I'| < |I|$, then there exists $e\in I\setminus I'$ such that $I'\cup\{e\}\in\II$. 
\end{itemize}
The sets in $\II$ are called \textbf{independent} sets and a maximal independent set is called a \textbf{basis}.
\end{definition}

A \textit{value function} over a Matroid $M=(\UU,\II)$ is a mapping from the elements of $\UU$ to the non-negative reals. 
Since we deal with a fixed Matroid and value function,
we will always use $M=(\UU,\II)$ for the Matroid.
We set $n = |\UU|$ and, for every $e\in \UU$, we denote its value by $\val{e}$.
\begin{definition}{\textbf{[\rankN\ and \SpanN]}}\label{def:RankSpan}
For every $\SN \subseteq \UU$, let 
\begin{itemize}
\item
$\rank{\SN} = \max\{|\SN'|\mid \SN' \in \II \mbox{~and~} \SN'\subseteq \SN\}$ and
\item
$\Span{\SN} = \left\{\elementN\in \UU\mid \rank{\SN\cup \{\elementN\}} = \rank{\SN} \right\}$.
\end{itemize}
\end{definition}

The following proposition captures a number of standard properties of Matroids; the proofs can be found in~\cite{MatroidBook}. We shall only prove the last assertion.
\begin{proposition}~\label{prop:Matroid}
Let $\SN_1,\SN_2,\SN_3$ be subsets of $\UU$ and $\elementN\in \UU$ then
\begin{enumerate}
\item\label{item:CardRank} $\rank{\SN_1} \leq |\SN_1|$, where equality holds if and only if $\SN_1$ is an independent set,
\item\label{item:MatroidClContained} if $\SN_1 \subseteq \SN_2$ or $\SN_1 \subseteq \Span{\SN_2}$, then $\SN_1 \subseteq \Span{\SN_1} \subseteq \Span{\SN_2}$ and $\rank{\SN_1} \leq \rank{\SN_2}$,
\item\label{item:MatroidPlusOne} if $\elementN\not\in \Span{\SN_1}$, then 
$\rank{\SN_1\cup\{\elementN\}} = \rank{\SN_1} + 1$,
\item\label{item:MatroidUnion} $\rank{\SN_1 \cup \SN_2} \leq \rank{\SN_1} + \rank{\SN_2}$,
\item\label{item:MatroidUnionSpan} $\rank{\SN_1 \cup \SN_2} \leq \rank{\SN_1} + \rank{\SN_2\setminus \Span{\SN_1}}$,
 and
\item\label{item:MatroidExchange} suppose that $\SN_1$ is minimal such that  $\elementN\in\Span{\SN_1 \cup \SN_2}$, but $\elementN\not\in\Span{(\SN_1 \cup \SN_2) \setminus \{\elementN^*\}}$, for every 
$\elementN^* \in \SN_1$,
then 
$\elementN^*\in \Span{\{\elementN\}\cup ((\SN_1 \cup \SN_2)\setminus \{\elementN^*\})  }$,
for every $\elementN^*\in \SN_1$. 
\end{enumerate}
\end{proposition}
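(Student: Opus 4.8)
The plan is to prove the last assertion by reducing it to the Mac Lane--Steinitz exchange property for matroid closure, which in turn follows from the rank facts already stated. Fix $\elementN^*\in\SN_1$ and set $\TN=(\SN_1\cup\SN_2)\setminus\{\elementN^*\}$. Since $\elementN^*\in\SN_1\subseteq\SN_1\cup\SN_2$ we have $\TN\cup\{\elementN^*\}=\SN_1\cup\SN_2$, so the hypotheses read $\elementN\in\Span{\TN\cup\{\elementN^*\}}$ and $\elementN\notin\Span{\TN}$ (the second being the stated condition $\elementN\notin\Span{(\SN_1\cup\SN_2)\setminus\{\elementN^*\}}$, which the minimality of $\SN_1$ guarantees), and the goal becomes $\elementN^*\in\Span{\TN\cup\{\elementN\}}$.

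First I would record two rank estimates: by $\elementN\notin\Span{\TN}$ together with item~\ref{item:MatroidPlusOne}, $\rank{\TN\cup\{\elementN\}}=\rank{\TN}+1$; and by item~\ref{item:MatroidUnion}, $\rank{\TN\cup\{\elementN^*\}}\le\rank{\TN}+1$, so $\rank{\TN\cup\{\elementN^*\}}$ equals either $\rank{\TN}$ or $\rank{\TN}+1$. The key step is to rule out the first possibility. If $\rank{\TN\cup\{\elementN^*\}}=\rank{\TN}$, then $\elementN^*\in\Span{\TN}$, and I would show this forces $\Span{\TN\cup\{\elementN^*\}}=\Span{\TN}$: the inclusion $\supseteq$ is item~\ref{item:MatroidClContained}, while for any $x\in\Span{\TN\cup\{\elementN^*\}}$ one has $\rank{\TN}\le\rank{\TN\cup\{x\}}\le\rank{\TN\cup\{\elementN^*\}\cup\{x\}}=\rank{\TN\cup\{\elementN^*\}}=\rank{\TN}$, so $x\in\Span{\TN}$. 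But then $\elementN\in\Span{\TN\cup\{\elementN^*\}}=\Span{\TN}$, contradicting $\elementN\notin\Span{\TN}$.

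Hence $\rank{\TN\cup\{\elementN^*\}}=\rank{\TN}+1$. Since $\elementN\in\Span{\TN\cup\{\elementN^*\}}$ gives $\rank{\TN\cup\{\elementN^*\}\cup\{\elementN\}}=\rank{\TN\cup\{\elementN^*\}}=\rank{\TN}+1$, and since $\rank{\TN\cup\{\elementN\}}=\rank{\TN}+1$, we get $\rank{(\TN\cup\{\elementN\})\cup\{\elementN^*\}}=\rank{\TN\cup\{\elementN\}}$, that is, $\elementN^*\in\Span{\TN\cup\{\elementN\}}$, which is the claim. I do not anticipate a real obstacle here; the only delicate points are the auxiliary fact that $\elementN^*\in\Span{\TN}$ implies $\Span{\TN\cup\{\elementN^*\}}=\Span{\TN}$, and keeping straight that the minimality hypothesis on $\SN_1$ is used precisely to guarantee $\elementN\notin\Span{(\SN_1\cup\SN_2)\setminus\{\elementN^*\}}$ for every $\elementN^*\in\SN_1$.
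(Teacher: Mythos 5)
Your proof is correct and takes essentially the same route as the paper's: both arguments reduce the exchange property to showing, via Item~\ref{item:MatroidPlusOne} and the definition of $\SpanNA$, that $\rank{((\SN_1\cup\SN_2)\setminus\{\elementN^*\})\cup\{\elementN\}\cup\{\elementN^*\}} = \rank{((\SN_1\cup\SN_2)\setminus\{\elementN^*\})\cup\{\elementN\}}$. The paper accomplishes this with a single chain of rank equalities, while you take a slightly longer detour through a case analysis that rules out $\elementN^*\in\Span{(\SN_1\cup\SN_2)\setminus\{\elementN^*\}}$, but the underlying rank bookkeeping is the same.
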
 
\begin{proof}
We prove Item~\ref{item:MatroidExchange}. 
The rest of the items are standard properties of Matroids.
\sloppypar{
Let $\elementN^* \in \SN_1$.
By Item~\ref{item:MatroidPlusOne}, 
$\rank{\{\elementN\}\cup ((\SN_1 \cup \SN_2)\setminus \{\elementN^*\})}$ is equal to
$\rank{\SN_1 \cup \SN_2}$ which is equal to
$\rank{\{\elementN\}\cup \SN_1 \cup \SN_2}$ which in turn is equal to
$\rank{\{\elementN\}\cup ((\SN_1 \cup \SN_2)\setminus \{\elementN^*\}) \cup \{\elementN^*\}}$.
Thus, again by Item~\ref{item:MatroidPlusOne},
this implies that 
$\elementN^* \in \Span{(\{\elementN\}\cup ((\SN_1 \cup \SN_2)\setminus \{\elementN^*\})}$.}
\end{proof}


\begin{assump}~\label{ass:rank}
$\val{\elementN}= 0$, for every $\elementN\in \UU$ such that $\rank{\{\elementN\}}=0$.
For every $\elementN\in \UU$ such that $\val{\elementN}> 0$, there exists $i\in \ZZ$ such that $\val{\elementN} = \growthCnst^i$.
\end{assump}

In the worst case, the implication of this assumption is an increase in the competitive ratio by a multiplicative factor that does not exceed $\growthCnst$, compared with the competitive ratio we could achieve without this assumption.

\begin{definition}{\textbf{[Buckets]}}~\label{def:Bucket}
For every $i\in \ZZ$, the $i$'th bucket is $\BucketU{i} = \{e \in \UU \mid \val{e}  = 2^i\}.$ We also use the following notation for every $\SN \subseteq \UU$ and $J\subset \ZZ$:
\begin{itemize}
\item $\Bucket{\SN}{i} = \BucketU{i}\cap \SN$, 
\item $\BucketU{J} = \bigcup_{i\in J}\BucketU{i}$ and  
\item $\Bucket{\SN}{J} = \bigcup_{i\in J}\Bucket{\SN}{i}$.
\end{itemize}
\end{definition}

\begin{definition}{\textbf{[\OPTN]}}~\label{def:OPT}
For every $\SN \subseteq \UU$, let 
  $\OPT{\SN} = 
\max\left\{\sum_{\elementN\in \SN'}\val{\elementN}\Big|~ \SN'\subseteq\SN \mbox{ and } \SN' \in \II \right\}.
$
\end{definition}
We note that if $\SN$ is independent, then $\OPT{\SN} = \sum_{\elementN\in \SN}\val{\elementN}$.

\begin{obs}\label{obs:OPT}
For every independent $\SN\subseteq \UU$, $\OPT{\SN} = \sum_{i\in \ZZ}2^i\cdot \rank{\Bucket{\SN}{i}}$.
\end{obs}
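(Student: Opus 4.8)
The plan is to reduce everything to the remark made right after Definition~\ref{def:OPT}: since $\SN$ is independent, $\OPT{\SN} = \sum_{\elementN\in \SN}\val{\elementN}$, so it suffices to regroup this sum bucket by bucket. First I would note that, by Assumption~\ref{ass:rank}, every $\elementN\in\UU$ with $\val{\elementN}>0$ satisfies $\val{\elementN}=2^i$ for a unique $i\in\ZZ$, hence lies in exactly one bucket $\BucketU{i}$, while every $\elementN$ with $\val{\elementN}=0$ lies in no bucket and contributes $0$ to the sum. Consequently $\SN$ is the disjoint union of the sets $\Bucket{\SN}{i}$, $i\in\ZZ$, together with some zero-value elements, and therefore
\[
\OPT{\SN}=\sum_{\elementN\in \SN}\val{\elementN}=\sum_{i\in\ZZ}\ \sum_{\elementN\in \Bucket{\SN}{i}} 2^i = \sum_{i\in\ZZ} 2^i\,\bigl|\Bucket{\SN}{i}\bigr|.
\]

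The remaining step is to replace $\bigl|\Bucket{\SN}{i}\bigr|$ by $\rank{\Bucket{\SN}{i}}$. Since $\Bucket{\SN}{i}=\BucketU{i}\cap\SN\subseteq\SN$ and $\SN\in\II$, the first matroid axiom in Definition~\ref{def:matroid} gives $\Bucket{\SN}{i}\in\II$, so Proposition~\ref{prop:Matroid}(\ref{item:CardRank}) yields $\rank{\Bucket{\SN}{i}}=\bigl|\Bucket{\SN}{i}\bigr|$ for every $i$. Substituting this into the displayed equation proves the observation. I do not expect any genuine obstacle here; the only point that needs a moment's care is the bookkeeping for zero-value elements and the fact that the buckets partition the positively-valued part of $\SN$, which is precisely what Assumption~\ref{ass:rank} and Definition~\ref{def:Bucket} guarantee.
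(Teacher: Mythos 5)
Your proof is correct and takes the natural route that the paper leaves implicit: the paper states this as an observation without proof, relying on the remark immediately preceding it (that $\OPT{\SN}=\sum_{e\in\SN}\val{e}$ when $\SN$ is independent) together with the bucket partition and the hereditary property of independence. Your write-up fills in exactly those steps, so it is in essentially the same spirit as what the paper intends.
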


\begin{definition}{\textbf{[\LOPTN]}}~\label{def:LOPT}
For every $\SN \subseteq \UU$, we define 
$\LOPT{\SN} = \sum_{i\in \ZZ}\growthCnst^i\cdot \rank{\Bucket{\SN}{i}}$.
\end{definition}

\begin{obs}~\label{obs:LOPT}
For every $\SN\subseteq \UU$ and  $J_1,J_2\subseteq \ZZ$,
\begin{enumerate}
\item\label{item:LOPT1} $\LOPT{\SN} \geq \OPT{\SN}$,
\item\label{item:LOPT1H}
 $\LOPT{\Bucket{\SN}{J_1}} = \sum_{i\in J_1}\growthCnst^i\cdot\rank{\Bucket{\SN}{i}}$ and
\item\label{item:LOPT2} if $J_1\cap J_2 = \emptyset$, then 
$\LOPT{\Bucket{\SN}{J_1\cup J_2}} = 
\LOPT{\Bucket{\SN}{J_1}} +
\LOPT{\Bucket{\SN}{J_2}}.$
\end{enumerate}
\end{obs}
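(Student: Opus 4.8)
The plan is to prove the three items in the order (\ref{item:LOPT1H}), (\ref{item:LOPT2}), (\ref{item:LOPT1}). The first two are purely combinatorial bookkeeping over the pairwise disjoint buckets, while the third is the only one that uses a matroid property, namely monotonicity of the rank function under inclusion (Proposition~\ref{prop:Matroid}(\ref{item:MatroidClContained})). Throughout I would use that for a fixed finite ground set only finitely many buckets are nonempty, so all the sums over $\ZZ$ that appear are finite and may be freely rearranged and split.

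For item (\ref{item:LOPT1H}), I would unfold Definition~\ref{def:LOPT} applied to $\Bucket{\SN}{J_1}$ to get $\LOPT{\Bucket{\SN}{J_1}} = \sum_{i\in\ZZ}\growthCnst^i\cdot\rank{\BucketU{i}\cap\Bucket{\SN}{J_1}}$. Since distinct buckets consist of elements of distinct values and hence are disjoint, $\BucketU{i}\cap\Bucket{\SN}{J_1} = \BucketU{i}\cap\bigcup_{j\in J_1}(\BucketU{j}\cap\SN)$ equals $\Bucket{\SN}{i}$ when $i\in J_1$ and equals $\emptyset$ otherwise; as $\rank{\emptyset}=0$, only the indices $i\in J_1$ contribute, giving $\LOPT{\Bucket{\SN}{J_1}}=\sum_{i\in J_1}\growthCnst^i\cdot\rank{\Bucket{\SN}{i}}$. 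For item (\ref{item:LOPT2}), I would apply item (\ref{item:LOPT1H}) three times — to $J_1\cup J_2$, to $J_1$ and to $J_2$ — and use $J_1\cap J_2=\emptyset$ to split $\sum_{i\in J_1\cup J_2}$ as $\sum_{i\in J_1}+\sum_{i\in J_2}$; this immediately yields $\LOPT{\Bucket{\SN}{J_1\cup J_2}}=\LOPT{\Bucket{\SN}{J_1}}+\LOPT{\Bucket{\SN}{J_2}}$.

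For item (\ref{item:LOPT1}), I would let $\SN'\subseteq\SN$ be an independent set with $\OPT{\SN}=\sum_{\elementN\in\SN'}\val{\elementN}=\OPT{\SN'}$. Applying Observation~\ref{obs:OPT} to the independent set $\SN'$ gives $\OPT{\SN'}=\sum_{i\in\ZZ}\growthCnst^i\cdot\rank{\Bucket{\SN'}{i}}$. For every $i$ we have $\Bucket{\SN'}{i}\subseteq\Bucket{\SN}{i}$, so $\rank{\Bucket{\SN'}{i}}\le\rank{\Bucket{\SN}{i}}$ by Proposition~\ref{prop:Matroid}(\ref{item:MatroidClContained}); summing these inequalities against the nonnegative weights $\growthCnst^i$ and comparing with Definition~\ref{def:LOPT} gives $\OPT{\SN}\le\sum_{i\in\ZZ}\growthCnst^i\cdot\rank{\Bucket{\SN}{i}}=\LOPT{\SN}$. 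I do not anticipate any genuine obstacle here: the only non-formal ingredient is monotonicity of matroid rank, and the single point requiring a line of care is the remark that all sums in play are finite, which is what legitimizes the rearrangements above.
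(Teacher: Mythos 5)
Your proof is correct. The paper states this observation without proof, and your argument — unfolding Definition~\ref{def:LOPT} using the pairwise disjointness of the buckets for items (\ref{item:LOPT1H}) and (\ref{item:LOPT2}), and combining Observation~\ref{obs:OPT} with monotonicity of rank for item (\ref{item:LOPT1}) — is exactly the routine verification the author leaves implicit.
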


\subsection{\MSP}

\begin{definition}{\textbf{[\cor]}}~\label{def:cor}
Given a Matroid $\mathcal{M} = (\UU,\II)$, 
the \textit{\cor} of an algorithm that selects an independent set $\PN\subseteq \UU$ is the ratio of $\OPT{\UU}$ to the expected value of $\OPT{\PN}$.
\end{definition}

\begin{prob}{\textbf{[\OC\ Matroid Secretary Problem]}}~\label{prob:MSP}
The elements of the Matroid $M=(\UU,\II)$
are revealed in random order in an on-line fashion.
The cardinality of $\UU$ is known in advance, but 
every element and its value are unknown until revealed.
The only access to the structure of the Matroid is via an oracle that, upon receiving a query in the form of a subset of elements already revealed, answers whether the subset is independent or not.
An element can be selected only after it is revealed and before the next element is revealed, and then only provided the set of selected elements remains independent at all times.
Once an element is selected it remains selected.
The goal is to design an algorithm that maximizes the expected value of $\OPT{\PN}$, i.e.,
achieves as small a \textbf{\cor} as possible.
\end{prob}

\begin{definition}\label{def:OO}\textbf{(Definition 1 in~\cite{ProphetLI})}. 
We say that an algorithm $\mathcal{S}$ for the secretary problem (together with its corresponding analysis)
is \textbf{order-oblivious} if, on a randomly ordered input vector $(v_{i_1},\dots,v_{i_n})$:
\begin{enumerate}
\item (algorithm)  $\mathcal{S}$ sets a (possibly random) number $k$, observes without accepting the first $k$ values 
$S =\{v_{i_1},\dots,v_{i_k}\}$, and uses information from $S$ to choose elements from $V = \{v_{i_{k+1}},\dots,v_{i_n}\}$.
\item (analysis) $S$ maintains its competitive ratio even if the elements from $V$ are revealed in any (possibly
adversarial) order. In other words, the analysis does not fully exploit the randomness in the arrival
of elements, it just requires that the elements from $S$ arrive before the elements of $V$, and that the
elements of $S$ are the first $k$ items in a random permutation of values.
\end{enumerate}
\end{definition}
\section{The \MA}\label{sec:overview}
The input to the \MA\ is the number of indices $n$ in a randomly ordered input vector $(\elementN_1,\dots,\elementN_n)$, where $\{\elementN_1,\dots,\elementN_n\}$ are the elements of the ground set of the matroid.
These are revealed to the \MA\ one by one in an on-line fashion in the increasing order of their indices.
The \MA\ executes the following three stages:
\begin{enumerate}
\item 
\textbf{\StageOne.} 
Let $W=\pseudoMid$.
Wait until $W$ elements are revealed without selecting any. Let $\FirstN$ be the set of all these elements.
\item 
\textbf{\StageTwo.}
Given only $\FirstN$, before any item of $\LastN$ is revealed, one of the following three types of output is computed:
(i) a non-negative value, (ii)
a set of bucket indices, or (iii) a \CT\ which is defined in Subsection~\ref{subsec:GAPA}.
\item\label{StageSelection}
\textbf{\StageThree.}
One out of three algorithms is chosen and used
in order to decide which elements from $\LastN$ to select, when they are revealed.
If the output of \StageTwo\ is a non-negative value, then the \textit{\TA} is chosen,
if it is a set of bucket indices, then the \textit{\SA} is chosen and if it is a \CT, then the \textit{\GAPA} is chosen.
\end{enumerate}

With probability $\frac{1}{2}$, regardless of $\FirstN$, the output of the \StageTwo\ is the largest value of the elements of $\FirstN$.
The \TA, which is used in this case, selects the first revealed element of $\LastN$ that has a value at least as large as the output of the \StageTwo.
This ensures that if
$\max\{\val{\elementN}\mid \elementN \in \UU \} \geq 2^{-\maxWeightPower}\cdot \OPT{\UU}$,
then the claimed \cor\ is achieved.
So for the rest of the paper we make the following assumption:
\begin{assump}\label{ass:MaxVal}
$\max\{\val{\elementN}\mid \elementN \in \UU \} < 2^{-\maxWeightPower}\cdot \OPT{\UU}$.
\end{assump}

The paper proceeds as follows:
in Subsection~\ref{subsec:SA}, we present the \SA\ and formally prove a guarantee on its output;
in Subsection~\ref{subsec:GAPA}, we define \CT, describe the \GAPA\ and formally prove a guarantee on its output;
in Section~\ref{sec:concentrations}, prove the required concentrations;
in Section~\ref{sec:structure},
we prove our structural trade-off result; and in Section~\ref{sec:MainResult}, we prove the main result.

\section{The \SA\ and the \GAPA}\label{sec:GAPA}
In this section we present the pseudo-code for the \SA\ and the \GAPA, and prove the guarantees on the \cor s they achieve.
We start with the \SA, which is also used in~\cite{improved}.

\subsection{The \SA}\label{subsec:SA}
\begin{algorithm}
\caption{\SA}
\label{alg:SA}
{\bf Input: a set $\JSet$ of bucket indices}
\begin{enumerate}
	\item\label{step:SAinitializeP} 
		$\PN\longleftarrow \emptyset$
	\item\label{step:SAreveal} 
		immediately after each element $\elementN\in \LastN$ is revealed, do 
		\begin{enumerate}
  			\item\label{step:SAJ}
  				if $\log{\val{\elementN}}\in \JSet$ do 		
			\begin{enumerate}
  				\item\label{step:SASelect} 
  					if $\elementN\not\in \Span{\PN}$ do 
 					$\PN\longleftarrow \PN\cup \{\elementN\}$
  			\end{enumerate}
  		\end{enumerate}
\end{enumerate}
{\bf Output}: $\PN$
\end{algorithm}

We note that according to Steps~\ref{step:SAJ} and~\ref{step:SASelect}, the output $\PN$ of the \SA\ always satisfies, $\Bucket{\LastN}{\JSet}\subseteq \Span{\PN}$.
Thus, since $\PN \subseteq \Bucket{\LastN}{\JSet}$, the output $\PN$ of the \SA\ always satisfies, $\rank{\PN} = \rank{\Bucket{\LastN}{\JSet}}$.
As a result, for every $j\in \JSet$, we are guaranteed that $\PN$ contains at least $\rank{\Bucket{\LastN}{\JSet}} - \rank{\Bucket{\LastN}{\JSet\setminus \{j\}}}$ elements from $\Bucket{\LastN}{j}$.
We capture this measure using the following definition:
\begin{definition}{\textbf{[\uncovN]}}\label{def:uncov}
$\uncov{\RN}{\SN} =  \rank{\RN\cup\SN} - \rank{\RN},~~$ for every $\RN,\SN \subseteq \UU$. 
\end{definition} 
It is easy to show that
\begin{obs}\label{obs:uncov} 
$\uncov{\RN}{\SN}$ is monotonic decreasing in $\RN$.
\end{obs}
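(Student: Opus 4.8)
The plan is to show that if $\RN_1 \subseteq \RN_2 \subseteq \UU$, then $\uncov{\RN_1}{\SN} \ge \uncov{\RN_2}{\SN}$, which is exactly the asserted monotonicity. First I would reduce to the case $\RN_2 = \RN_1 \cup \{\elementN\}$ for a single $\elementN \in \UU$: since $\UU$ is finite, any pair $\RN_1 \subseteq \RN_2$ is connected by a chain of single-element additions, and monotonicity composes along the chain.

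For the single-element step, write $g(\XN) = \rank{\XN\cup\{\elementN\}} - \rank{\XN}$. By Proposition~\ref{prop:Matroid}(\ref{item:MatroidClContained}) and~(\ref{item:MatroidPlusOne}) together with the definition of $\SpanNA$, we have $g(\XN)\in\{0,1\}$ for every $\XN\subseteq\UU$, with $g(\XN)=0$ if and only if $\elementN\in\Span{\XN}$. Rearranging the target inequality $\uncov{\RN_1}{\SN}\ge\uncov{\RN_1\cup\{\elementN\}}{\SN}$ turns it into $g(\RN_1) \ge g(\RN_1\cup\SN)$. The only way this could fail is $g(\RN_1)=0$ and $g(\RN_1\cup\SN)=1$; but $g(\RN_1)=0$ means $\elementN\in\Span{\RN_1}$, and since $\RN_1\subseteq\RN_1\cup\SN$ we get $\Span{\RN_1}\subseteq\Span{\RN_1\cup\SN}$ by Proposition~\ref{prop:Matroid}(\ref{item:MatroidClContained}), hence $\elementN\in\Span{\RN_1\cup\SN}$, i.e.\ $g(\RN_1\cup\SN)=0$ — a contradiction. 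This settles the step.

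An alternative, perhaps cleaner, route is to first establish the identity $\uncov{\RN}{\SN} = \rank{\SN\setminus\Span{\RN}}$. The inequality $\ge$ follows by extending a basis of $\RN$ to a basis of $\RN\cup\SN$ using elements of $\SN$, each of which necessarily lies outside $\Span{\RN}$ and the set so built is independent; the inequality $\le$ follows from $\rank{\RN\cup\SN} = \rank{\RN\cup(\SN\setminus\Span{\RN})} \le \rank{\RN} + \rank{\SN\setminus\Span{\RN}}$, using that adjoining elements of $\Span{\RN}$ leaves the rank unchanged and Proposition~\ref{prop:Matroid}(\ref{item:MatroidUnion}). Given this identity, monotonicity is immediate: $\RN_1\subseteq\RN_2$ forces $\Span{\RN_1}\subseteq\Span{\RN_2}$, hence $\SN\setminus\Span{\RN_2}\subseteq\SN\setminus\Span{\RN_1}$, and rank is monotone by Proposition~\ref{prop:Matroid}(\ref{item:MatroidClContained}).

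There is no genuine obstacle here: the statement is just submodularity of the matroid rank function specialized to this setting, and every ingredient is already recorded in Proposition~\ref{prop:Matroid}. The only point needing a word of care is the reduction to single-element additions, which is valid precisely because $\UU$ is finite.
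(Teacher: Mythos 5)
Your first argument is correct and complete: the reduction to single-element extensions is valid because $\UU$ is finite, and the key step --- that $g(\XN)=\rank{\XN\cup\{\elementN\}}-\rank{\XN}$ takes values in $\{0,1\}$, vanishing exactly when $\elementN\in\Span{\XN}$, so that $g(\RN_1)=0$ forces $g(\RN_1\cup\SN)=0$ --- is exactly the standard local-submodularity argument. The paper offers no proof at all (the observation is prefaced only by ``it is easy to show''), so there is nothing to compare against; your argument is presumably the intended one, and it uses only Items~\ref{item:MatroidClContained} and~\ref{item:MatroidPlusOne} of Proposition~\ref{prop:Matroid}.

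Your ``alternative, perhaps cleaner, route'' is, however, broken: the identity $\uncov{\RN}{\SN}=\rank{\SN\setminus\Span{\RN}}$ is false. Only $\uncov{\RN}{\SN}\leq\rank{\SN\setminus\Span{\RN}}$ holds (this is essentially Item~\ref{item:MatroidUnionSpan} of Proposition~\ref{prop:Matroid}, and is why the paper defines the separate quantity $\indepN$ and relates it to $\uncovN$ only by inequalities). For a counterexample take the graphic matroid of a triangle with edges $a,b,c$, $\RN=\{a\}$, $\SN=\{b,c\}$: then $\uncov{\RN}{\SN}=2-1=1$ while $\rank{\SN\setminus\Span{\RN}}=\rank{\{b,c\}}=2$. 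The flaw in your derivation is that both halves prove the same direction: extending a basis of $\RN$ to a basis of $\RN\cup\SN$ exhibits an independent subset of $\SN\setminus\Span{\RN}$ of size $\uncov{\RN}{\SN}$, which gives a \emph{lower} bound on $\rank{\SN\setminus\Span{\RN}}$, i.e.\ again $\uncov{\RN}{\SN}\leq\rank{\SN\setminus\Span{\RN}}$, not the reverse. Since your first argument already settles the observation, simply drop the alternative.
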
 
According to this definition, for every $j\in \JSet$, we are guaranteed that $\PN$ contains at least $\uncov{\Bucket{\LastN}{\JSet\setminus \{j\}}}{\Bucket{\LastN}{j}}$ elements from $\Bucket{\LastN}{j}$. 
We next prove this in a slightly more general setting that is required for the \GAPA.
\begin{lemma}\label{lem:greedy}
Suppose that the input to the \SA\ is a set $\JSet\subset \ZZ$  
and, instead of the elements of $\LastN$, the elements of a set $S\subseteq \UU$ are revealed in an arbitrary order to the \SA. 
Then the \SA\ returns an independent set  $\PN\subseteq S$ such that, for every $j\in \JSet$,
$
\rank{\Bucket{\PN}{j}} \geq 
\uncov{\Bucket{S}{\JSet\setminus\{j\}}}{\Bucket{S}{j}}.
$
\end{lemma}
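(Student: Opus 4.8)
The plan is to isolate the two structural facts that Algorithm~\ref{alg:SA} guarantees about its output $\PN$ — that $\PN$ is an independent subset of $\Bucket{S}{\JSet}$, and that $\Bucket{S}{\JSet}\subseteq\Span{\PN}$ — and then, for each fixed $j\in\JSet$, carry out a short rank count obtained by splitting $\PN$ according to whether an element lies in bucket $j$.

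First I would verify the structural facts. By Steps~\ref{step:SAJ} and~\ref{step:SASelect}, an element $\elementN$ is added to $\PN$ only if $\elementN\in S$ and $\log{\val{\elementN}}\in\JSet$, so at all times $\PN\subseteq\Bucket{S}{\JSet}\subseteq S$ (in particular the returned set is a subset of $S$); moreover $\elementN$ is added only when $\elementN\notin\Span{\PN}$, so by Proposition~\ref{prop:Matroid}(\ref{item:MatroidPlusOne}) each addition raises $\rank{\PN}$ by exactly one, whence $|\PN|=\rank{\PN}$ is maintained and $\PN$ stays independent by Proposition~\ref{prop:Matroid}(\ref{item:CardRank}). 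For the greedy property: when an element $\elementN\in\Bucket{S}{\JSet}$ is revealed, either it is added to $\PN$ — in which case $\elementN\in\PN\subseteq\Span{\PN}$ — or it already lies in $\Span{\PN'}$ for the then-current set $\PN'$; since $\PN$ only grows, $\PN'\subseteq\PN$ at termination, so $\elementN\in\Span{\PN'}\subseteq\Span{\PN}$ by Proposition~\ref{prop:Matroid}(\ref{item:MatroidClContained}). Hence $\PN\subseteq\Bucket{S}{\JSet}\subseteq\Span{\PN}$, which with Proposition~\ref{prop:Matroid}(\ref{item:CardRank}) and~(\ref{item:MatroidClContained}) gives $|\PN|=\rank{\PN}=\rank{\Bucket{S}{\JSet}}$.

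Next I would fix $j\in\JSet$ and decompose. Since $\PN\subseteq\Bucket{S}{\JSet}$ and $j\in\JSet$, the set $\PN$ is the disjoint union of $\Bucket{\PN}{j}$ and $\Bucket{\PN}{\JSet\setminus\{j\}}$, so $|\PN|=|\Bucket{\PN}{j}|+|\Bucket{\PN}{\JSet\setminus\{j\}}|$. As $\Bucket{\PN}{\JSet\setminus\{j\}}$ is an independent subset of $\Bucket{S}{\JSet\setminus\{j\}}$, Proposition~\ref{prop:Matroid}(\ref{item:CardRank}) gives $\rank{\Bucket{S}{\JSet\setminus\{j\}}}\geq|\Bucket{\PN}{\JSet\setminus\{j\}}|$. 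Finally, using $\Bucket{S}{\JSet\setminus\{j\}}\cup\Bucket{S}{j}=\Bucket{S}{\JSet}$ together with the rank identity of the previous paragraph,
\[
\uncov{\Bucket{S}{\JSet\setminus\{j\}}}{\Bucket{S}{j}}=\rank{\Bucket{S}{\JSet}}-\rank{\Bucket{S}{\JSet\setminus\{j\}}}\le |\PN|-|\Bucket{\PN}{\JSet\setminus\{j\}}|=|\Bucket{\PN}{j}|=\rank{\Bucket{\PN}{j}},
\]
where the last equality is Proposition~\ref{prop:Matroid}(\ref{item:CardRank}) applied to the independent set $\Bucket{\PN}{j}$. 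This is exactly the asserted inequality.

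I do not expect a genuine obstacle here: everything reduces to the standard fact that a matroid greedy scan produces a maximal independent set of the ground set it inspects, plus careful bookkeeping of the bucket decomposition of $\PN$. The one point to state cleanly is the greedy invariant $\Bucket{S}{\JSet}\subseteq\Span{\PN}$ and the closure-monotonicity step that passes from $\Span{\PN'}$ for an intermediate $\PN'$ to $\Span{\PN}$ for the final $\PN$; the rest is the two-line rank computation above, and none of it uses the order in which the elements of $S$ are revealed, so the statement holds for an arbitrary such order as claimed.
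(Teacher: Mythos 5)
Your proof is correct and is essentially the argument the paper intends (the paper's proof merely says ``by the same reasoning as described in the beginning of this section,'' i.e.\ $\PN\subseteq\Bucket{S}{\JSet}\subseteq\Span{\PN}$ implies $|\PN|=\rank{\Bucket{S}{\JSet}}$, and the bucket-$j$ count follows by subtracting a bound on $|\Bucket{\PN}{\JSet\setminus\{j\}}|$). You simply make explicit the greedy invariant, the closure-monotonicity step, and the final rank bookkeeping, which is a faithful fleshing-out rather than a different route.
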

\begin{proof}
By the same reasoning as described in the beginning of this section, for every $j\in \JSet$, we are guaranteed that $\PN$ contains at least $\uncov{\Bucket{\LastN}{\JSet\setminus \{j\}}}{\Bucket{\LastN}{j}}$ elements from $\Bucket{\LastN}{j}$ and
the result follows.
\end{proof}

We next prove the following guarantee on the output of the \SA, by using the preceding lemma.
\begin{theorem}\label{thm:greedy}
Given a set $\JSet\subset \ZZ$ as input, the \SA\ returns an independent set  $\PN\subseteq \LastN$ such that
$$\OPT{\PN}\geq \sum_{j\in \JSet}2^j\cdot\uncov{\Bucket{\LastN}{\JSet\setminus\{j\}}}{\Bucket{\LastN}{j}}.$$
\end{theorem}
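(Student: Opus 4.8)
The plan is to derive this immediately from Lemma~\ref{lem:greedy} together with the bucket decomposition of $\OPTN$ given in Observation~\ref{obs:OPT}. First I would invoke Lemma~\ref{lem:greedy} with the set $S = \LastN$: this is exactly the instance in which the elements revealed to the \SA\ are the elements of $\LastN$ in the (adversarial or random) order in which the \MA\ sees them, so the lemma applies verbatim and tells us the \SA\ returns an independent set $\PN \subseteq \LastN$ with $\rank{\Bucket{\PN}{j}} \geq \uncov{\Bucket{\LastN}{\JSet\setminus\{j\}}}{\Bucket{\LastN}{j}}$ for every $j \in \JSet$.

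Next, since $\PN$ is independent, Observation~\ref{obs:OPT} applies to $\PN$ and gives $\OPT{\PN} = \sum_{i\in\ZZ} 2^i \cdot \rank{\Bucket{\PN}{i}}$. I would then throw away all the indices outside $\JSet$: every summand $2^i \cdot \rank{\Bucket{\PN}{i}}$ is non-negative (both $2^i > 0$ and $\rank{\cdot} \geq 0$), so $\OPT{\PN} \geq \sum_{j\in\JSet} 2^j \cdot \rank{\Bucket{\PN}{j}}$. Substituting the per-bucket lower bound from Lemma~\ref{lem:greedy} into each term of this last sum yields $\OPT{\PN} \geq \sum_{j\in\JSet} 2^j \cdot \uncov{\Bucket{\LastN}{\JSet\setminus\{j\}}}{\Bucket{\LastN}{j}}$, which is the claimed inequality.

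There is essentially no obstacle here: the substantive combinatorial content — that the greedy selection in bucket $j$ captures at least $\uncov{\Bucket{\LastN}{\JSet\setminus\{j\}}}{\Bucket{\LastN}{j}}$ independent elements — has already been isolated in Lemma~\ref{lem:greedy}, and Observation~\ref{obs:OPT} is quoted. The only points that need a sentence of care are (i) noting that $\PN$ is independent, which is what licenses the use of Observation~\ref{obs:OPT}, and (ii) noting that restricting the index set from $\ZZ$ to $\JSet$ can only decrease the sum. Both are routine, so the proof will be just a few lines.
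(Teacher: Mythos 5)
Your proof is correct and follows exactly the paper's own argument: apply Observation~\ref{obs:OPT} to the independent set $\PN$, restrict the sum to the indices in $\JSet$, and substitute the per-bucket lower bound from Lemma~\ref{lem:greedy}. You spell out a couple of routine points (independence of $\PN$, non-negativity of the discarded terms) that the paper leaves implicit, but the route is the same.
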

\begin{proof}
By Observation~\ref{obs:OPT},  $\OPT{\PN}$ is at least 
$\sum_{j\in \JSet}2^{j}\cdot\rank{\Bucket{\PN}{j}}$, which is at least
$\sum_{j\in \JSet}2^{j}\cdot\uncov{\Bucket{\LastN}{\JSet\setminus\{j\}}}{\Bucket{\LastN}{j}},
$
by Lemma~\ref{lem:greedy}.
The result follows.
\end{proof}

We note that the above guarantee is not necessarily the best possible. However, it is sufficient for our needs because, as we show later on, with very high probability, for a specific family of sets $\JSet$ and every $j$ in such $\JSet$,  we have that $\uncov{\Bucket{\LastN}{\JSet\setminus\{j\}}}{\Bucket{\LastN}{j}} \approx \uncov{\Bucket{\FirstN}{\JSet\setminus\{j\}}}{\Bucket{\FirstN}{j}}$. 
Thus, in relevant cases, we can approximate this guarantee using only the elements of $\FirstN$. 

\begin{corollary}\label{cor:SA}
Given a set $\JSet= \{k\}$ as input, the \SA\ returns an independent set  $\PN\subseteq \LastN$ such that
 $\OPT{\PN} \geq 2^k\cdot\rank{\Bucket{\LastN}{k}}$.
\end{corollary}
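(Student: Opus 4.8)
The plan is to obtain this as an immediate specialization of Theorem~\ref{thm:greedy}. First I would apply that theorem with the singleton input $\JSet = \{k\}$, which guarantees that the \SA\ returns an independent set $\PN \subseteq \LastN$ with
$$\OPT{\PN} \geq \sum_{j \in \{k\}} 2^j \cdot \uncov{\Bucket{\LastN}{\{k\}\setminus\{j\}}}{\Bucket{\LastN}{j}} = 2^k \cdot \uncov{\Bucket{\LastN}{\emptyset}}{\Bucket{\LastN}{k}}.$$

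Next I would simplify the single remaining term. Since $\{k\}\setminus\{k\} = \emptyset$, and by Definition~\ref{def:Bucket} the set $\Bucket{\LastN}{\emptyset} = \bigcup_{i\in\emptyset}\Bucket{\LastN}{i}$ is the empty union, i.e. $\Bucket{\LastN}{\emptyset} = \emptyset$. Then, unwinding Definition~\ref{def:uncov}, $\uncov{\emptyset}{\Bucket{\LastN}{k}} = \rank{\emptyset \cup \Bucket{\LastN}{k}} - \rank{\emptyset} = \rank{\Bucket{\LastN}{k}} - 0$, using that $\rank{\emptyset} = 0$ (this follows from Item~\ref{item:CardRank} of Proposition~\ref{prop:Matroid}, as $\rank{\emptyset} \leq |\emptyset| = 0$, or equivalently from Assumption~\ref{ass:rank} / the definition of rank). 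Substituting this back yields $\OPT{\PN} \geq 2^k \cdot \rank{\Bucket{\LastN}{k}}$, as claimed.

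There is no real obstacle here: the statement is a direct corollary, and the only things to be careful about are the bookkeeping conventions — that an empty union of buckets is the empty set and that the rank of the empty set is zero — both of which are immediate from the definitions already in place. Hence the entire proof is the one-line specialization above together with these two trivial observations.
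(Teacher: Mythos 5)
Your proposal is correct and is exactly the intended derivation: the paper states the corollary without proof as an immediate specialization of Theorem~\ref{thm:greedy} to $\JSet=\{k\}$, where the single term $\uncov{\emptyset}{\Bucket{\LastN}{k}}$ collapses to $\rank{\Bucket{\LastN}{k}}$ since $\rank{\emptyset}=0$.
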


\subsection{The \GAPA}\label{subsec:GAPA}
The subsection starts with a description of the input to the \GAPA\ and how it works;
afterwards it provides a formal definition of the \GAPA\ and its input and then concludes with a formal proof of the guarantee on the \GAPA's output.

Like the \SA\ the elements of $\LastN$ are revealed to the \GAPA\ one by one in an on-line manner.
The input to the \GAPA\ is a tuple $(\BlockN,\GdN,\BdN)$, called a critical tuple.
$\BlockN$ is a mapping from the integers $\ZZ$ to the power set of the integers,
such that if $\Block{i}$ is not empty then $i\in \Block{i}$.
$\BlockN$ determines from which buckets the \GAPA\ may select elements. 
Specifically, an element $\elementN \in \LastN$ may be selected only if  $\Block{\log{\val{\elementN}}}$ is not empty.
Every pair of not empty sets $\Block{i}$ and $\Block{j}$, where $i\geq j$, are such that either $\Block{i} = \Block{j}$ or $\min{\Block{i}} > \max{\Block{j}}$ and the latter may occur only if $i>j$.
We next formally define the \CT.

\begin{definition}{\textbf{[\CT, \BLOCK]}}\label{def:tuple}
$(\BlockN,\GdN,\BdN)$,
where 
$\GdN$, $\BdN$ and $\BlockN$ are mappings from $\ZZ$ to $2^{\ZZ}$, is a \textbf{\CT} if the following hold for every $i,j\in \ZZ$ such that $i\geq j$ and $\Block{i}$ and $\Block{j}$ are not empty:
\begin{enumerate}
\item\label{item:CTBlock} $i\in \Block{i}$, 
\item\label{item:CTBlocks}
if $i >j$ either $\Block{i} = \Block{j}$ or $\min{\Block{i}} >  \max{\Block{j}}$,
\item\label{item:CTSameBlock}  if $\Block{i} = \Block{j}$, then $\Gd{i} = \Gd{j}$ and  $\Bd{i} = \Bd{j}$, 
\item\label{item:BlockGood} $\Block{i}\cup \Bd{i} \subseteq \Gd{i}$,
\item\label{item:CTLO} 
if $\min{\Block{i}} > \max{\Block{j}}$, then $\Bd{i} \subseteq \Gd{i} \subseteq \Bd{j} \subseteq \Gd{j}$,
\item\label{item:BBGStructure} 
 $\max{\Block{i}} <  \min{\Bd{i}}$. 
\end{enumerate}
We define $\BLOCK = \{i\mid \Block{i}\neq\emptyset\}$.
\end{definition}
For a depiction of the preceding structure see Figure~\ref{fig:CT}.

\begin{figure}[ht]
	\centering
   \includegraphics[width=2.4in]{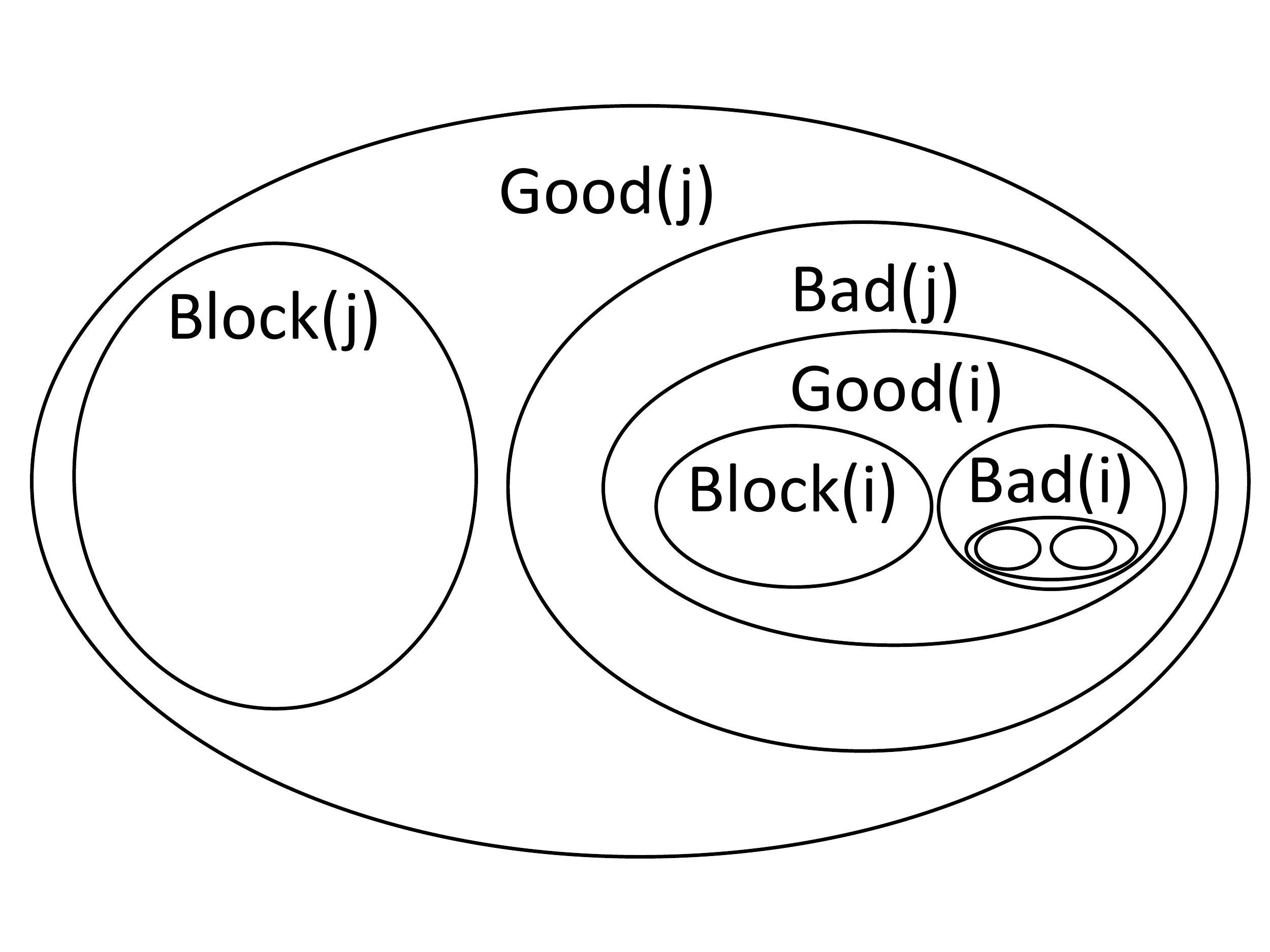}
   \caption{[\CT, where $\min{\Block{i}} > \max{\Block{j}}$]}\label{fig:CT}
\end{figure}
\noindent
The following observation, follow directly from the preceding definition.

\begin{obs}\label{obs:badblock}
If $(\BlockN,\GdN,\BdN)$ is a \textbf{\CT}, then 
\begin{enumerate}
\item\label{item:badblockDisjoint} the sets in $\{\Block{j}\}_{j\in \ZZ}$ are pairwise-disjoint,
\item $\Block{i} \cap \Bd{i} = \emptyset$, for every $i\in \ZZ$,
 and
\item\label{item:badblockGood} for every $i$ and $j$ in $\bigcup_{\ell\in \ZZ} \Block{\ell}$, if $j\not\in\Gd{i}$, then $i > j$ and 
$\Gd{i} \subseteq \Bd{j}$.
\end{enumerate} 
\end{obs}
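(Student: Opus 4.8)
All three assertions are immediate consequences of the axioms in Definition~\ref{def:tuple}, which force the nonempty blocks (and the associated $\BdN$, $\GdN$ sets) into a nested/interval structure; only the third assertion needs more than one line. Throughout, note that whenever an index $k$ lies in $\bigcup_{\ell\in\ZZ}\Block{\ell}$ it lies in a nonempty block, and by Item~\ref{item:CTBlock} that block is $\Block{k}$, so every item of Definition~\ref{def:tuple} that mentions $\Block{k}$ is applicable.

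\textbf{Items 1 and 2.} For the first assertion I would show that two \emph{distinct} members of the family $\{\Block{j}\}_{j\in\ZZ}$ are disjoint: given $i>j$ with $\Block{i},\Block{j}$ nonempty and $\Block{i}\neq\Block{j}$, Item~\ref{item:CTBlocks} of Definition~\ref{def:tuple} forces $\min\Block{i}>\max\Block{j}$, so the two sets occupy disjoint ranges of $\ZZ$ and hence $\Block{i}\cap\Block{j}=\emptyset$. For the middle assertion, if $\Block{i}=\emptyset$ there is nothing to prove, and otherwise Item~\ref{item:BBGStructure} gives $\max\Block{i}<\min\Bd{i}$, so once more the sets lie in disjoint ranges and $\Block{i}\cap\Bd{i}=\emptyset$.

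\textbf{Item 3.} This is the substantive part, proved by a short case analysis. Fix $i,j\in\bigcup_{\ell\in\ZZ}\Block{\ell}$ with $j\notin\Gd{i}$. By Items~\ref{item:CTBlock} and~\ref{item:BlockGood}, $i\in\Block{i}\subseteq\Gd{i}$ and $j\in\Block{j}\subseteq\Gd{j}$; in particular $i\neq j$. I would next rule out $i<j$: applying Item~\ref{item:CTBlocks} to the pair $(j,i)$, either $\Block{j}=\Block{i}$, whence $\Gd{j}=\Gd{i}$ by Item~\ref{item:CTSameBlock} and so $j\in\Gd{j}=\Gd{i}$, contradicting $j\notin\Gd{i}$; or $\min\Block{j}>\max\Block{i}$, whence Item~\ref{item:CTLO} gives $\Gd{j}\subseteq\Bd{i}\subseteq\Gd{i}$ and again $j\in\Gd{j}\subseteq\Gd{i}$, a contradiction. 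Hence $i>j$. Applying Item~\ref{item:CTBlocks} to the pair $(i,j)$, the case $\Block{i}=\Block{j}$ is impossible since it would give $\Gd{i}=\Gd{j}\ni j$, so $\min\Block{i}>\max\Block{j}$, and Item~\ref{item:CTLO} yields $\Bd{i}\subseteq\Gd{i}\subseteq\Bd{j}\subseteq\Gd{j}$; in particular $\Gd{i}\subseteq\Bd{j}$, which together with $i>j$ is exactly the claim.

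\textbf{Where to be careful.} There is no genuine obstacle here, but the place I would be most careful is the bookkeeping in Item~3: one must confirm that the blocks of $i$ and $j$ are truly nonempty before invoking Items~\ref{item:CTBlock}, \ref{item:CTSameBlock}, \ref{item:BlockGood} and~\ref{item:CTLO}, and then keep track, at each application of Item~\ref{item:CTBlocks}, of which of its two alternatives is in force. Everything else follows directly from the interval-and-nesting picture encoded in Definition~\ref{def:tuple} (and depicted in Figure~\ref{fig:CT}).
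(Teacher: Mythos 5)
The paper offers no proof of this observation; it merely asserts that it ``follow[s] directly from the preceding definition.'' Your proposal is the natural fleshing-out of that claim, and the case analysis in Item~3 (the only genuinely non-trivial part) is carried out correctly once one takes the indices $i,j$ to lie in $\BLOCK$.

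One point deserves more care than your preamble gives it. You state that any $k\in\bigcup_{\ell\in\ZZ}\Block{\ell}$ lies in a non-empty block and that this block is $\Block{k}$, so that every axiom of Definition~\ref{def:tuple} involving $\Block{k}$ is applicable. That does not actually follow from Definition~\ref{def:tuple}: the axioms there only constrain pairs of indices whose blocks are \emph{both} non-empty, so nothing prevents, say, $\Block{0}=\{0,1,2\}$ with $\Block{1}=\Block{2}=\emptyset$. For such a tuple, $1\in\bigcup_{\ell}\Block{\ell}$ but $\Block{1}=\emptyset$, so $\Gd{1}$ and $\Bd{1}$ are completely unconstrained and one can choose them so that Item~\ref{item:badblockGood} of the observation fails outright. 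Hence either the observation is to be read with $i,j\in\BLOCK$ (equivalently, one assumes $\bigcup_{\ell}\Block{\ell}=\BLOCK$, which is indeed true of the critical tuples the paper constructs in the proof of Theorem~\ref{thm:Tradeoff}), or an additional axiom is implicit. Your argument is fine under that reading, and your closing paragraph shows you noticed the point, but the preamble presents the claim as if it were forced by Item~\ref{item:CTBlock} alone, which it is not: Item~\ref{item:CTBlock} yields $k\in\Block{k}$ only \emph{after} one knows $\Block{k}\neq\emptyset$; it cannot by itself give $\Block{k}\neq\emptyset$ for an arbitrary $k$ that happens to be covered by some other block.
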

The mappings $\GdN$ and $\BdN$ are used in order to determine if an element can be selected as follows:
an element $\elementN \in \LastN$ such that
$\log{\val{\elementN}}\in \Block{\log{\val{\elementN}}}$ is selected if it satisfies two conditions:
(i)
$\elementN \in \Span{\Bucket{\FirstN}{\Gd{i}}}$; and (ii)
$\elementN$ is in the closure of the union of $\Bucket{\FirstN}{\Bd{i}}$ and all the previously selected elements.
We next explain why this strategy works.

Clearly, the only elements in $\Bucket{\LastN}{i}$ that do not satisfy condition (i) are those in $\Bucket{\LastN}{j} \setminus \Span{\Bucket{\FirstN}{\Gd{i}}}$.
An essential part of our result is an upper bound on the rank of the set 
$\Bucket{\LastN}{i} \setminus \Span{\Bucket{\FirstN}{\Gd{i}}}$ 
and hence we use the following definition to capture this quantity.
\begin{definition}{\textbf{[\indepN]}}\label{def:Ind}
 For every $\RN,\SN \subseteq \UU$, let
$\indep{\RN}{\SN} = 
 \rank{\SN \setminus \Span{\RN}}.$
\end{definition}
According to this definition and the preceding explanation we are guaranteed
that the rank of the set of elements in $\Bucket{\LastN}{i}$ that satisfy condition (i) is at least
$\rank{\Bucket{\LastN}{i}}-\indep{\Bucket{\FirstN}{\Gd{i}}}{\Bucket{\LastN}{i}}$.

For every $j\in \Block{i}$, let $S_j = \Bucket{\LastN}{j} \cap \Span{\Bucket{\FirstN}{\Gd{j}}}$, that is, the elements of $S_j$ are the elements of $\Bucket{\LastN}{j}$ that satisfy condition (i). 
We will show that such an element satisfies condition (ii) if
it is not in the closure of the union of $\Bucket{\FirstN}{\Bd{i}}$ and \textbf{only} all the elements from $\Bucket{\LastN}{i}$ that were previously selected.
The reason this happens is that, for every $j'>i$, such that $\min{\Block{j'}}> \max{\Block{i}}$ all the element selected from $\Bucket{\LastN}{j'}$, satisfy condition (i) and hence are in $\Span{\Bucket{\FirstN}{\Bd{i}}}$, and for every $j'<i$, such that $\max{\Block{j'}}< \min{\Block{i}}$ the condition (ii) ensures, for every $j\in \Block{j'}$, that each element selected from $\Bucket{\LastN}{j}$ will not prevent the selection of any element from $S_i$ because 
$S_i\subseteq \Span{\Bucket{\FirstN}{\Gd{i}}} \subset \Span{\Bucket{\FirstN}{\Bd{j}}}$. 

Thus, when restricted to the elements of $\bigcup_{j\in \Block{i}} S_j$, the \GAPA\ can be viewed as if it was executing the \SA\ with input $J = \Block{i}\cup \Bd{i}$ and the elements revealed are those of
$S = \Bucket{\FirstN}{\Bd{i}}\cup \bigcup_{j\in \Block{i}} S_j$, which are revealed in an arbitrary order, except that the elements of $\Bucket{\FirstN}{\Bd{j}}$ are revealed first.
Thus, using Lemma~\ref{lem:greedy}, it is straight forward to see that
at least $\uncov{\Bucket{\FirstN}{\Bd{i}}\cup \bigcup_{j\in \Block{i}\setminus\{i'\}}{S_j}}{S_{i'}}$ are selected from $S_{i'}$, for every $i'\in \Block{i}$.
We shall show, that this term,  is at least $\uncov{\Bucket{\FirstN}{\Bd{j}}\cup 
\Bucket{\LastN}{\Block{i}\setminus\{i\}}}{S_{i'}}$, which in turn is at least
$\uncov{\Bucket{\FirstN}{\Bd{j}}\cup 
\Bucket{\LastN}{\Block{i}\setminus\{i\}}}{\Bucket{\LastN}{i'}}
- \indep{\Bucket{\FirstN}{\Gd{i}}}{\Bucket{\LastN}{i}}$.
In Section~\ref{sec:concentrations}, we show that with high probability,
by using only the elements of $\FirstN$, we can approximate $\uncov{\Bucket{\FirstN}{\Bd{j}}\cup 
\Bucket{\LastN}{\Block{i}\setminus\{i\}}}{\Bucket{\LastN}{i'}}$
and upper bound $\indep{\Bucket{\FirstN}{\Gd{i}}}{\Bucket{\LastN}{i}}$.
In Section~\ref{sec:structure}, we use the result of Section~\ref{sec:concentrations} to show that, if there is no element with a very high value, then either the \SA\ or the \GAPA\ will achieve the required \cor\, and we can choose the proper option using only the elements of $\FirstN$.

\begin{algorithm}
\caption{\GAPA}
\label{alg:Complicated}
{\bf Input: a \CT\  $(\BlockN,\GdN,\BdN)$}
\begin{enumerate}
	\item\label{step:initializeP} 
		$\PN\longleftarrow \emptyset$
	\item\label{step:reveal} 
		immediately after each element $\elementN\in \LastN$ is revealed do 
		\begin{enumerate}
  			\item\label{step:i} 
  				$\ell \longleftarrow \log{\val{\elementN}}$
  			\item\label{step:Special} 
  				if $\Block{\ell}\neq \emptyset$ do 		
			\begin{enumerate}
   				\item\label{step:FirstFilter} 
  				if $\elementN\in \Span{\Bucket{\FirstN}{\Gd{\ell}}}$,  do
				\begin{enumerate}
  					\item\label{step:SecondFilter} 
  					if $\elementN\not\in \Span{\PN\cup \Bucket{\FirstN}{\Bd{\ell}}}$, do 
  					$\PN\longleftarrow \PN\cup \{\elementN\}$
  				\end{enumerate}
  			\end{enumerate}
  		\end{enumerate}
\end{enumerate}
{\bf Output}: $\PN$
\end{algorithm}

\begin{lemma}\label{lem:disjoint}
Let  $i$ be such that
$\Block{i} \neq \emptyset$ and $\PN$ as it was in any stage in an arbitrary execution of the \GAPA.
If $\elementN\in \Bucket{\LastN}{\Block{i}}\cap\Span{\Bucket{\FirstN}{\Gd{i}}}$, then
$\elementN\not\in \Span{\PN\cup \Bucket{\FirstN}{\Bd{i}}}$ if and only if
$\elementN\not\in \Span{\left(\PN\cap \Bucket{\LastN}{\Block{i}}\right)\cup \Bucket{\FirstN}{\Bd{i}}}$.
\end{lemma}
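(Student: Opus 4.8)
The plan is to prove the two directions separately, writing $B=\Bucket{\FirstN}{\Bd{i}}$, $G=\Bucket{\FirstN}{\Gd{i}}$ and $Q=\PN\cap\Bucket{\LastN}{\Block{i}}$. The ``only if'' direction is immediate: since $Q\cup B\subseteq\PN\cup B$, monotonicity of the closure (Proposition~\ref{prop:Matroid}, item~\ref{item:MatroidClContained}) gives $\Span{Q\cup B}\subseteq\Span{\PN\cup B}$, so $\elementN\notin\Span{\PN\cup B}$ forces $\elementN\notin\Span{Q\cup B}$. For the ``if'' direction I will prove the contrapositive: assuming $\elementN\in\Span{\PN\cup B}$, I will show $\elementN\in\Span{Q\cup B}$.

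First I would organize $\PN$ by blocks. Each element $\elementN'\in\PN$ was accepted by the \GAPA, so its value-logarithm $\ell'$ satisfies $\Block{\ell'}\neq\emptyset$, $\ell'\in\Block{\ell'}$, and $\elementN'\in\Span{\Bucket{\FirstN}{\Gd{\ell'}}}$. By pairwise-disjointness of the blocks (Observation~\ref{obs:badblock}, item~\ref{item:badblockDisjoint}) and the dichotomy in Definition~\ref{def:tuple}, item~\ref{item:CTBlocks}, every element of $\PN\setminus Q$ lies in a block $\Block{\ell'}$ that is either strictly above $\Block{i}$ (i.e.\ $\min\Block{\ell'}>\max\Block{i}$) or strictly below it; call these sets $R_{\mathrm{above}}$ and $R_{\mathrm{below}}$. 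For $\elementN'\in R_{\mathrm{above}}$, item~\ref{item:CTLO} of Definition~\ref{def:tuple} yields $\Gd{\ell'}\subseteq\Bd{i}$, hence $\elementN'\in\Span{\Bucket{\FirstN}{\Gd{\ell'}}}\subseteq\Span{B}$; so $R_{\mathrm{above}}\subseteq\Span{B}$, and consequently $\Span{\PN\cup B}=\Span{Q\cup R_{\mathrm{below}}\cup B}$. Also, every $q\in Q$ passed the first filter of the \GAPA\ with $\Block{\log\val{q}}=\Block{i}$, so $q\in\Span{G}$; thus $Q\subseteq\Span{G}$, and $B\subseteq G$ by item~\ref{item:BlockGood}.

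Now assume for contradiction that $\elementN\notin\Span{Q\cup B}$. Choose $T\subseteq R_{\mathrm{below}}$ minimal with $\elementN\in\Span{Q\cup T\cup B}$; it is nonempty. Let $\elementN'$ be the element of $T$ that the \GAPA\ accepted last, let $\ell'=\log\val{\elementN'}$, and let $\PN^{-}$ be $\PN$ just before $\elementN'$ was accepted, so that $T\setminus\{\elementN'\}\subseteq\PN^{-}$. Since $T$ (a subset of $R_{\mathrm{below}}\subseteq\LastN$) is disjoint from $Q\cup B$, the minimality of $T$ is exactly the hypothesis of the exchange property (Proposition~\ref{prop:Matroid}, item~\ref{item:MatroidExchange}) with $\SN_1=T$ and $\SN_2=Q\cup B$, giving $\elementN'\in\Span{\{\elementN\}\cup(T\setminus\{\elementN'\})\cup Q\cup B}$. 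Because $\Block{\ell'}$ lies below $\Block{i}$, item~\ref{item:CTLO} gives $\Bd{i}\subseteq\Gd{i}\subseteq\Bd{\ell'}$, i.e.\ $G\subseteq\Bucket{\FirstN}{\Bd{\ell'}}$; together with $\elementN\in\Span{G}$, $Q\subseteq\Span{G}$ and $B\subseteq G$ this shows $\{\elementN\}\cup Q\cup B\subseteq\Span{\Bucket{\FirstN}{\Bd{\ell'}}}$, whence (again by item~\ref{item:MatroidClContained}) $\elementN'\in\Span{(T\setminus\{\elementN'\})\cup\Bucket{\FirstN}{\Bd{\ell'}}}\subseteq\Span{\PN^{-}\cup\Bucket{\FirstN}{\Bd{\ell'}}}$. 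This contradicts the acceptance of $\elementN'$ at Step~\ref{step:SecondFilter}, which required $\elementN'\notin\Span{\PN^{-}\cup\Bucket{\FirstN}{\Bd{\ell'}}}$. Hence $\elementN\in\Span{Q\cup B}$, as needed.

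The main obstacle I anticipate is the bookkeeping in the last paragraph: one has to pick $\elementN'$ as the last-accepted element of $T$ so that $T\setminus\{\elementN'\}$ is already present in $\PN^{-}$, while at the same time checking that the minimality of $T$ (and its disjointness from $Q\cup B$) supplies precisely the non-spanning hypotheses needed to invoke the exchange property, and that the chain of \CT\ inclusions $\Bd{i}\subseteq\Gd{i}\subseteq\Bd{\ell'}$, combined with the two filters of the \GAPA, collapses $\{\elementN\}\cup Q\cup B$ into $\Span{\Bucket{\FirstN}{\Bd{\ell'}}}$.
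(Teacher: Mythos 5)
Your proof is correct and takes essentially the same route as the paper's: form a minimal spanning subset, take its last-accepted element, invoke the exchange property (Proposition~\ref{prop:Matroid}, Item~\ref{item:MatroidExchange}), and derive a contradiction with the \GAPA's Step~\ref{step:SecondFilter} via the chain of inclusions from Items~\ref{item:BlockGood} and~\ref{item:CTLO} of Definition~\ref{def:tuple}. The only cosmetic difference is that you dispose of all elements in higher blocks (your $R_{\mathrm{above}}\subseteq\Span{\Bucket{\FirstN}{\Bd{i}}}$) before choosing the minimal set, whereas the paper builds its minimal set $C$ inside $\PN\setminus(\PN^*\cup\Span{\Bucket{\FirstN}{\Bd{i}}})$ and handles the higher-block possibility as one of two cases on the last-added element; both treatments are equivalent.
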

\begin{proof}
Let $\elementN\in \Bucket{\LastN}{\Block{i}}\cap\Span{\Bucket{\FirstN}{\Gd{i}}}$.
We note that the "only if" condition trivially holds and hence we only prove the "if" condition.
Let $\PN^* = \PN\cap \Bucket{\LastN}{\Block{i}}$.
Assume that $\elementN\in \Span{\PN\cup \Bucket{\FirstN}{\Bd{i}}}$.
Let $C$ be a minimal subset of $\PN\setminus (\PN^*\cup \Span{\Bucket{\FirstN}{\Bd{i}}})$ such that $\elementN \in \Span{C \cup \Bucket{\FirstN}{\Bd{i}} \cup \PN^*}$.
We shall show that $C= \emptyset$ and hence $\elementN\in \Span{\left(\PN\cap \Bucket{\LastN}{\Block{i}}\right)\cup \Bucket{\FirstN}{\Bd{i}}}$.
Hence the result then follows.

Let $\elementN'$ be the latest element $C$ added to $\PN$ and let $j = \log{\val{\elementN'}}$.
According to construction, the elements of $C$ were selected by the \GAPA\ and hence
$\Block{j} \neq \emptyset$.
Also, by construction, $\Block{i}\neq \Block{j}$, since otherwise 
$\elementN' \in  \PN^* = \PN\cap \Bucket{\LastN}{\Block{i}}$.

Suppose that $\min{\Block{j}}>\max{\Block{i}}$.
By Items~\ref{item:BlockGood} and~\ref{item:CTLO} of Definition~\ref{def:tuple},
this implies that $\Block{j}\subseteq \Gd{j} \subseteq \Bd{i}$.
Since $\elementN'$ was selected by the \GAPA, by Step~\ref{step:FirstFilter}, this implies that
$\elementN' \in \Span{\Bucket{\FirstN}{\Gd{j}}} \subseteq \Span{\Bucket{\FirstN}{\Bd{i}}}$.
This contradicts the choice of $\elementN' \in  C \subseteq \PN\setminus (\PN^*\cup \Span{\Bucket{\FirstN}{\Bd{i}}})$.

Suppose on the other hand that $\max{\Block{j}}<\min{\Block{i}}$.
By Items~\ref{item:BlockGood} and~\ref{item:CTLO} of Definition~\ref{def:tuple},
this implies that $\Block{i}\cup \Bd{i} \subseteq \Gd{i} \subseteq \Bd{j}$ and
hence 
$\elementN\in \Bucket{\LastN}{\Block{i}}\cap\Span{\Bucket{\FirstN}{\Gd{i}}}\subseteq \Span{\Bucket{\FirstN}{\Bd{j}}}$ 
and, using Item~\ref{item:CTLO} of the definition of a \CT,
$\Bucket{\FirstN}{\Bd{i}} \cup \PN^* \subseteq
\Bucket{\FirstN}{\Bd{i}} \cup \Span{\Bucket{\FirstN}{\Gd{i}}} \subseteq
 \Span{\Bucket{\FirstN}{\Bd{j}}}$
 since, by Step~\ref{step:FirstFilter},
 every element in $\PN^*$ is in $\Span{\Bucket{\FirstN}{\Gd{i}}}$.

Since $\elementN'$ was the latest element in $C$ added to $\PN$,
by Item~\ref{item:MatroidExchange} of Proposition~\ref{prop:Matroid}, 
$\elementN' \in \Span{\{\elementN\}\cup\left(C \cup \Bucket{\FirstN}{\Bd{i}} \cup \PN^*\right) \setminus \{\elementN'\}}$.
Since $\elementN\in \Span{\Bucket{\FirstN}{\Bd{j}}}$ and
$\Bucket{\FirstN}{\Bd{i}} \cup \PN^* \subseteq  \Span{\Bucket{\FirstN}{\Bd{j}}}$, 
we see that
$\elementN' \in \Span{\left(C \cup \Bucket{\FirstN}{\Bd{j}}\right) \setminus \{\elementN'\}}$.
Therefore,
$\elementN'$ did not satisfy the condition in Step~\ref{step:SecondFilter}. 
This contradicts the fact that $\elementN'$
was  added to $\PN$.
\end{proof}

\begin{theorem}\label{thm:tuple}
Given a \CT\ $(\BlockN,\GdN,\BdN)$ as input,
Algorithm~\ref{alg:Complicated} returns an independent set of elements $\PN\subseteq \LastN$ such that 
$$\OPT{\PN}
\geq 
\sum_{j\in \BLOCK}2^j\cdot\left(\uncov{\Bucket{\FirstN}{\Bd{j}}\cup \Bucket{\LastN}{\Block{j}\setminus\{j\}}}{\Bucket{\LastN}{j}} 
- 
\indep{\Bucket{\FirstN}{\Gd{j}}}{\Bucket{\LastN}{j}}\right).$$
\end{theorem}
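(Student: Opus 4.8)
The plan is to reduce the claim to a per-bucket rank bound and then to analyse the \GAPA\ one block at a time, comparing it with a run of the \SA. First note that the \GAPA\ only ever processes elements of $\LastN$, so $\PN\subseteq\LastN$, and by Step~\ref{step:SecondFilter} an element enters $\PN$ only when it lies outside $\Span{\PN}$, so $\PN$ is independent; moreover only elements $e$ with $\log\val{e}\in\BLOCK$ are ever selected, so $\Bucket{\PN}{j}=\emptyset$ for $j\notin\BLOCK$. Hence Observation~\ref{obs:OPT} gives $\OPT{\PN}=\sum_{j\in\BLOCK}2^j\cdot\rank{\Bucket{\PN}{j}}$, and it suffices to prove, for every $i\in\BLOCK$, that
$$\rank{\Bucket{\PN}{i}}\ \geq\ \uncov{\Bucket{\FirstN}{\Bd{i}}\cup\Bucket{\LastN}{\Block{i}\setminus\{i\}}}{\Bucket{\LastN}{i}}\ -\ \indep{\Bucket{\FirstN}{\Gd{i}}}{\Bucket{\LastN}{i}}.$$

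Fix $i\in\BLOCK$ and set $B=\Block{i}$ and $G=\Gd{i}$. By Item~\ref{item:CTBlock} of Definition~\ref{def:tuple} we have $i\in B$; by Item~\ref{item:CTSameBlock}, $\Gd{j}=G$ and $\Bd{j}=\Bd{i}$ for every $j\in B$; and by Item~\ref{item:BBGStructure}, $B$ and $\Bd{i}$ are disjoint. For $j\in B$ let $S_j=\Bucket{\LastN}{j}\cap\Span{\Bucket{\FirstN}{G}}$, i.e. the elements of bucket $j$ in $\LastN$ that pass the first filter of Step~\ref{step:FirstFilter}, and put $S=\Bucket{\FirstN}{\Bd{i}}\cup\bigcup_{j\in B}S_j$ and $J=B\cup\Bd{i}$. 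Consider the hypothetical execution of the \SA\ with input $J$ in which the elements of $\Bucket{\FirstN}{\Bd{i}}$ are revealed first, in arbitrary order, and then the elements of $\bigcup_{j\in B}S_j$ are revealed in the order in which the \GAPA\ encountered them. The crucial claim is that this \SA\ execution selects the same elements from $\Bucket{\LastN}{B}$ as the \GAPA. After its first phase the \SA\ holds a basis of $\Bucket{\FirstN}{\Bd{i}}$, whose closure is $\Span{\Bucket{\FirstN}{\Bd{i}}}$; so, assuming inductively that the two executions have agreed so far on $\Bucket{\LastN}{B}$, when an element $e\in S_j$ with $j\in B$ is revealed the \SA\ accepts it iff $e\notin\Span{(\PN'\cap\Bucket{\LastN}{B})\cup\Bucket{\FirstN}{\Bd{i}}}$, where $\PN'$ is the current \GAPA\ set, while the \GAPA\ accepts $e$ iff $e\in\Span{\Bucket{\FirstN}{G}}$ (which holds automatically, as $e\in S_j$) and $e\notin\Span{\PN'\cup\Bucket{\FirstN}{\Bd{i}}}$; by Lemma~\ref{lem:disjoint}, applied with this $i$ and with $\PN'$ in the role of $\PN$, these two acceptance conditions are equivalent. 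Since any element the \GAPA\ selects from $\Bucket{\LastN}{B}$ passes Step~\ref{step:FirstFilter} and hence lies in the corresponding $S_j$, the two executions select the same elements from every bucket $j\in B$, so $\Bucket{\PN}{i}$ coincides with the \SA's selection from bucket $i$.

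With this correspondence the rest is routine matroid manipulation. Lemma~\ref{lem:greedy}, whose guarantee does not depend on the revelation order, gives $\rank{\Bucket{\PN}{i}}\geq\uncov{\Bucket{S}{J\setminus\{i\}}}{\Bucket{S}{i}}$. Since the buckets are pairwise disjoint and $i\notin\Bd{i}$, we have $\Bucket{S}{i}=S_i$ and $\Bucket{S}{J\setminus\{i\}}=\Bucket{\FirstN}{\Bd{i}}\cup\bigcup_{j\in B\setminus\{i\}}S_j$; since each $S_j\subseteq\Bucket{\LastN}{j}$, monotonicity of $\uncov{\cdot}{\cdot}$ in its first argument (Observation~\ref{obs:uncov}) gives $\rank{\Bucket{\PN}{i}}\geq\uncov{R}{S_i}$, where $R=\Bucket{\FirstN}{\Bd{i}}\cup\Bucket{\LastN}{B\setminus\{i\}}$. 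Finally $\Bucket{\LastN}{i}\setminus S_i=\Bucket{\LastN}{i}\setminus\Span{\Bucket{\FirstN}{G}}$ has rank $\indep{\Bucket{\FirstN}{G}}{\Bucket{\LastN}{i}}$, so Item~\ref{item:MatroidUnion} of Proposition~\ref{prop:Matroid} gives $\rank{R\cup\Bucket{\LastN}{i}}\leq\rank{R\cup S_i}+\indep{\Bucket{\FirstN}{G}}{\Bucket{\LastN}{i}}$, and therefore $\uncov{R}{S_i}\geq\uncov{R}{\Bucket{\LastN}{i}}-\indep{\Bucket{\FirstN}{G}}{\Bucket{\LastN}{i}}$; recalling $B=\Block{i}$ and $G=\Gd{i}$, this is exactly the per-bucket bound displayed above. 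Summing $2^i\cdot\rank{\Bucket{\PN}{i}}$ over $i\in\BLOCK$ and using $\OPT{\PN}=\sum_{i\in\BLOCK}2^i\cdot\rank{\Bucket{\PN}{i}}$ then yields the theorem. I expect the block-wise equivalence between the \GAPA\ and the \SA\ --- the inductive argument resting on Lemma~\ref{lem:disjoint} --- to be the only genuinely delicate step; the two rank inequalities are bookkeeping.
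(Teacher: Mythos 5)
Your proposal is correct and follows essentially the same route as the paper: reduce to a per-block rank bound, use Lemma~\ref{lem:disjoint} to identify the \GAPA's behaviour on $\bigcup_{j\in\Block{i}}S_j$ with an \SA\ run on $J=\Block{i}\cup\Bd{i}$ in which $\Bucket{\FirstN}{\Bd{i}}$ is revealed first, apply Lemma~\ref{lem:greedy}, and then use monotonicity of $\uncovN$ together with subadditivity of rank to pass from $S_i$ to $\Bucket{\LastN}{i}$ at the cost of $\indep{\Bucket{\FirstN}{\Gd{i}}}{\Bucket{\LastN}{i}}$. The only difference is that you spell out the inductive equivalence of the two acceptance conditions in more detail than the paper does, which is a correct elaboration rather than a deviation.
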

\begin{proof}
Step~\ref{step:SecondFilter} implies that $\PN$ is always an independent set.
Let $j\in \BLOCK$ and,
for every $i\in \Block{j}$, let 
$S_i = \Bucket{\LastN}{i}\cap \Span{\Bucket{\FirstN}{\Gd{i}}}$.

We note that, by definition, for every $i\in \Block{j}$, every element in $S_i$ satisfies the condition in Step~\ref{step:FirstFilter}.
Consequently, by Lemma~\ref{lem:disjoint}, the \GAPA\ processes the elements in $\bigcup_{i\in \Block{j}} S_i$, as if it was the \SA\ in the following setting:
the input is a set $J = \Block{j}\cup \Bd{j}$ and the elements revealed are those of
$S = \Bucket{\FirstN}{\Bd{j}}\cup \bigcup_{i\in \Block{j}} S_i$, which are revealed in an arbitrary order, except that  the elements of $\Bucket{\FirstN}{\Bd{j}}$ are revealed first.
Thus, by Lemma~\ref{lem:greedy},
$\rank{\Bucket{\PN}{j}}$ is at least $\uncov{\Bucket{\FirstN}{\Bd{j}}\cup \bigcup_{i\in \Block{j}\setminus\{j\}}{S_i}}{S_j}$.

Let $\Bucket{\LastN}{j}$,
$\RN_1 = \Bucket{\FirstN}{\Bd{j}}\cup \bigcup_{i\in \Block{j}\setminus\{j\}}S_i$ and 
$\RN_2 = \Bucket{\FirstN}{\Bd{j}}\cup \Bucket{\LastN}{\Block{j}\setminus\{j\}}$.
Then, 
$\rank{\Bucket{\PN}{j}} \geq \uncov{\RN_1}{\SN_j}.$
Since $\RN_1 \subseteq \RN_2$, by Observation~\ref{obs:uncov}, we have that 
 $\uncov{\RN_1}{\SN_j}\geq \uncov{\RN_2}{\SN_j}$.
By definition, $\uncov{\RN_2}{\SN_j}=\uncov{\RN_2}{\Bucket{\LastN}{j}} - \left(\rank{\RN_2 \cup \Bucket{\LastN}{j}} - \rank{\RN_2 \cup \SN_j}\right)$. 
Finally, 
$\SN_j = \Bucket{\LastN}{j}\cap \Span{\Bucket{\FirstN}{\Gd{i}}}$, 
we see that 
$\rank{\RN_2 \cup \Bucket{\LastN}{j}} - \rank{\RN_2 \cup \SN_j}$ does not exceed
$\rank{\Bucket{\LastN}{j}\setminus \Span{\Bucket{\FirstN}{\Gd{i}}}}=
\indep{\Bucket{\FirstN}{\Gd{i}}}{\Bucket{\LastN}{j}}$.
Therefore,  $\uncov{\RN_2}{\SN_j}\geq\uncov{\RN_2}{\Bucket{\LastN}{j}} - \indep{\Bucket{\FirstN}{\Gd{i}}}{\Bucket{\LastN}{j}}$.
Thus, by Observation~\ref{obs:OPT}, the result follows.
\end{proof}

\section{Prediction}\label{sec:concentrations}
In this section, we prove that
for a specific subset of the integers, which we denote by $\Super$ and later, with constant probability,   for every
 $\KSet,\KSetStar\subseteq \Super$, where $\max{\KSetStar} < \min{\KSet}$ and
 $\min\{\rank{\BucketU{i}}\mid i\in \KSet\} \geq \rank{\BucketU{\min{\KSet}}}^{\unionBoundGapPower}$, and
 for every $j\in \KSet$ we have that
 (i) $\uncov{\Bucket{\FirstN}{\KSetStar}\cup \Bucket{\LastN}{\KSet\setminus\{j\}}}{\Bucket{\LastN}{j}}$
  is approximately
 $\uncov{\Bucket{\FirstN}{\KSetStar\cup\KSet\setminus\{j\}}}{\Bucket{\FirstN}{j}}$; 
 and (ii)
 $\indep{\Bucket{\FirstN}{\KSet}}{\Bucket{\LastN}{j}}$ 
 is bounded above by approximately 
$\uncov{\Bucket{\FirstN}{\KSet\setminus\{j\}}}{\Bucket{\FirstN}{j}}$.
This result enables us at \StageTwo\ of the \MA\ to chose whether to select elements using the \SA\ or the \GAPA, and to compute the input to the chosen algorithm.

In Subsection~\ref{subsec:Talagrand}, we use the Talagrand inequality for the unquantified version of (i),
in Subsection~\ref{subsec:Closure}, we use Martingales and Azuma's inequality fur the unquantified version of (ii) and
in Subsection~\ref{subsec:UnionBound}, we define the set $\Super$ and use the Union Bound together with the results in the previous sections to prove the main result of this section.

\subsection{Upper Bounding $\indepN$}\label{subsec:Closure}
 
\begin{theorem}\label{thm:closure}
Let $\KKSet \subset \ZZ$, be finite and non-empty and $k\in \KKSet$  
   then,
  $$\Prob{\indep{\Bucket{\FirstN}{\KKSet}}{\Bucket{\LastN}{k}}\leq \uncov{\Bucket{\FirstN}{\KKSet\setminus\{k\}}}{\Bucket{\FirstN}{k}} +   \AzzumaDevMultCnst\cdot\rank{\BucketU{k}}^{\rZZDiff}}>
  1-e^{-\AzzumaMultCnst \rank{\BucketU{k}}^{\AzzumaPowerCnst}}.$$  
\end{theorem}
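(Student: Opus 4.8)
The plan is to fix the finite set $\KKSet$ and the index $k\in\KKSet$, and to expose the random partition of the ground set into $\FirstN$ and $\LastN$ one element at a time. By Observation~\ref{obs:binomial}, the set $\FirstN$ can be generated by independently placing each element of $\UU$ into $\FirstN$ with probability $1/2$. Restrict attention to the elements of $\BucketU{\KKSet}$ and order them as $f_1,\dots,f_m$; for $t\in[0,m]$ let $\mathcal{F}_t$ be the $\sigma$-algebra generated by the membership decisions for $f_1,\dots,f_t$. I would then study the random quantity $Z = \indep{\Bucket{\FirstN}{\KKSet}}{\Bucket{\LastN}{k}} = \rank{\Bucket{\LastN}{k}\setminus\Span{\Bucket{\FirstN}{\KKSet}}}$ and define the Doob martingale $Z_t = \mathbb{E}[Z\mid\mathcal{F}_t]$, so that $Z_0 = \mathbb{E}[Z]$ and $Z_m = Z$.

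\textbf{Bounded differences and Azuma.} The key structural claim is that revealing the membership of a single element $f_t$ changes $Z$ by at most $1$ in absolute value: whether $f_t$ lands in $\FirstN$ (possibly enlarging $\Span{\Bucket{\FirstN}{\KKSet}}$ and removing at most one from the rank on the $\LastN$ side, while also possibly adding $f_t$ itself if $f_t\in\BucketU{k}$) or in $\LastN$, the rank of $\Bucket{\LastN}{k}\setminus\Span{\Bucket{\FirstN}{\KKSet}}$ moves by at most $1$, using Item~\ref{item:MatroidPlusOne} of Proposition~\ref{prop:Matroid}. Hence $|Z_t - Z_{t-1}|\le 1$, and only the $m'\le |\BucketU{k}|+\sum_{i\in\KKSet}|\BucketU{i}|$ coordinates that actually touch the relevant buckets matter — but in fact only coordinates in $\BucketU{k}$ and in $\BucketU{\KKSet\setminus\{k\}}$ affect $Z$, and one should check the effective number of ``active'' coordinates is $O(\rank{\BucketU{k}})$ rather than the bucket cardinality. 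This is the point that needs care: buckets can be large while their ranks are small, so a naive Azuma bound with $m$ steps is useless. The remedy is to observe that $Z$ is a function of $\Bucket{\FirstN}{\KKSet}$ only through its closure, and more importantly that the martingale increment is nonzero only when $f_t$ is not already spanned by the previously-revealed $\FirstN$-elements of its own buckets; the number of such ``fresh'' steps is at most $\rank{\BucketU{\KKSet}}\le |\KKSet|\cdot\max_i\rank{\BucketU{i}}$, which under the regime where this theorem is applied is $O(\rank{\BucketU{k}})$. Applying Azuma's inequality with $O(\rank{\BucketU{k}})$ effective bounded-difference coordinates gives $\Prob{Z \ge \mathbb{E}[Z] + \lambda} \le e^{-\Omega(\lambda^2/\rank{\BucketU{k}})}$; choosing $\lambda = \Theta(\rank{\BucketU{k}}^{\rZZDiff})$ yields a deviation term matching $\AzzumaDevMultCnst\cdot\rank{\BucketU{k}}^{\rZZDiff}$ and a failure probability of $e^{-\AzzumaMultCnst\rank{\BucketU{k}}^{\AzzumaPowerCnst}}$, since $(\rank{\BucketU{k}}^{3/4})^2/\rank{\BucketU{k}} = \rank{\BucketU{k}}^{1/2}$.

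\textbf{Bounding the expectation.} It remains to show $\mathbb{E}[Z] \le \uncov{\Bucket{\FirstN}{\KKSet\setminus\{k\}}}{\Bucket{\FirstN}{k}}$, or at least that it does not exceed it by more than lower-order terms absorbed into the deviation. For this I would condition on the membership of all elements outside $\BucketU{k}$ — in particular fixing $\Bucket{\FirstN}{\KKSet\setminus\{k\}}$ — and then argue about the random split of $\BucketU{k}$ into $\Bucket{\FirstN}{k}$ and $\Bucket{\LastN}{k}$, each element independently with probability $1/2$. Write $R = \Span{\Bucket{\FirstN}{\KKSet\setminus\{k\}}}$; then $Z = \rank{\Bucket{\LastN}{k}\setminus \Span{\Bucket{\FirstN}{k}\cup R}}$, while $\uncov{\Bucket{\FirstN}{\KKSet\setminus\{k\}}}{\Bucket{\FirstN}{k}} = \rank{R\cup\Bucket{\FirstN}{k}} - \rank{R}$ equals the rank of $\Bucket{\FirstN}{k}$ modulo $R$. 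The intuition is that a greedy basis of $\BucketU{k}$ modulo $R$ has each of its elements land in $\FirstN$ or $\LastN$ with equal probability, and an element going to $\LastN$ contributes to $Z$ only if no earlier-in-the-greedy-order partner already captured its span-direction on the $\FirstN$ side; a symmetry/coupling argument across the random split shows $\mathbb{E}[Z] \le \mathbb{E}[\uncov{\Bucket{\FirstN}{\KKSet\setminus\{k\}}}{\Bucket{\FirstN}{k}}]$ coordinate-wise in the greedy expansion, and then taking expectations over the rest recovers the stated bound (any slack being at most $O(\rank{\BucketU{k}}^{1/2})$, hence absorbed). \textbf{The main obstacle} I anticipate is precisely this control of the effective number of bounded-difference coordinates together with the expectation bound: one must exploit that $Z$ depends on the random sets only through low-rank closures, so that both the martingale length and the expectation are governed by $\rank{\BucketU{k}}$ (and the ranks of the other buckets in $\KKSet$) rather than by raw cardinalities — exactly the phenomenon that forces the technical ``$\rank{\BucketU{i}}$ large'' hypotheses appearing in Section~\ref{sec:concentrations}.
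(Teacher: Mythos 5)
Your instinct to use a martingale with Azuma, to condition on the split of $\BucketU{\KKSet\setminus\{k\}}$, and to worry about the gap between bucket cardinalities and bucket ranks is all on the right track, but the proposal has two genuine gaps that the paper's proof is specifically designed to avoid.

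First, your Doob martingale exposes the $\FirstN/\LastN$ membership of every element of $\BucketU{\KKSet}$, and you concede that the number of ``active'' increments is bounded only by $\rank{\BucketU{\KKSet}}$, not by $\rank{\BucketU{k}}$. To rescue this you appeal to ``the regime where this theorem is applied,'' but Theorem~\ref{thm:closure} as stated is unconditional in $\KKSet$ and must be proved without any hypothesis that the other buckets' ranks are $O(\rank{\BucketU{k}})$. (And indeed when this theorem is invoked via the union bound in Theorem~\ref{thm:Concentrations}, the sets $\KSet$ include buckets of rank comparable to $\rank{\BucketU{\min\KSet}}^{\manageablePower}$, which can far exceed $\rank{\BucketU{k}}$ for a high-rank $k$.) Second, after conditioning on $S=\Bucket{\FirstN}{\KKSet\setminus\{k\}}$, you aim to show $\mathbb{E}[Z] \le \uncov{S}{\Bucket{\FirstN}{k}}$. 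But the right-hand side is still a random variable of the split of $\BucketU{k}$, so this comparison is ill-posed, and the claimed coupling that makes it hold with ``slack at most $O(\rank{\BucketU{k}}^{1/2})$'' is not supplied.

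The paper's proof sidesteps both problems at once. After fixing $S$, it does not run a Doob martingale over all of $\BucketU{\KKSet}$. Instead it builds two sets $\HSetFirst,\HSetLast$ by a sequential process of exactly $\numberOfIter\cdot\rank{\BucketU{k}}$ steps, at each step looking at an arbitrary not-yet-examined element of $\BucketU{k}$ that lies outside $\Span{S\cup\HSetFirst}$, and routing it to $\HSetFirst$ or $\HSetLast$ according to whether it lands in $\FirstN$ or $\LastN$. Once no such element remains, one has the exact identity $\uncov{S}{\Bucket{\FirstN}{k}} = |\HSetFirst|$ and the inequality $\indep{\Bucket{\FirstN}{\KKSet}}{\Bucket{\LastN}{k}} \le |\HSetLast|$, so there is no need for a separate expectation bound: one directly applies Azuma to the $\pm1$ martingale $|\HSetLast|-|\HSetFirst|$ over these at most $\numberOfIter\cdot\rank{\BucketU{k}}$ fair steps, and then a Chernoff bound shows that with probability $1-e^{-\rank{\BucketU{k}}/2}$ the process does terminate within this many steps. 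This is the concrete mechanism by which the deviation is controlled by $\rank{\BucketU{k}}$ alone, without any assumption on the other buckets, and it replaces both your ``effective coordinates'' heuristic and your expectation/coupling step.
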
 
 
\begin{proof}
We fix $S = \Bucket{\FirstN}{\KKSet\setminus\{k\}}$ and let
$m = \rank{\BucketU{k}}$.
We initially let both $\HSetFirst$ and $\HSetLast$ be empty sets.
Then, we repeat the following $\numberOfIter m$ times:
if there exists an element in $\BucketU{k} \setminus (\HSetFirst \cup \HSetLast)$ that is not in $\Span{S\cup \HSetFirst}$, then we pick such an element arbitrarily, if it is in $\FirstN$, then we add it to $\HSetFirst$ and otherwise we add it to $\HSetLast$.

We observe that every time an element is added to $\HSetFirst$ it is independent of $\Span{S\cup \HSetFirst}$ and hence it increases by one the quantity $\uncov{S}{\HSetFirst} = \rank{S\cup \HSetFirst} - \rank{S}$.
Thus, if after $\numberOfIter m$ repetitions there are no elements in $\BucketU{k}\setminus \Span{S\cup \HSetFirst}$, then
the preceding quantity cannot be increased further by adding elements from 
$\BucketU{k} \setminus (\HSetFirst \cup \HSetLast)$
to
$\HSetFirst$
and therefore
$\uncov{S}{\HSetFirst} = |\HSetFirst|$.
Since, in this case every element in $\Bucket{\FirstN}{k}\setminus \HSetFirst$ is in $\Span{S\cup \HSetFirst}$ and $\HSetFirst\subseteq \Bucket{\FirstN}{k}$, we see that
$\Span{S \cup \Bucket{\FirstN}{k}} = \Span{S\cup \HSetFirst}$.
Therefore,
$\rank{S \cup \Bucket{\FirstN}{k}} = \rank{S\cup \HSetFirst}.$
Hence, by the definition of \uncovN, $\uncov{S}{\Bucket{\FirstN}{k}} = \uncov{S}{\HSetFirst} = |\HSetFirst|$.

We also observe that every time an element is added to $\HSetLast$ it  may increase by one the quantity $\indep{S\cup \HSetFirst}{\HSetLast} = \rank{\HSetLast\setminus \Span{S\cup \HSetFirst}}$.
We note that if after $\numberOfIter m$ repetitions there are no elements in $\BucketU{k}\setminus \Span{S\cup \HSetFirst}$, then
the preceding quantity cannot be increased further
by adding elements from 
$\BucketU{k} \setminus (\HSetFirst \cup \HSetLast)$
to
$\HSetLast$
 and therefore
$\indep{S\cup \HSetFirst}{\HSetLast} \leq |\HSetLast|$.
Since, in this case every element in 
$\Bucket{\LastN}{k}\setminus \HSetLast$ is in $\Span{S\cup \HSetFirst}$ and $\HSetLast \subseteq \Bucket{\LastN}{k}$, we see that  
$\indep{S\cup \HSetFirst}{\HSetLast} = \indep{S\cup \HSetFirst}{\Bucket{\LastN}{k}}$.
We note that $S\cup \Bucket{\FirstN}{k} = \Bucket{\FirstN}{\KKSet}$ and we already proved $\Span{S \cup \Bucket{\FirstN}{k}} = \Span{S\cup \HSetFirst}$.
Thus,
$\indep{\Bucket{\FirstN}{\KKSet}}{\Bucket{\LastN}{k}} = \indep{S\cup \HSetFirst}{\HSetLast} \leq |\HSetLast|$.

Next we show that,
$||\HSetFirst| - |\HSetLast|| \leq \AzzumaDevMultCnst\cdot m^{\rZZDiff}$,
 with probability at least $1-e^{-\AzzumaMultCnstTT m^{\AzzumaPowerCnst}}$, and afterwards we show that,
  with probability at least $1- e^{-\frac{m}{2}}$, after $\numberOfIter m$ repetitions, there are no elements in $\BucketU{k}\setminus \Span{S\cup \HSetFirst}$.
By the union bound, this implies the theorem.

We define the variables $\RVZN_i$ so that
 $\RVZN_0 = 0$ and
(i) $\RVZN_i = \RVZN_{i-1}-1$ if in the i'th repetition an element was added to $\HSetFirst$;
(ii) $\RVZN_i = \RVZN_{i-1}+1$ if in the i'th repetition an element was added to $\HSetLast$; and 
(iii) $\RVZN_i = \RVZN_{i-1}$ if nothing happened in the i'th repetition.

We note that, for every $i>0$,  either $\RVZN_i = \RVZN_{i-1}$ or $\RVZN_i$ is distributed uniformly over $\{\RVZN_{i-1}-1,\RVZN_{i-1}+1\}$ and hence $E(\RVZN_i\mid \RVZN_{i-1}) = \RVZN_{i-1}$, where $E()$ denotes the expected value.
Consequently, we have a martingale.
Thus, by Azuma's inequality, $\RVZN_{\numberOfIter m} > \AzzumaDevMultCnst\cdot m^{\rZZDiff}$ 
with probability less than
$e^{-\AzzumaMultCnstTT \cdot m^{\AzzumaPowerCnst}}$.
Since, $\RVZN_{\numberOfIter m} = |\HSetLast| - |\HSetFirst|$ we have proved the first inequality. We now proceed to the second.

We define the variables $\RVXN_i$ so that $\RVXN_i = 1$ if in the $i$th repetition the element processed was in $\FirstN$ and otherwise $\RVXN_i = 0$. 
By definition, for every $i\in [\numberOfIter m]$, if 
$\RVZN_i = \RVZN_{i-1} - 1$, then $\RVXN_i=1$.
If $|\HSetFirst|=m$ after $\numberOfIter m$ repetitions, then
$\rank{\SN\cup \HSetFirst} = \rank{\SN} + \rank{\BucketU{k}}$, which can only happen if 
$\BucketU{k} \subseteq \Span{\SN\cup \HSetFirst}$. 
This implies that
$\BucketU{k} \setminus \Span{\SN\cup \HSetFirst}$ is empty.
So $\BucketU{k} \setminus \Span{\SN\cup \HSetFirst}$ is not empty after 
$\numberOfIter m$ only if $\sum_{i=1}^{\numberOfIter m}\RVXN_i<m$.
By Observation~\ref{obs:binomial}, for every $i\in [\numberOfIter m]$,
$\RVXN_i$ is independently distributed uniformly over $\{0,1\}$.
By the Chernoff inequality,
with probability at least $1 - e^{-\frac{m}{2}}$,
 $\sum_{i=1}^{\numberOfIter m} \RVXN_i \geq m$.
\end{proof}

\subsection{Talagrand based concentrations}\label{subsec:Talagrand}
This subsection is very similar to one that appears in~\cite{improved}, we include it for the sake of completeness.
 The following definition is an adaptation of the Lipschitz condition to our setting.
 
\begin{definition}{\textbf{[Lipschitz]}}~\label{def:Lipschitz}
Let  $f:\UU \longrightarrow \NN$.
If $|f(\SN_1) - f(\SN_2)|\leq 1$ for every $\SN_1,\SN_2\subseteq \UU$ such that  $|(\SN_1\setminus \SN_2) \cup (\SN_2\setminus \SN_1)| =1$,
 then $f$ is \textbf{Lipschitz}. 
\end{definition}

\begin{definition}{\textbf{[Definition 3, Section 7.7 of \cite{alon}]}}
Let $f:\NN\longrightarrow \NN$. $h$ is {\it $f$-certifiable} if whenever $h(x) \geq s$ there exists $I \subseteq \{1,\dots,n\}$ with $|I|\leq f(s)$ so that all $y\in \Omega$ that agree with $x$ on the coordinates $I$ have $h(y)\geq s$.
\end{definition}
 
\begin{obs}~\label{obs:certifiable}
For every finite $\USet\subset \ZZ$,
the $\rankN$ function over subsets of $\BucketU{\USet}$ is Lipschitz and $f$-certifiable with $f(s) = \rank{\BucketU{\USet}}$, for all $s$.
\end{obs}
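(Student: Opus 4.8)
The plan is to verify the two properties separately; both follow immediately from the matroid axioms recalled in Proposition~\ref{prop:Matroid}, so this is really a bookkeeping argument about the definitions of \emph{Lipschitz} and \emph{$f$-certifiable}.

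First I would treat the Lipschitz property. Take $\SN_1,\SN_2\subseteq \BucketU{\USet}$ with $|(\SN_1\setminus \SN_2)\cup(\SN_2\setminus \SN_1)|=1$; without loss of generality $\SN_2 = \SN_1\cup\{\elementN\}$ with $\elementN\notin \SN_1$. By Item~\ref{item:MatroidPlusOne} of Proposition~\ref{prop:Matroid}, if $\elementN\notin\Span{\SN_1}$ then $\rank{\SN_2}=\rank{\SN_1}+1$, and otherwise $\rank{\SN_2}=\rank{\SN_1}$ directly from the definition of closure. In either case $|\rank{\SN_1}-\rank{\SN_2}|\leq 1$, which is what Definition~\ref{def:Lipschitz} requires.

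Next I would treat $f$-certifiability. Identify a subset $\SN\subseteq\BucketU{\USet}$ with its indicator vector $x$ over the coordinate set $\BucketU{\USet}$, and set $h(x)=\rank{\SN}$. Suppose $h(x)\geq s$. Then there is an independent set $\SN'\subseteq\SN$ with $|\SN'|=s$: take any independent subset of $\SN$ of size $\rank{\SN}\geq s$ and, using the downward-closure axiom of Definition~\ref{def:matroid}, discard elements until $s$ remain. Let $I$ be the set of coordinates corresponding to $\SN'$; since $\SN'$ is independent and $\SN'\subseteq\SN\subseteq\BucketU{\USet}$ we have $|I|=s\leq\rank{\SN}\leq\rank{\BucketU{\USet}}=f(s)$. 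Finally, any $y$ agreeing with $x$ on $I$ has $y_\elementN=1$ for every $\elementN\in\SN'$, so the subset $\SN_y$ it encodes contains $\SN'$, and hence $h(y)=\rank{\SN_y}\geq\rank{\SN'}=s$ by Item~\ref{item:MatroidClContained} of Proposition~\ref{prop:Matroid}. This is precisely the certifiability condition with the constant function $f(s)=\rank{\BucketU{\USet}}$.

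There is no genuine obstacle here; the only point requiring a moment's care is checking that the natural certificate (the coordinates of a maximum-size independent subset witnessing $h(x)\geq s$) has size at most $f(s)$. This holds because $h(x)\geq s$ forces $s\leq\rank{\SN}\leq\rank{\BucketU{\USet}}$, so the crude constant bound $f(s)=\rank{\BucketU{\USet}}$ always suffices.
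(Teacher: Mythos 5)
Your proof is correct and follows essentially the same route as the paper: the Lipschitz property is immediate from Item~\ref{item:MatroidPlusOne} of Proposition~\ref{prop:Matroid}, and $f$-certifiability uses monotonicity (Item~\ref{item:MatroidClContained}) to bound the certificate size by $\rank{\BucketU{\USet}}$. The paper's proof is considerably terser — it merely cites the definition for the Lipschitz part and asserts certifiability from the rank bound — whereas you explicitly construct the certificate (an $s$-element independent subset) and verify the agreement condition, which is the detail the paper leaves implicit.
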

\begin{proof}
The $\rankN$ function is Lipschitz, by the definition of the $\rankN$ function (Definition~\ref{def:RankSpan}).
By Item~\ref{item:MatroidClContained} of Proposition~\ref{prop:Matroid},
for every $ \SN \subseteq \RN \subseteq \BucketU{\USet}$, we have that
$\rank{\SN} \leq \rank{\RN} \leq \rank{\BucketU{\USet}}$.
Thus, the $\rankN$ function over subsets of $\BucketU{\USet}$ is $f$-certifiable with $f(s) = \rank{\BucketU{\USet}}$.
\end{proof}
 
The succeeding theorem is a direct  result of Theorem 7.7.1 from \cite{alon}.
\begin{theorem}\label{thm:Talagrand}
If $h$ is Lipschitz and $f$ certifiable, then for $x$ selected uniformly from $\Omega$ and all $b,t$,
$Pr[h(x)\leq b - t\sqrt{f(b)}]\cdot Pr[h(x) \geq b] \leq e^{-t^2/4}.$
\end{theorem}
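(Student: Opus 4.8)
The plan is to derive Theorem~\ref{thm:Talagrand} from the convex-distance form of Talagrand's inequality, the core result of Section~7.7 of \cite{alon}: for any product probability space $\Omega$ on $n$ coordinates, any $A\subseteq\Omega$, and any $t\geq 0$, $\Pr[A]\cdot\Pr[\{x\in\Omega:\rho(A,x)\geq t\}]\leq e^{-t^2/4}$, where $\rho(A,x)=\sup_{\alpha}\inf_{y\in A}\sum_{i:\,x_i\neq y_i}\alpha_i$ and $\alpha$ ranges over the unit vectors in $\mathbb{R}^n$ with nonnegative coordinates. (In our applications $\Omega$ is the product of the $n$ two-point uniform spaces underlying the independent trials of Observation~\ref{obs:binomial}.) The reduction is the standard device of separating the ``low'' level set of $h$ from its ``high'' level set in the convex distance, so the only content is checking that the Lipschitz and $f$-certifiability hypotheses achieve this separation.

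Fix $b$ and $t$. The claimed inequality is trivial unless $t>0$ and $\Pr[h\leq b-t\sqrt{f(b)}]>0$, which I assume from now on; I also assume $f(b)\geq 1$, which is automatic in our applications, where $f$ is a matroid rank by Observation~\ref{obs:certifiable}, and under which the bound degenerates when $f(b)=0$. Put $A=\{y\in\Omega:h(y)\leq b-t\sqrt{f(b)}\}$. The key step I would carry out is $\{x:h(x)\geq b\}\subseteq\{x:\rho(A,x)\geq t\}$. Fix $x$ with $h(x)\geq b$. By $f$-certifiability there is a coordinate set $I$ with $|I|\leq f(b)$ such that every point agreeing with $x$ throughout $I$ has $h$-value at least $b$; since $\Pr[A]>0$ there is a point of $h$-value below $b$, so $I\neq\emptyset$. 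Let $\alpha$ be $|I|^{-1/2}$ on $I$ and $0$ off $I$, a unit vector. For $y\in A$ set $d=|\{i\in I:x_i\neq y_i\}|$; changing those $d$ coordinates of $y$ to agree with $x$ gives a point $y'$ with $h(y')\geq b$, so the Lipschitz hypothesis (Definition~\ref{def:Lipschitz}) yields $b\leq h(y')\leq h(y)+d\leq b-t\sqrt{f(b)}+d$, hence $d\geq t\sqrt{f(b)}$. Therefore $\sum_{i:\,x_i\neq y_i}\alpha_i\geq d/\sqrt{|I|}\geq t\sqrt{f(b)}/\sqrt{f(b)}=t$; taking the infimum over $y\in A$ and then the supremum over $\alpha$ gives $\rho(A,x)\geq t$.

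Given the inclusion, $\Pr[h\geq b]\leq\Pr[\rho(A,\cdot)\geq t]$; multiplying by $\Pr[A]$ and applying the convex-distance inequality to $A$ finishes it: $\Pr[h\leq b-t\sqrt{f(b)}]\cdot\Pr[h\geq b]\leq\Pr[A]\cdot\Pr[\rho(A,\cdot)\geq t]\leq e^{-t^2/4}$. I do not expect a genuine obstacle here: the only imported ingredient is Talagrand's convex-distance inequality, and the rest is the routine Lipschitz-plus-certifiable packaging. Indeed this packaging is essentially Theorem~7.7.1 of \cite{alon}, so an even shorter route is simply to observe that Definition~\ref{def:Lipschitz} and the $f$-certifiability notion quoted above are exactly its hypotheses and read off the conclusion, with $\Omega$ taken to be the product space of the underlying independent trials.
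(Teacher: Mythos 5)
Your proof is correct and matches the paper's treatment: the paper states this result with no proof beyond citing Theorem 7.7.1 of Alon--Spencer, and your argument is exactly the standard derivation of that theorem from Talagrand's convex-distance inequality (a point you yourself note at the end). Nothing further is needed.
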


\begin{lemma}~\label{lem:Talagrand}
Let $t\geq 2$, $j\in \ZZ$,
$\KSet,\KSetStar\subseteq \ZZ$, where $\min{\KSetStar} < \max{\KSet}$ and
 $k\in \KSet$ then, 
 $
 \Prob{
   \Big|\rank{\Bucket{\FirstN}{j}} 
   -
  \rank{\Bucket{\LastN}{j}}\Big| \geq 
  2 t\sqrt{\rank{\BucketU{j}}}}
  \leq 
  e^{1.4-\frac{t^2}{4}}$ 
  and 
  
 $$\Prob{
  \Big|\uncov{\Bucket{\FirstN}{\KSetStar}\cup \Bucket{\LastN}{\KSet\setminus\{k\}}}{\Bucket{\LastN}{k}} 
  -
 \uncov{\Bucket{\FirstN}{\KSetStar\cup\KSet\setminus\{k\}}}{\Bucket{\FirstN}{k}}\Big| \geq 
 \TalagrandMultCnst t\sqrt{\rank{\BucketU{}}}}
 \leq 
 e^{\TalagrandExpAdditiveCnst-\frac{t^2}{4}}.$$
\end{lemma}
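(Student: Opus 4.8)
The plan is to derive both concentration bounds from the Talagrand inequality (Theorem~\ref{thm:Talagrand}) applied to suitably chosen Lipschitz, $f$-certifiable functions, combined with Observation~\ref{obs:binomial}, which tells us that $\FirstN$ is obtained by including each element of $\UU$ independently with probability $1/2$. The first inequality is the easier one: I would apply Theorem~\ref{thm:Talagrand} to the function $h(\SN) = \rank{\SN}$ on subsets of $\BucketU{j}$, which is Lipschitz and $f$-certifiable with $f(s) = \rank{\BucketU{j}}$ by Observation~\ref{obs:certifiable}. Taking $b = \med{h}$ (over the random set $\Bucket{\FirstN}{j}$) gives a one-sided deviation bound, and applying it to both $\Bucket{\FirstN}{j}$ and its complement $\Bucket{\LastN}{j}$ (which has the same distribution by symmetry of Observation~\ref{obs:binomial}), then using $|\rank{\Bucket{\FirstN}{j}} - \rank{\Bucket{\LastN}{j}}| \le |\rank{\Bucket{\FirstN}{j}} - b| + |b - \rank{\Bucket{\LastN}{j}}|$, yields the stated bound after a union bound and absorbing the small loss from switching median to mean into the constant $e^{1.4}$.

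The second inequality requires choosing the right function. The plan is to fix the ground set to be $\BucketU{\KSetStar \cup \KSet}$ and, for a subset $\SN$ of it, define $h(\SN)$ to be an ``$\uncovN$-type'' quantity of the form $\rank{(\SN \cap \BucketU{\KSetStar}) \cup (\text{a fixed deterministic part}) \cup \dots} - (\dots)$ engineered so that on the random configuration $\FirstN$ its value is exactly $\uncov{\Bucket{\FirstN}{\KSetStar}\cup \Bucket{\LastN}{\KSet\setminus\{k\}}}{\Bucket{\LastN}{k}}$, and on the complementary configuration it is $\uncov{\Bucket{\FirstN}{\KSetStar\cup\KSet\setminus\{k\}}}{\Bucket{\FirstN}{k}}$ — this is exactly the trick used in~\cite{improved}, so I would follow that construction. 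The key points to verify are: (a) $h$ is Lipschitz, which follows because $\uncovN = \rank{\RN \cup \SN} - \rank{\RN}$ is a difference of rank functions each changing by at most $1$ under a single-element swap, and a careful bookkeeping shows the combined quantity is still $1$-Lipschitz; and (b) $h$ is $f$-certifiable with $f(s) \le \rank{\BucketU{\KSetStar \cup \KSet}}$ — actually one wants $f(s)$ bounded by a quantity comparable to $\rank{\BucketU{}}$ so that $t\sqrt{f(b)}$ in Theorem~\ref{thm:Talagrand} matches the claimed $\TalagrandMultCnst t\sqrt{\rank{\BucketU{}}}$ (here $\rank{\BucketU{}}$ presumably abbreviates $\rank{\BucketU{k}}$ or the relevant total). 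Then I apply Theorem~\ref{thm:Talagrand} with $b = \med{h}$ on both the $\FirstN$-configuration and its complement, and again combine via the triangle inequality and union bound, with the constant $\TalagrandExpAdditiveCnst$ absorbing the median-to-value slack and the factor-of-two losses.

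The main obstacle will be step (a)/(b) for the second inequality: showing that the two $\uncovN$-expressions are two evaluations of a single function $h$ that is simultaneously Lipschitz and $f$-certifiable with the right certificate size. The subtlety is that $\uncov{\RN}{\SN}$ depends on the random set in two places (inside $\RN$ through $\Bucket{\FirstN}{\KSetStar}$ and $\Bucket{\FirstN}{\KSet\setminus\{k\}}$, and inside $\SN$ through $\Bucket{\LastN}{k}$ vs $\Bucket{\FirstN}{k}$), so the Lipschitz constant is not obviously $1$ — flipping one element's membership can change both $\rank{\RN \cup \SN}$ and $\rank{\RN}$. I expect that with the submodularity-type inequalities in Proposition~\ref{prop:Matroid} (in particular Items~\ref{item:MatroidUnion} and~\ref{item:MatroidUnionSpan}) one shows the net change is still at most $1$, exactly as in~\cite{improved}; for $f$-certifiability the certificate is the support of an independent set witnessing the rank values, whose size is bounded by $2\rank{\BucketU{}}$ or so, which is why the constant $\TalagrandMultCnst$ rather than a bare $2$ appears. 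Everything else — the symmetry of $\FirstN$ and its complement, the median/mean interchange, the union bound over the two tail events — is routine.
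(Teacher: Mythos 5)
Your handling of the first inequality is fine and matches the paper. For the second inequality, however, your plan diverges from the paper and runs into two genuine difficulties that the paper's route is specifically designed to avoid.

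First, the ``one function, two complementary configurations'' trick does not work as stated with ground set $\BucketU{\KSetStar\cup\KSet}$. Both $\uncovN$-quantities use $\Bucket{\FirstN}{\KSetStar}$ for the $\KSetStar$ part --- only the $\KSet$ part is swapped between $\FirstN$ and $\LastN$. If you take the complement of $\FirstN$ inside the entire set $\BucketU{\KSetStar\cup\KSet}$, you would obtain $\uncov{\Bucket{\LastN}{\KSetStar}\cup\Bucket{\FirstN}{\KSet\setminus\{k\}}}{\Bucket{\FirstN}{k}}$, which is not the second quantity. To make the complementation correct you would need to condition on $\Bucket{\FirstN}{\KSetStar}$ and run Talagrand only over the coordinates in $\BucketU{\KSet}$; your proposal does not say this and the discrepancy is not merely cosmetic.

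Second, and more seriously, the step you flag as ``the main obstacle'' --- $f$-certifiability of the $\uncovN$-type function $h$ --- is in fact where the plan breaks. The certifiability definition asks you to exhibit a small set $I$ of coordinates that forces $h(y)\geq s$. Since $h$ is $\rank{R\cup S}-\rank{R}$, certifying $h\geq s$ requires simultaneously lower-bounding $\rank{R\cup S}$ (which a small independent set can do) and upper-bounding $\rank{R}$, and the latter is a closure-type statement that has no small certificate: an adversarial $y$ agreeing with $x$ on few coordinates can always insert new elements into $R$ and drive $\rank{R}$ up, dropping $h$. Submodularity gives you the Lipschitz constant $1$ for $h$, but it gives you nothing toward $f$-certifiability. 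The paper sidesteps this entirely: it never applies Talagrand to an $\uncovN$-function. It writes each $\uncovN$ as a difference of two \emph{rank} functions (four rank functions in total, over the list $\Bucket{\FirstN}{\KSetStar\cup\KSet}$, $\Bucket{\FirstN}{\KSetStar\cup\KSet\setminus\{k\}}$, $\Bucket{\FirstN}{\KSetStar}\cup\Bucket{\LastN}{\KSet}$, $\Bucket{\FirstN}{\KSetStar}\cup\Bucket{\LastN}{\KSet\setminus\{k\}}$), applies Talagrand to each rank function --- which is trivially Lipschitz and $\rankN$-certifiable by Observation~\ref{obs:certifiable} --- and then uses the symmetry of $\FirstN$ and $\LastN$ (with $\KSetStar\cap\KSet=\emptyset$) to show the medians pair up, so the four deviations combine by triangle inequality and union bound into the claimed $\TalagrandMultCnst t\sqrt{\cdot}$ bound. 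That is why the multiplicative constant is $4$: it counts the four rank terms, not certificate bookkeeping as your proposal suggests. To make your proposal correct you would essentially have to reproduce this decomposition rather than treat $\uncovN$ as the Talagrand observable.
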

\begin{proof}
Let $S\in\{\Bucket{\FirstN}{\KSetStar\cup\KSet},
\Bucket{\FirstN}{\KSetStar\cup\KSet\setminus\{k\}},
\Bucket{\FirstN}{\KSetStar}\cup\Bucket{\LastN}{\KSet},
\Bucket{\FirstN}{\KSetStar}\cup \Bucket{\LastN}{\KSet\setminus\{k\}}\}$.
By Observation~\ref{obs:certifiable}, the $\rankN$ function is Lipschitz and $\rankN$-certifiable.

Clearly,
since $\FirstN$ and $\LastN$ are both distributed uniformly, with probability at least $\frac{1}{2}$, we have that 
 $\rank{S}\geq \med{\rank{S}}$.
Hence, taking 
$b  = \med{\rank{S}} + t\sqrt{\rank{\BucketU{\KSetStar\cup\KSet}}}$, 
 by Theorem~\ref{thm:Talagrand}, we get that
$\rank{S} - \med{\rank{S}} \geq t\sqrt{\rank{\BucketU{\KSetStar\cup\KSet}}}$,
with probability at most $2e^{\frac{-t^2}{4}}$.
In a similar manner, by taking $b  = \med{\rank{S}}$,
we get that
$\med{\rank{S}}
-
\rank{S} \geq t\sqrt{\rank{\BucketU{\KSetStar\cup\KSet}}}$,
with probability at most $2e^{\frac{-t^2}{4}}$.
Thus, by the union bound,
$|\rank{S}-\med{\rank{S}}| \geq t\sqrt{\rank{\BucketU{\KSetStar\cup\KSet}}}$,
with probability at most $4e^{-\frac{t^2}{4}}$. 

We note that, since $\FirstN$ and $\LastN$ are identically distributed and $K\cap K^* = \emptyset$, we have that
$\med{\rank{\Bucket{\FirstN}{\KSetStar\cup\KSet}}} = \med{\rank{\Bucket{\FirstN}{\KSetStar}\cup\Bucket{\LastN}{\KSet}}}$
and 
$\med{\rank{\Bucket{\FirstN}{\KSetStar\cup\KSet\setminus\{k\}}}} = \med{\rank{\Bucket{\FirstN}{\KSetStar}\cup\Bucket{\LastN}{\KSet\setminus\{k\}}}}$
.
Consequently, the second part of the result follows, by the union bound and the definition of \uncovN\ (Definition~\ref{def:uncov}).
The first part follows in a similar manner the preceding analysis.
\end{proof}

\subsection{Union bound}\label{subsec:UnionBound}
\begin{definition}\textbf{[\Super]}\label{def:SuperBuckets}
 We define $\Super = \left\{i \Big|\rank{\BucketU{i}} > 
 \left(2^{-i\PSuperCnst}\cdot \LOPT{\UU}\right)^{\SuperPwr}\right\}.$
\end{definition}

When the following theorem is used later, the notations
$\KSetStar$ and $\KSet$ are replaced once with 
$\Bd{i}$ and $\Block{i}$, respectively,
 another time with the empty set and the input to the \SA, respectively.

\begin{theorem}\label{thm:Concentrations}
If $\rank{\UU} > 2^{\maxWeightPower}$
then,
with probability at least $\TradeoffProb$, 
the following event holds:
for every $i\in \Super$,
$\KSet,\KSetStar\subseteq \{j\in \Super \mid j \leq \log\LOPT{\UU} - \SuperSplitMultCnst\cdot \log\log{\rank{\UU}}\}$,
where  $\min{\KSetStar} > \max{\KSet}$ or $K' =\emptyset$, 
$\min\{\rank{\BucketU{j}}\mid j\in \KSet\} \geq \left(\concentrationRangeMultCnst\cdot\rank{\BucketU{\min{\KSet}}}\right)^{\manageablePower}$, and
every $k\in \KSet$,
the following hold:
\begin{enumerate}
\item\label{item:rankFirst}
$4\cdot \rank{\FirstN} > \rank{\UU},$
\item \label{item:SingleBucket}
$\Big|\rank{\Bucket{\FirstN}{i}} - \rank{\Bucket{\LastN}{i}}\Big| < \TalagrandMultCnst\cdot\rank{\BucketU{i}}^{\singleUnionPowerPHalf},
$
\item \label{item:ConcentrationUncov}
$
 \Big|\uncov{\Bucket{\FirstN}{\KSetStar}\cup \Bucket{\LastN}{\KSet\setminus\{k\}}}{\Bucket{\LastN}{k}} 
-
\uncov{\Bucket{\FirstN}{\KSetStar\cup\KSet\setminus\{k\}}}{\Bucket{\FirstN}{k}}\Big| 
<
\TalagrandMultCnstTT\cdot\rank{\BucketU{\KSetStar\cup\KSet}}^{\simpleGapRankExp},
$
\item\label{item:ConcentrationIndep} 
$\indep{\Bucket{\FirstN}{\KSet}}{\Bucket{\LastN}{k}}
 \leq
    \uncov{\Bucket{\FirstN}{\KSet\setminus\{k\}}}{\Bucket{\FirstN}{k}} + \AzzumaDevMultCnstTT\cdot \rank{\BucketU{k}}^{\rZZDiff}.$
\end{enumerate}
\end{theorem}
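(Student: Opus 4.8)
The plan is to establish items~1--4 individually as high-probability events and then combine them by a union bound, choosing the deviation parameters of the concentration inequalities so that the total failure probability stays below $1-\TradeoffProb$. For item~1: by Item~\ref{item:MatroidUnion} of Proposition~\ref{prop:Matroid}, $\rank{\FirstN}+\rank{\LastN}\ge\rank{\FirstN\cup\LastN}=\rank{\UU}$, and since Observation~\ref{obs:binomial} makes $\FirstN$ and $\LastN$ identically distributed, $\Prob{\rank{\FirstN}\ge\lceil\rank{\UU}/2\rceil}\ge\frac12$. By Observation~\ref{obs:certifiable} the $\rankN$ function over subsets of $\UU$ is Lipschitz and $\rank{\UU}$-certifiable, so Theorem~\ref{thm:Talagrand} with $b=\lceil\rank{\UU}/2\rceil$ and $t=\frac14\sqrt{\rank{\UU}}$ gives $4\rank{\FirstN}>\rank{\UU}$ except with probability at most $2e^{-\rank{\UU}/64}$, which is negligible since $\rank{\UU}>2^{\maxWeightPower}$.

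For items~2--4 applied to a single index or tuple I would invoke the concentration results already proved. For item~2, fix $i$ and use the first inequality of Lemma~\ref{lem:Talagrand} with $t=2\rank{\BucketU{i}}^{1/6}$, so its bound $2t\sqrt{\rank{\BucketU{i}}}$ equals $\TalagrandMultCnst\,\rank{\BucketU{i}}^{\singleUnionPowerPHalf}$ and the failure probability is $e^{1.4-\rank{\BucketU{i}}^{1/3}}$. For item~3, fix $\KSet,\KSetStar$ (disjoint, since $\min{\KSetStar}>\max{\KSet}$ or $\KSetStar=\emptyset$) and use the second inequality of Lemma~\ref{lem:Talagrand} with $t=2\rank{\BucketU{\KSetStar\cup\KSet}}^{1/4}$, so its bound $\TalagrandMultCnst\,t\sqrt{\rank{\BucketU{\KSetStar\cup\KSet}}}$ equals $\TalagrandMultCnstTT\,\rank{\BucketU{\KSetStar\cup\KSet}}^{\simpleGapRankExp}$ and the failure probability is $e^{2.1-\rank{\BucketU{\KSetStar\cup\KSet}}^{1/2}}$. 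Item~4 is Theorem~\ref{thm:closure} applied to $\KSet$ and the given $k$ (its additive constant $\AzzumaDevMultCnst$ is at most $\AzzumaDevMultCnstTT$), with failure probability $e^{-\rank{\BucketU{k}}^{1/2}}$. One simplification is uniform across items~2--4: the left side is always at most the full bucket rank $r$, while the right side always carries a term $c\,r^{\alpha}$ with $\alpha<1$; hence whenever $r\le c^{1/(1-\alpha)}$ the inequality holds deterministically, and only buckets of rank above an absolute constant ever need the probabilistic estimate.

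The main work is the union bound. The enabling fact is that $\Super$ is small where it matters: membership forces $\rank{\BucketU{i}}$ to be at least a fixed power of $2^{-i}\LOPT{\UU}$ (up to the constant in $\PSuperCnst$ and the exponent $\SuperPwr$), which together with $\rank{\BucketU{i}}\le\rank{\UU}$ confines the indices $i\in\Super$ lying below $\log\LOPT{\UU}$ to an interval of length $O(\log\rank{\UU})$; consequently there are only $\rank{\UU}^{O(1)}$ admissible pairs $(\KSet,\KSetStar)$. Moreover the cut-off $\log\LOPT{\UU}-\SuperSplitMultCnst\log\log\rank{\UU}$ makes every bucket used inside a $\KSet$ large: $j\in\Super$ below the cut-off forces $\rank{\BucketU{j}}$ to be a power of $\log\rank{\UU}$ with a huge exponent, and the rank-ratio hypothesis $\min\{\rank{\BucketU{j}}\mid j\in\KSet\}\ge(\concentrationRangeMultCnst\rank{\BucketU{\min{\KSet}}})^{\manageablePower}$ propagates this to every $j\in\KSet$. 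So every failure probability attached to items~3 and~4, and to item~2 at indices below the cut-off, is $e^{-(\log\rank{\UU})^{\Omega(1)}}$, which beats the $\rank{\UU}^{O(1)}$ union bound once $\rank{\UU}>2^{\maxWeightPower}$ is large. The only leftover is item~2 at the $O(1)$ many indices $i\in\Super$ above the cut-off: those with $\rank{\BucketU{i}}$ below the deterministic threshold cost nothing, and for the rest the $\Super$ condition together with $2^{i}\rank{\BucketU{i}}\le\LOPT{\UU}$ (Observations~\ref{obs:OPT} and~\ref{obs:LOPT}) forces the lower bound $(2^{-i\PSuperCnst}\LOPT{\UU})^{\SuperPwr}$ on $\rank{\BucketU{i}}$ to grow geometrically as $i$ decreases, so $\sum_{i}e^{1.4-\rank{\BucketU{i}}^{1/3}}$ over these indices is a rapidly converging series of small absolute sum. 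Adding all the pieces keeps the grand total below $1-\TradeoffProb$, giving probability at least $\TradeoffProb$.

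I expect this last balancing act to be the real obstacle: item~2 over the finitely many top buckets of $\Super$, where ranks are only moderate so the per-bucket failure probability is not small and one must rely on the geometric growth of the $\Super$ thresholds and on the exact numerical constants appearing in the definitions of $\Super$, the cut-off, and the rank-ratio condition. Everything before that step is routine bookkeeping around the three concentration inequalities already established, together with one application of Talagrand's inequality for item~1.
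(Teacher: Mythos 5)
Your proposal is correct and, for Items~\ref{item:SingleBucket}--\ref{item:ConcentrationIndep} and the union bound, it is essentially the paper's own proof: the same applications of Lemma~\ref{lem:Talagrand} (with $t=2\rank{\BucketU{\cdot}}^{\simpleUnionPower}$ and $t=2\rank{\BucketU{\cdot}}^{\singleUnionPower}$) and of Theorem~\ref{thm:closure}, combined with the same counting argument that the definition of $\Super$ confines the relevant indices to an interval of length $O(\log{\rank{\UU}})$ while the cut-off $\log{\LOPT{\UU}}-\SuperSplitMultCnst\cdot\log\log{\rank{\UU}}$ forces every bucket appearing in $\KSet$ or $\KSetStar$ to have rank at least a large power of $\log{\rank{\UU}}$, so that the per-instance failure probabilities crush the $2^{O(\log{\rank{\UU}})}$ count. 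The one place you genuinely diverge is Item~\ref{item:rankFirst}: you apply Talagrand's inequality to $\rank{\FirstN}$ via a median argument (using that $\FirstN$ and $\LastN$ are identically distributed and $\rank{\FirstN}+\rank{\LastN}\geq\rank{\UU}$), whereas the paper fixes a maximal independent set $C$ and applies the Chernoff bound to $|\FirstN\cap C|$, which is a sum of independent Bernoulli variables by Observation~\ref{obs:binomial}; both give failure probability $e^{-\Omega(\rank{\UU})}$, the paper's route being slightly more elementary. Two small inaccuracies that do not affect the argument: the number of indices of $\Super$ above the cut-off is $O(\log\log{\rank{\UU}})$, not $O(1)$, but the doubly-geometric convergence of $\sum e^{1.4-\rank{\BucketU{i}}^{\singleUnionPowerTT}}$ that you invoke (and that the paper uses for all of $\Super$ at once, anchored by Assumption~\ref{ass:MaxVal}) handles any number of them; and your "deterministic threshold" shortcut for small buckets is harmless but unnecessary, since membership in $\Super$ already bounds every relevant bucket rank from below.
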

\begin{proof}
Let $C$ be a maximal independent set in $\UU$.
By Observation~\ref{obs:binomial} and the Chernoff bound,
$\Prob{|\FirstN \cap C| \leq \frac{1}{4}\cdot \rank{\UU}} \leq e^{-2^{-3}\cdot \rank{\UU}} < \frac{1}{8}$, where the last inequality follows from $\rank{\UU} > 2^{\maxWeightPower}$.
By the definition of $\rankN$, $|\FirstN \cap C| > \frac{1}{4}\cdot \rank{\UU}$ implies Item~\ref{item:rankFirst}.

Let 
$c = \log\LOPT{\UU} - \SuperSplitMultCnst\cdot \log\log{\rank{\UU}}$, 
$\KSet,\KSetStar\subseteq \Super$, where 
$\min{\KSetStar} > \max{\KSet}$  and
$\min\{\rank{\BucketU{i}}\mid i\in \KSet\} \geq \rank{\BucketU{\min{\KSet}}}^{\unionBoundGapPower}$,   
$k\in \KSet$ 
and 
$t = 2\cdot\rank{\BucketU{\KSetStar\cup\KSet}}^{\simpleUnionPower} \geq 2$.

Consequently, by the union bound, Theorem~\ref{thm:closure} and
Lemma~\ref{lem:Talagrand}, at least one of Items~\ref{item:ConcentrationUncov} and~\ref{item:ConcentrationIndep} does not hold for $\KSetStar,\KSet$ and $k$, with probability at most
$e^{\TalagrandExpAdditiveCnst-\frac{t^2}{4}} + e^{-\AzzumaMultCnst \rank{\BucketU{k}}^{\AzzumaPowerCnst}}$, which 
 does not exceed
$e^{\TalagrandExpAdditiveCnst-\rank{\BucketU{\KSetStar\cup\KSet}}^{\simpleUnionPowerTT}} + e^{-\AzzumaMultCnst \rank{\BucketU{k}}^{\AzzumaPowerCnst}}$ 
which, in turn, is less than
$e^{\TalagrandExpAdditiveCnstPO-\left(\concentrationRangeMultCnst\cdot\rank{\BucketU{\min{\KSet}}}\right)^{\unionBoundGapPowerThird}}$, 
because $k\in \KSet$ and 
$\concentrationRangeMultCnst\cdot\min\{\rank{\BucketU{i}}\mid i\in \KSet\} \geq \rank{\BucketU{\min{\KSet}}}^{\manageablePower}$. 
By the definition of $\Super$, we see that 
$e^{\TalagrandExpAdditiveCnstPO-\left(\concentrationRangeMultCnst\cdot\rank{\BucketU{\min{\KSet}}}\right)^{\unionBoundGapPowerThird}} \leq e^{\TalagrandExpAdditiveCnstPO-\left(2^{ \log{\LOPT{\UU}}-\min\KSet\PSuperCnstMSeven}\right)^{\unionBoundGapPowerSixth}}$.

Let $z \in \Super$.
Since $z\in \KSet\subseteq \Super$, and according to the definition of $\KSet$ and $\KSetStar$, for every possible value of $z$ there are at most 
$|\KSet| \leq c-z$ possible choices of $k$ and $(c-z)2^{c-z}$ choices of $\KSet$ and $\KSetStar$.
Consequently, by the union bound, the probability that
at least one of Items~\ref{item:ConcentrationUncov} and~\ref{item:ConcentrationIndep} does not hold for some $\KSetStar,\KSet$ and $k$, is at most
$\sum_{z\in\Super}(c-z)^22^{c-z}e^{\TalagrandExpAdditiveCnstPO-\left(2^{\log{\LOPT{\UU}}-z\PSuperCnstMSeven}\right)^{\unionBoundGapPowerSixth}}$.
Taking $y = c - z$, the previous value is bounded above by
$\sum_{y\in\NN}y^22^ye^{\TalagrandExpAdditiveCnstPO-\left(2^{\log{\LOPT{\UU}}-c+y\PSuperCnstMSeven}\right)^{\unionBoundGapPowerSixth}}<\frac{1}{8}$, 
since  $c =  \log\LOPT{\UU} - \SuperSplitMultCnst\cdot \log\log{\rank{\UU}}$ and $\rank{\UU} > 2^{\maxWeightPower}$.

Let $k'\in \Super$, and 
$t' = 2\cdot\rank{\BucketU{k'}}^{\singleUnionPower} \geq 2$.
By Lemma~\ref{lem:Talagrand}, Item~\ref{item:SingleBucket} does not hold,
with probability at most
$e^{1.4-\rank{\BucketU{k'}}^{\singleUnionPowerTT}}.$
Since $\rank{\BucketU{k'}} \geq 
\left(2^{-k'\PSuperCnst}\cdot \LOPT{\UU}\right)^{\SuperPwr}
$, 
 by the definition of $\Super$, we see that 
$e^{1.4-\rank{\BucketU{k'}}^{\singleUnionPowerTT}} \leq e^{1.4-\left(2^{ \log{\LOPT{\UU}}-k'\PSuperCnst}\right)^{\singleUnionPowerTTSuperPwr}}$.
Therefore, 
by the union bound, the probability that
 Item~\ref{item:SingleBucket} does not hold for any $k'\in \Super$, is at most
 $\sum_{k'\in\Super} e^{1.4-\left(2^{ \log{\LOPT{\UU}}-k'\PSuperCnst}\right)^{\singleUnionPowerTTSuperPwr}}$.
Taking $y' = \max{\Super} - k'$, the previous value is bounded above by
$\sum_{y'\in\NN} e^{1.4-\left(2^{ \log{\LOPT{\UU}}-\max{\Super}+y'\PSuperCnst}\right)^{\singleUnionPowerTTSuperPwr}}<\frac{1}{2}$, 
where the last inequality follows from Assumption~\ref{ass:MaxVal}.
Consequently, by the union bound the result follows.
\end{proof}

\section{Structural Theorem}\label{sec:structure}
In this section we assume that all the elements of $\FirstN$ have been revealed and hence $\FirstN$ is treated as fixed.

\begin{definition}\textbf{[$\GTSetN_H$]}\label{def:GTSet}
For every $\USet\subseteq H \subset \ZZ$ be let
$\GTSetH{\USet} = \{i\in H\mid i > \max{\USet}\}.$
We omit the subscript when clear from context.
\end{definition}

\begin{definition}\textbf{[\manageable\ set]}\label{def:manageable}
A set of integers $K$ is \textbf{\manageable} if, for every $j\in K$,
we have that $\rank{\Bucket{\FirstN}{j}} \geq \left(\manageableMultCnst\cdot\sum_{i\in K} \rank{\Bucket{\FirstN}{i}}\right)^{\manageablePower}$.
\end{definition}

\begin{definition}\textbf{[\Cf]}\label{def:Cf}
Let $L\subset \ZZ$, let $\mFam$ be a family of subsets of $L$, and let
$H = \bigcup_{H'\in \mFam}H'$,
then $\mFam$ is a \cf\ for $L$ if the following hold:
\begin{enumerate}
\item\label{item:CfLOPT} $\LOPT{\Bucket{\FirstN}{H}} \geq \frac{1}{\SSDropCnst}\cdot \LOPT{\Bucket{\FirstN}{L}}$,
\item\label{item:CfOrdered} for every pair $H_1$ and $H_2$ of distinct sets in $\mFam$, either $\max{H_1} <  \min{H_2}$ or 
$\max{H_2}< \min{H_1}$,
\item\label{item:CfUncov} for every $H' \in \mFam$ and $j\in H'$, 
$\uncov{\Bucket{\FirstN}{\GTSetH{H'}}}{\Bucket{\FirstN}{j}} 
 \geq \frac{\IncMultCnstHalfMO}{\IncMultCnstHalf}\cdot\rank{\Bucket{\FirstN}{j}},$ 
\item\label{item:CfManageable} every set in $\mFam$ is \manageable\ and
\item\label{item:CfMinIndexBucket}
for every $i\in \HN$, $2\cdot\rank{\Bucket{\FirstN}{i}} \geq \rank{\Bucket{\FirstN}{\GTSetH{\{i\}}\cup \{i\}}}$.
\end{enumerate}
\end{definition}

\begin{lemma}\label{lem:Cf}
Let $L\subset \ZZ$.
If $\rank{\FirstN} > 2^{\maxWeightPowerMT}$, then
there exists a \cf\ $\mFam$ for $L$ of cardinality at most $\mFamCardUBCnst\cdot\log\log{\rank{\FirstN}}$.
\end{lemma}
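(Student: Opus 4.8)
The plan is to build the critical family $\mFam$ greedily, peeling off one set at a time from the top (largest indices first), and to control the number of sets by a potential-function argument on the ranks. First I would fix the set $L$ and work entirely inside $\BucketU{L}$ with $\FirstN$ fixed. The skeleton is: starting from the current ``remaining'' index set (initially $L$), repeatedly extract a block $H'$ that is \manageable, satisfies the one-sided covering guarantee of Item~\ref{item:CfUncov}, and is an interval-like set lying strictly above everything not yet discarded, so that Item~\ref{item:CfOrdered} is automatic by construction. To do this, look at the largest index still available, call it $m$; by Item~\ref{item:MatroidUnionSpan} of Proposition~\ref{prop:Matroid} the quantities $\uncov{\Bucket{\FirstN}{\GTSetH{H'}}}{\Bucket{\FirstN}{j}}$ behave subadditively, so a block built from the top down either already has every $j$ contributing a $\frac{\IncMultCnstHalfMO}{\IncMultCnstHalf}$-fraction of its rank ``uncovered'', or some $j$ is heavily covered by the buckets above it, in which case $\rank{\Bucket{\FirstN}{j}}$ is small relative to $\sum_{i > j}\rank{\Bucket{\FirstN}{i}}$ and I discard $j$ (and everything near it), shrinking the relevant rank sum geometrically. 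The \manageable\ condition is then enforced by the same kind of cleanup: within a candidate block, any $j$ with $\rank{\Bucket{\FirstN}{j}}$ too small relative to the block's total rank gets thrown away, and because the threshold is the $\manageablePower$ power, only $O(\log\log)$-many such removals across the whole process can happen before the rank sum of $\BucketU{L}$ is exhausted.

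The heart of the cardinality bound is the doubly-logarithmic counting. Assign to the process, at the moment a block $H'$ is finalized, the value $r(H') = \sum_{i\in H'}\rank{\Bucket{\FirstN}{i}}$, together with the total rank $R = \rank{\Bucket{\FirstN}{L}}$ still potentially in play. Because each finalized block is \manageable, its smallest-rank bucket still has rank at least $(\manageableMultCnst \cdot r(H'))^{\manageablePower}$; and because blocks are index-disjoint and ordered, the running total of the $r(H')$ cannot exceed $R \le \rank{\FirstN}$. The $\manageablePower$-th-power constraint is what converts a would-be $O(\log R)$ bound into $O(\log\log R)$: if the block ranks were free to halve each time we would get $\log R$ blocks, but each block is forced (via Items~\ref{item:CfManageable} and~\ref{item:CfMinIndexBucket}) to ``consume'' at least a polynomial-in-itself amount of the remaining rank, so after $O(\log\log R)$ steps the remaining rank drops below $2^{\maxWeightPowerMT}$ and the hypothesis $\rank{\FirstN} > 2^{\maxWeightPowerMT}$ guarantees we have not run out prematurely. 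I would make this precise with a two-level induction: the outer level counts ``scales'' $\rank{\FirstN} \ge 2^{2^0} \ge 2^{2^1}\ge\cdots$, and within each scale only a constant number of blocks can be produced.

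Finally I would verify Item~\ref{item:CfLOPT}: the sets discarded during cleanup were discarded precisely because they contributed only a small fraction of the local $\LOPT$ (a $\negligible$ bucket has rank polynomially smaller than the surrounding sum, hence, weighted by $2^j$ and summed geometrically, contributes at most a $\frac{1}{\SSDropCnst}$-fraction of $\LOPT{\Bucket{\FirstN}{L}}$ once the constants $\SSDropCnst$, $\IncMultCnst$ are chosen large enough), so what survives in $H = \bigcup_{H'\in\mFam}H'$ still carries a constant fraction of $\LOPT{\Bucket{\FirstN}{L}}$. Items~\ref{item:CfOrdered} and~\ref{item:CfMinIndexBucket} are structural and fall out of the top-down interval construction; Items~\ref{item:CfUncov} and~\ref{item:CfManageable} hold by the stopping rule of the extraction step. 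The main obstacle I anticipate is bookkeeping the interaction between the two cleanup criteria (covering vs.\ manageability): removing a bucket to fix one condition can in principle disturb the other, so the extraction loop must be ordered carefully — I expect one should first settle the \manageable\ structure of a candidate top-block, then check covering, and argue that the covering-failure removals only ever shrink the block further (never require re-adding), so the loop terminates, and that the total number of removals of either type is absorbed into the same $O(\log\log{\rank{\FirstN}})$ budget. Getting the constants $\SSDropCnst$, $\IncMultCnstHalfMO/\IncMultCnstHalf$, $\manageableMultCnst$ and the power $\manageablePower$ to line up simultaneously is the routine-but-delicate part.
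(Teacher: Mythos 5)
Your high-level intuition is right in one respect: the $\manageablePower = \frac{8}{9}$ power is exactly the mechanism that converts a $\log$ bound into $\log\log$, and the counting $w(\ell_g) > w(\ell_2)^{(\manageablePowerInv)^{g-2}}$ is precisely how the paper bounds $|\mFam|$. But the detailed mechanism you propose has a structural gap: you never sparsify the index set, and without that step Items~\ref{item:CfUncov} and~\ref{item:CfMinIndexBucket} of Definition~\ref{def:Cf} cannot be obtained. The paper's proof proceeds in two passes that your plan is missing. First it thins $L$ to a \emph{doubling} subsequence $s_1 > s_2 > \cdots$ with $\rank{\Bucket{\FirstN}{s_{i+1}}} \geq 2\cdot\rank{\Bucket{\FirstN}{s_i}}$ (so each bucket's rank already dominates the running total above it — this alone costs at most a constant factor of $\LOPT{\Bucket{\FirstN}{L}}$ by geometric decay), and then further takes a single residue class of the $s_i$'s modulo a small constant, chosen by pigeonhole to retain $\geq \frac{1}{\SSDropCnst}$ of $\LOPT{\Bucket{\FirstN}{L}}$. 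The residue-class gap forces consecutive retained buckets to differ in rank by a factor $\geq \IncMultCnst = 32$, and it is \emph{this} magnitude separation — not any subadditivity of $\uncovN$ — that gives both Item~\ref{item:CfUncov} (since $\uncov{R}{S} \geq \rank{S} - \rank{R}$ and $\rank{\Bucket{\FirstN}{j}} \geq \IncMultCnstHalf\cdot\rank{\Bucket{\FirstN}{\GTSetH{H'}}}$) and Item~\ref{item:CfMinIndexBucket}. Your claim that Item~\ref{item:CfMinIndexBucket} "falls out of the top-down interval construction" is simply false: a top-down interval decomposition of an arbitrary $L$ gives you no control whatsoever on how bucket ranks compare across indices, and without doubling you could have a long run of buckets with nearly equal rank, violating Item~\ref{item:CfMinIndexBucket} badly.

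Two further issues. Your argument for Item~\ref{item:CfLOPT} is confused: you invoke the \negligible\ concept, but that definition belongs to the critical-tree machinery of Lemmas~\ref{lem:CTreeDepth} and~\ref{lem:CtreeExists}, not here — and "polynomially smaller rank" does \emph{not} translate into a small share of $\LOPT$ once you weight by $2^j$, precisely because a bucket can have tiny rank and huge index. The paper's $1/\SSDropCnst$ loss comes from an explicit constant-factor accounting (factor $3$ from discarding non-doubling indices, factor $6$ from the residue pigeonhole), not from any "polynomially small" argument. Finally, the obstacle you flag — the interaction between the covering cleanup and the manageability cleanup, and whether the loop terminates — is not something you can wave away with a hoped-for ordering; the paper \emph{avoids the loop entirely}: sparsification is done once up front, after which the blocks $H_i = [\ell_i,r_i]\cap R_q$ are carved in a single pass by a pure threshold condition $w(\ell_i) \geq w(r_i)^{\manageablePower}$, and all five properties of Definition~\ref{def:Cf} hold automatically. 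You would need to replace your cleanup loop with this two-pass construction to make the proof go through.
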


\begin{proof}
Define,
$\NLMap{}:L \longrightarrow \NN$
as follows:
for every $i\in L$,
$\NLMap{i} = \rank{\Bucket{\FirstN}{i}}$. 
Let 
$m = \sum_{j\in L}\NLMap{j}\cdot 2^{j} = \LOPT{\Bucket{\FirstN}{L}}$.
Let $s_1$ be maximum so that $\NLMap{s_1} > 0$, and inductively define, $s_{i+1}$ to be the maximum integer such that   
$\NLMap{s_{i+1}}\geq 2\cdot\NLMap{s_{i}}$.
Let $k$ be the maximum integer such that $s_{k}$ is defined.
It follows that $s_1 > s_2 > \dots > s_k$.
For every $i\in [k]$, the sum of $\NLMap{j}\cdot 2^j$ over all $j\in L$, where $s_{i+1}<j<s_i$ when $i < k$, is at most
$2\cdot  \NLMap{s_{i}}\cdot\growthCnst^{s_{i}}$.
Thus, $m \leq 3\cdot\sum_{i=1}^{k}\NLMap{s_{i}}\cdot \growthCnst^{s_{i}}$ and so
$\sum_{i=1}^{k}\NLMap{s_{i}}\cdot \growthCnst^{s_{i}} \geq \frac{m}{\NLDropCnst}$.

Let $R_j = \{s_i \mid i = j {\mod{5}} \}$.
By the pigeon hole principle, there exists $q \in [\TeleCnst]$ such that
the sum of $\sum_{i\in R_q} 2^i \cdot\NLMap{s_i}\geq \frac{m}{\SSDropCnst}$.
Let $\ell_1 = \max{R_q}$ and $r_1$ be the minimum member of $R_q$ such that $\NLMap{\ell_1} \geq \NLMap{r_1}^{\manageablePower}$ and set $H_1 = [\ell_1,r_1]\cap R_q$.
Now, inductively, for every $i>1$, let $\ell_{i}$ be the maximum member of $R_q$ that is smaller than $r_{i-1}$, and $r_{i}$ be the minimum member of $R_q$ such that $\NLMap{\ell_{i}} \geq \NLMap{r_i}^{\manageablePower}$ and $H_i = [\ell_i,r_i]\cap R_q$.
Let $g$ be the maximum integer for which $H_{g}$ is defined, $\mFam = \{H_i\}_{i\in [g]}$ and
$H = \bigcup_{i=1}^{g} H_i$.
We note that, by construction, $r_g = \min{R_q}$ and $H = R_q$.

We next bound above $g$.
If $g \leq 4$, then $g \leq \mFamCardUBCnst\cdot\log\log{\rank{\FirstN}}$, since $\rank{\FirstN} \geq 2^{\maxWeightPowerMT}$.
Suppose that $g > 4$.
By construction, $\NLMap{\ell_i} < \NLMap{\ell_{i+1}}^{\manageablePower}$, for every $i\in[g-1]$,  and hence
$\rank{\FirstN} \geq\NLMap{\ell_{g}} >
\NLMap{\ell_2}^{ (\manageablePowerInv)^{g -2}}.$
Since, by construction, 
$\NLMap{\ell_2} \geq \IncMultCnst\cdot\NLMap{\ell_1} \geq \IncMultCnst$,
 the preceding inequality and the fact that 
$\rank{\FirstN} > 2^{\maxWeightPowerMT}$ 
  imply that  
$g \leq \PartitionCardCnst\cdot \log\log{\rank{\FirstN}}$. 

By construction,
\textbf{Items~\ref{item:CfOrdered} and~\ref{item:CfMinIndexBucket} of Definition~\ref{def:Cf}} holds.
Also
 $\sum_{p\in H} 2^p \cdot \NLMap{s_p} = \sum_{p\in R_q} 2^p \cdot \NLMap{s_p}\geq \frac{m}{\SSDropCnst}$.
so \textbf{Item~\ref{item:CfLOPT} of Definition~\ref{def:Cf}} holds. 
Let $H' \in \mFam$ and $j\in H'$.
By construction,
$\rank{\Bucket{\FirstN}{j}} = \NLMap{j} \geq \NLMap{\min{H'}}^{\manageablePower} \geq
\left(\frac{1}{2}\cdot\sum_{p\in  H'}\NLMap{p}\right)^{\manageablePower} = \left(\frac{1}{2}\cdot\sum_{p\in  H'}\Bucket{\FirstN}{p}\right)^{\manageablePower}$.
Thus, \textbf{Item~\ref{item:CfManageable} of Definition~\ref{def:Cf}} holds.

By the definition of $\uncovN$ (Definition~\ref{def:uncov}),
$\uncov{\rank{\Bucket{\FirstN}{\GTSetH{H'}}}}{\rank{\Bucket{\FirstN}{j}}} = \rank{\Bucket{\FirstN}{\GTSetH{H'}\cup \{j\}}} - \rank{\Bucket{\FirstN}{\GTSetH{H'}}}\geq \rank{\Bucket{\FirstN}{j}} - \rank{\Bucket{\FirstN}{\GTSetH{H'}}}$.
This is bounded below by $\frac{\IncMultCnstHalfMO}{\IncMultCnstHalf}\cdot\rank{\Bucket{\FirstN}{j}}$ because, by construction, for every $j\in H'$
we have 
$\rank{\Bucket{\FirstN}{j}} 
\geq 
\IncMultCnstHalf\cdot \sum_{i\in \GTSetH{H'}}\rank{\Bucket{\FirstN}{i}} 
\geq 
\IncMultCnstHalf\cdot\rank{\Bucket{\FirstN}{\GTSetH{H'}}} 
$.
Consequently, \textbf{Item~\ref{item:CfUncov} of Definition~\ref{def:Cf}} holds.
\end{proof}

\begin{definition}\textbf{[\usefulN]}\label{def:useful}
Let $\USet^*\subseteq \USet\subseteq H\subset \ZZ$.  
If the following hold:
\begin{enumerate}
\item\label{item:usefulOPT}
$\LOPT{\Bucket{\FirstN}{\USet^*}} > \usefulLBCnst\cdot\LOPT{\Bucket{\FirstN}{\USet}}$ and
\item\label{item:usefulUncov}
$\sum_{j\in \USet^*}2^j\cdot\left(
\uncov{\Bucket{\FirstN}{\GTSetH{\USet^*}\cup \USet^*\setminus\{j\}}}{\Bucket{\FirstN}{j}} 
  			 	-
\uncov{\Bucket{\FirstN}{\GTSetH{\USet}\cup \USet\setminus\{j\}}}{\Bucket{\FirstN}{j}}\right) 
  			 	\geq
\frac{\LOPT{\Bucket{\FirstN}{\USet}}}{\UsefulDenumCnst\cdot\log\log{\rank{\FirstN}}}.		
$
\end{enumerate}
then $\USet^*$ is \textbf{\usefulN} for $\USet$ in $H$ 
and $\USet$ is \textbf{\usefulN} in $H$.
\end{definition}

\begin{definition}\textbf{[\splittable]}\label{def:splittable}
$\USet\subset \ZZ$ is \textbf{\splittable} if it has a bipartition $\{\USet_1,\USet_2\}$ such that
\begin{enumerate}
\item $\min{\USet_1}>\max{\USet_2}$ and
\item\label{item:splittableVal} $\LOPT{\Bucket{\FirstN}{\USet_i}} > \splitLBCnst\cdot\LOPT{\Bucket{\FirstN}{\USet}}$, for every $i= 1,2$.
\end{enumerate}
\end{definition}

\begin{definition}\textbf{[\negligible]}\label{def:negligible}
A subset $\USet$ of a set $H\subset \ZZ$ is \textbf{\negligible} for $H$ if $\LOPT{\Bucket{\FirstN}{\USet}} <  \frac{\LOPT{\Bucket{\FirstN}{H}}}{\preLOPTLBDiv\cdot\log{\rank{\FirstN}}}$.
When $H$ is clear from context, we just say $K$ is \negligible. 
\end{definition}

\begin{definition}\textbf{[\burned]}\label{def:burned}
A subset $\USet$ of a set $H \subset \ZZ$ is \textbf{\burned} for $H$ if 
$$\sum_{j\in \USet}2^j\cdot
  				\uncov{\Bucket{\FirstN}{\GTSetH{\USet}\cup \USet\setminus\{j\}}}{\Bucket{\FirstN}{j}} 
  				> \doNothingMult\cdot\LOPT{\Bucket{\FirstN}{\USet}}.$$
\end{definition}

\begin{definition}\textbf{[\CTreeN]}\label{def:CriticalTree}
Let  $\mFam$ be a family of subsets of $\ZZ$ and
$H = \bigcup_{H'\in \mFam}H'$,
a \CTreeN\ for $\mFam$ is a rooted tree whose vertices are subsets of $H$ and that satisfies the following:
\begin{enumerate}
\item\label{item:TreeRoot} the root of the tree is $H$,
\item\label{item:TreeChildren} the children of the root are the sets of $\mFam$,
\item\label{item:TreeLeaves} every leaf is either \negligible, \usefulN\ or \burned, and
\item\label{item:TreeInternal} every internal vertex $\USet$, except possibly the root, is \splittable\, and neither \usefulN, \negligible\ nor \burned; moreover it has two children that form a bipartition of $\USet$ as described in the definition of  \splittable.
\end{enumerate}
\end{definition}

\begin{lemma}\label{lem:CTreeDepth}
Suppose that $\CTree$ is a \CTreeN\ for a \cf\ $\mFam$.
If $\rank{\FirstN} > 2^{\maxWeightPowerMT}$, 
then the depth of $\CTree$ does not exceed $\preDepthMult\cdot\log\log{\rank{\FirstN}}.$
\end{lemma}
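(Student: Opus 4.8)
The plan is to bound the depth of a critical-tree $\CTree$ by analyzing how the quantity $\LOPT{\Bucket{\FirstN}{\USet}}$ decays along any root-to-leaf path. The root of $\CTree$ is $H$, its children are the sets of the critical family $\mFam$, and every internal vertex below the first level is \splittable\ — meaning it has a bipartition into two children each of which retains at least a $\splitLBCnst$ fraction of the $\LOPT$-mass of the parent. So along a path descending from a child of the root, each step multiplies $\LOPT{\Bucket{\FirstN}{\cdot}}$ by at most $\splitUBCnst = 1 - \splitLBCnst$ (since the sibling carries off at least $\splitLBCnst$ of the mass, by Item~\ref{item:splittableVal} of Definition~\ref{def:splittable}). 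Hence after $d$ internal splitting steps, a vertex $\USet$ on the path satisfies $\LOPT{\Bucket{\FirstN}{\USet}} \leq \splitUBCnst^{\,d}\cdot \LOPT{\Bucket{\FirstN}{H'}}$ for the relevant child $H'\in\mFam$, which is itself at most $\LOPT{\Bucket{\FirstN}{H}}$.

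Next I would observe that the recursion cannot continue once a vertex becomes \negligible: by Definition~\ref{def:negligible}, any $\USet$ with $\LOPT{\Bucket{\FirstN}{\USet}} < \LOPT{\Bucket{\FirstN}{H}}/(\preLOPTLBDiv\cdot\log{\rank{\FirstN}})$ is \negligible\ and hence a leaf (Item~\ref{item:TreeLeaves} of Definition~\ref{def:CriticalTree}); an internal vertex is explicitly required not to be \negligible\ (Item~\ref{item:TreeInternal}). Therefore every internal vertex $\USet$ at splitting-depth $d$ must have $\LOPT{\Bucket{\FirstN}{\USet}} \geq \LOPT{\Bucket{\FirstN}{H}}/(\preLOPTLBDiv\cdot\log{\rank{\FirstN}})$. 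Combining this with the decay bound $\splitUBCnst^{\,d}\cdot\LOPT{\Bucket{\FirstN}{H}} \geq \LOPT{\Bucket{\FirstN}{\USet}}$ gives $\splitUBCnst^{\,d} \geq 1/(\preLOPTLBDiv\cdot\log{\rank{\FirstN}})$, i.e. $d \leq \log_{1/\splitUBCnst}(\preLOPTLBDiv\cdot\log{\rank{\FirstN}}) = O(\log\log{\rank{\FirstN}})$. Adding the single edge from the root to its children (one extra level) and absorbing constants, the total depth is at most $\preDepthMult\cdot\log\log{\rank{\FirstN}}$, once one checks that the constant $\preDepthMult = 2^7$ comfortably dominates $1 + \log_{1/\splitUBCnst}\preLOPTLBDiv + \log_{1/\splitUBCnst}\log{\rank{\FirstN}}/\log\log{\rank{\FirstN}}$; here the hypothesis $\rank{\FirstN} > 2^{\maxWeightPowerMT}$ is what makes $\log{\rank{\FirstN}}$ large enough that $\log\log{\rank{\FirstN}}$ is a meaningful positive quantity and the last ratio is bounded.

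The main obstacle I anticipate is bookkeeping rather than conceptual: one must be careful that the $\LOPT$-decay estimate $\splitUBCnst = 1 - \splitLBCnst$ per step genuinely holds for \emph{every} internal vertex except possibly the root, and that the exceptional root-to-child edge contributes only an additive constant (indeed $0$ decay is fine, since $\LOPT{\Bucket{\FirstN}{H'}} \leq \LOPT{\Bucket{\FirstN}{H}}$ trivially as $H'\subseteq H$ and ranks of bucket-restrictions are monotone). A second subtlety is that \burned\ and \usefulN\ leaves also terminate branches, but these only \emph{shorten} paths, so they pose no difficulty for an upper bound on depth — the worst case is a path that keeps splitting until it is forced to stop by negligibility. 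Finally I would note that we do not need the full structure of a \cf\ here beyond the fact that $H$ is its union and the children of the root are its members; the real content is the geometric decay forced by \splittable\ against the floor imposed by \negligible.
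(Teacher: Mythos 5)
Your proof is correct and follows essentially the same argument as the paper's: geometric decay of $\LOPT{\Bucket{\FirstN}{\cdot}}$ along internal splitting steps (because each sibling retains at least a $\splitLBCnst$ fraction of the parent's mass), combined with the floor that any non-root internal vertex is not \negligible, gives the $O(\log\log{\rank{\FirstN}})$ depth bound. One small slip: the per-step retention factor $1-\splitLBCnst$ is what the paper denotes $\splitLBCnstOM$, not $\splitUBCnst$ (which is the threshold used in the definition of \splittable, not equal to $1-\splitLBCnst$), but this is a macro-naming confusion rather than a conceptual error and the argument goes through with the correct constant.
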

\begin{proof}
Let $\USet$ be a parent of a leaf in $\CTree$ and $d$ be the depth of $K$.
We assume that $\USet$ is not the root or one of its children, since otherwise the result follows immediately.
By the definition of a \CTreeN, each ancestor $\USet^*$ of $\USet$, except for the root, is \splittable\ and hence, by the definition of \splittable, 
$\LOPT{\Bucket{\FirstN}{\USet}} \leq 
\left(\splitLBCnstOM\right)^{d-2}\cdot\LOPT{\Bucket{\FirstN}{H}}.$
Since $\USet$ is not \negligible, by definition,
$\frac{\LOPT{\Bucket{\FirstN}{H}}}{\preLOPTLBDiv\cdot\log{\rank{\FirstN}}}
 \leq
 \LOPT{\Bucket{\FirstN}{\USet}}.$ 
Therefore, 
$ 
\frac{\LOPT{\Bucket{\FirstN}{H}}}{\preLOPTLBDiv\cdot\log{\rank{\FirstN}}} \leq
 \left(\splitLBCnstOM\right)^{d-2}\cdot\LOPT{\Bucket{\FirstN}{H}}.
$
Consequently, since $\rank{\FirstN} > 2^{\maxWeightPowerMT}$, we have that
$d \leq \preDepthMult\cdot\log\log{\rank{\FirstN}}$, which in turn implies the result.
\end{proof}

\begin{lemma}\label{lem:CtreeExists}
Let $L\subset \ZZ$ and $\mathcal{H}$ be a \cf\ for $L$.
Then, there exists a \CTreeN\ $\CTree$ for $\mathcal{H}$.
\end{lemma}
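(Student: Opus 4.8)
The plan is to construct the critical-tree $\CTree$ for $\mathcal{H}$ top-down, by repeatedly splitting internal vertices, and argue that the process terminates with every leaf being \negligible, \usefulN\ or \burned. I would start by placing $H=\bigcup_{H'\in\mathcal{H}}H'$ at the root with the members of $\mathcal{H}$ as its children, so that Items~\ref{item:TreeRoot} and~\ref{item:TreeChildren} of Definition~\ref{def:CriticalTree} hold by fiat. Then I would process each child $H'\in\mathcal{H}$ (and recursively each set produced) as follows: if the current set $\USet$ is \negligible, \usefulN\ or \burned, declare it a leaf; otherwise I must show it is \splittable, so that the definition of \splittable\ supplies a bipartition $\{\USet_1,\USet_2\}$ with $\min\USet_1>\max\USet_2$ and each part retaining a $\splitLBCnst$ fraction of the $\LOPT$, and recurse on the two parts.

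**The heart of the argument** is the dichotomy: a set $\USet$ that is not \negligible, not \usefulN\ and not \burned\ must be \splittable. This is where I expect the real work to be, and it is presumably the content of an auxiliary lemma (or is folded into this proof). The intuition: since $\USet$ is not \burned, the ``greedy'' quantity $\sum_{j\in\USet}2^j\uncov{\Bucket{\FirstN}{\GTSetH{\USet}\cup\USet\setminus\{j\}}}{\Bucket{\FirstN}{j}}$ is at most $\doNothingMult\cdot\LOPT{\Bucket{\FirstN}{\USet}}$, so a non-trivial fraction of the $\LOPT$ mass is ``lost'' to coverage from higher buckets within $\USet$; since $\USet$ is not \usefulN, restricting to any high-$\LOPT$ subset $\USet^*\subseteq\USet$ (with $\LOPT{\Bucket{\FirstN}{\USet^*}}>\usefulLBCnst\LOPT{\Bucket{\FirstN}{\USet}}$) does not substantially reduce that loss. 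Combining these, there must be a way to cut $\USet$ at some threshold index into a top part $\USet_1$ and a bottom part $\USet_2$, each carrying at least a $\splitLBCnst$ fraction of $\LOPT{\Bucket{\FirstN}{\USet}}$ — because if every such cut left one side \negligible-sized relative to $\USet$, then one side would dominate the $\LOPT$, and then either that dominating side witnesses usefulness (the cross-bucket coverage it avoids is concentrated outside it) or the whole set is burnt. Making this case analysis precise, tracking the constants $\usefulLBCnst=\frac{1}{32}$, $\splitLBCnst=\frac{1}{32}$, $\doNothingMult=\frac34$, $\UNSMultConst=\frac78$, is the main obstacle.

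**Termination** is then immediate from Lemma~\ref{lem:CTreeDepth}: every root-to-leaf path through internal vertices shrinks $\LOPT$ geometrically by a factor $\splitLBCnstOM$ per level, while non-\negligible\ sets keep $\LOPT$ above $\LOPT{\Bucket{\FirstN}{H}}/(\preLOPTLBDiv\cdot\log\rank{\FirstN})$, so the recursion bottoms out after $O(\log\log\rank{\FirstN})$ levels; in particular the process halts and produces a finite tree. By construction Item~\ref{item:TreeLeaves} holds (we only stop at \negligible, \usefulN\ or \burned\ sets), and Item~\ref{item:TreeInternal} holds because every internal non-root vertex was split precisely because it was \splittable\ and none of the three terminal types, using the bipartition from the definition of \splittable. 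Hence $\CTree$ is a \CTreeN\ for $\mathcal{H}$, as required. I would also remark at the outset that the hypothesis $\rank{\FirstN}>2^{\maxWeightPowerMT}$ (inherited through $\mathcal{H}$ being a \cf, via Lemma~\ref{lem:Cf}) is what makes the $\log$ and $\log\log$ quantities in the definitions of \negligible\ and \usefulN\ positive and the geometric-decay bound in Lemma~\ref{lem:CTreeDepth} applicable.
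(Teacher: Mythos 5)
Your tree construction and your reduction of the lemma to the four-way dichotomy (every $\USet\subseteq H$ is \negligible, \usefulN, \burned\ or \splittable) is exactly the paper's skeleton, but the dichotomy is the entire content of the lemma and you leave it as an acknowledged ``main obstacle'' with only a heuristic sketch. Moreover, the sketch omits the one ingredient that makes the dichotomy true: the \cf\ hypothesis. For an arbitrary $\USet\subseteq\ZZ$ the dichotomy is simply false. Take a set in which one index $\PISHeavy$ carries more than a $\splitUBCnst$ fraction of $\LOPT{\Bucket{\FirstN}{\USet}}$ (so $\USet$ is not \splittable\ and need not be \negligible) but $\Bucket{\FirstN}{j}\subseteq\Span{\Bucket{\FirstN}{\GTSetH{\USet}}}$ for every $j\in\USet$: then every $\uncovN$ term appearing in Definitions~\ref{def:useful} and~\ref{def:burned} vanishes (note $\GTSetH{\USet}\subseteq\GTSetH{\USet^*}\cup\USet^*\setminus\{j\}$ for any $\USet^*\subseteq\USet$), so $\USet$ is neither \burned\ nor \usefulN. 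What excludes this situation is Item~\ref{item:CfUncov} of Definition~\ref{def:Cf}, which guarantees that buckets of a \cf\ are \emph{not} mostly spanned by the buckets above them; your argument never invokes it, so your case analysis cannot close no matter how the constants are tracked.

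Concretely, the step you are missing runs as follows. If $\USet$ is not \splittable, then some single index $\PISHeavy\in\USet$ has $\LOPT{\Bucket{\FirstN}{\PISHeavy}}\geq\splitUBCnst\cdot\LOPT{\Bucket{\FirstN}{\USet}}$, so $\USet^*=\{\PISHeavy\}$ already satisfies Item~\ref{item:usefulOPT} of Definition~\ref{def:useful}. For Item~\ref{item:usefulUncov}, the subtracted term $2^{\PISHeavy}\cdot\uncov{\Bucket{\FirstN}{\GTSetH{\USet}\cup\USet\setminus\{\PISHeavy\}}}{\Bucket{\FirstN}{\PISHeavy}}$ is at most $\doNothingMult\cdot\LOPT{\Bucket{\FirstN}{\USet}}$ because $\USet$ is not \burned, while the positive term $2^{\PISHeavy}\cdot\uncov{\Bucket{\FirstN}{\GTSetH{\{\PISHeavy\}}}}{\Bucket{\FirstN}{\PISHeavy}}$ is at least $\frac{\IncMultCnstHalfMO}{\IncMultCnstHalf}\cdot 2^{\PISHeavy}\rank{\Bucket{\FirstN}{\PISHeavy}}=\frac{\IncMultCnstHalfMO}{\IncMultCnstHalf}\cdot\LOPT{\Bucket{\FirstN}{\PISHeavy}}\geq\UNSMultConst\cdot\LOPT{\Bucket{\FirstN}{\USet}}$ by Item~\ref{item:CfUncov} of Definition~\ref{def:Cf} (with Observation~\ref{obs:uncov}); the gap $\UNSMultConst-\doNothingMult=\frac18$ then comfortably exceeds the $\frac{1}{\UsefulDenumCnst\cdot\log\log\rank{\FirstN}}$ threshold, so $\USet$ is \usefulN. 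Two lesser remarks: termination needs no appeal to Lemma~\ref{lem:CTreeDepth} or to any rank hypothesis, since both parts of a \splittable\ bipartition are nonempty and cardinality strictly decreases; and neither the statement of the lemma nor Definition~\ref{def:Cf} assumes $\rank{\FirstN}>2^{\maxWeightPowerMT}$, so you should not import that hypothesis here.
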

\begin{proof}
We construct $\CTree$ as follows:
we let $H = \bigcup_{H'\in \mFam}H'$ and
 set the root to be $H$ and its children to be $\mathcal{H}$.
Then, as long as there is a leaf $K$ in the tree that is \splittable\ but neither
\usefulN,  \negligible, nor \burned, we add two children $K_1$ and $K_2$ to $K$, where $\{K_1,K_2\}$ form a bipartition of $K$, as in the definition of \splittable.
If there are no such leaves,  we stop.

By construction, every vertex $K$ in $\CTree$ is a subset of $H$ and $\CTree$ satisfies Items~\ref{item:TreeRoot},~\ref{item:TreeChildren} and~\ref{item:TreeInternal}  of the definition of a \CTreeN.
Suppose that every $\USet\subseteq H$, except for possibly the root, is at least one of the following: \usefulN,  \negligible, \burned\ or \splittable.
This implies that $\CTree$ also satisfies Item~\ref{item:TreeLeaves} of the definition of a \CTreeN.
Thus, $\CTree$ is a \CTreeN\ for $\mFam$.

We prove next that indeed, every $\USet\subseteq H$ is at least one of: \usefulN,  \negligible, \burned\ or \splittable.
Fix $\USet\subseteq H$ and
assume that $\USet$ is neither \burned, \negligible\ nor \splittable.
By the definition of \splittable,
there exists $\PISHeavy\in \USet$ such that,
$\LOPT{\Bucket{\FirstN}{\PISHeavy}}\geq \splitUBCnst\cdot \LOPT{\Bucket{\FirstN}{\USet}}$.
Hence, the set $\USet^*= \{\PISHeavy\}$ satisfies Item~\ref{item:usefulOPT} of the definition of \usefulN.
We show next that $\USet^*$ also satisfies Item~\ref{item:usefulUncov}
of the definition of \usefulN, and therefore is \usefulN.

Since $\USet$ is not \burned,  
$2^{\PISHeavy}\cdot \uncov{\Bucket{\FirstN}{\GTSet{\USet}\cup \USet\setminus\{\PISHeavy\}}}{\Bucket{\FirstN}{\PISHeavy}}\leq\doNothingMult\cdot \LOPT{\Bucket{\FirstN}{\USet}}$,
hence by Item~\ref{item:usefulUncov}
of the definition of \usefulN,
 it is sufficient to show that
$2^{\PISHeavy}\cdot \uncov{\Bucket{\FirstN}{\GTSetH{\{\PISHeavy\}}}}{\Bucket{\FirstN}{\PISHeavy}}\geq \UNSMultConst\cdot \LOPT{\Bucket{\FirstN}{\USet}}$.
Since $\mathcal{H}$ is a \cf,
by Item~\ref{item:CfUncov} of Definition~\ref{def:Cf} and Observation~\ref{obs:uncov},
$\uncov{\Bucket{\FirstN}{\GTSetH{\{\PISHeavy\}}}}{\Bucket{\FirstN}{j}}
\geq
\frac{\IncMultCnstHalfMO}{\IncMultCnstHalf}\cdot \rank{\Bucket{\FirstN}{\PISHeavy}}$.
Consequently, because $\LOPT{\Bucket{\FirstN}{\PISHeavy}}\geq \splitUBCnst\cdot \LOPT{\Bucket{\FirstN}{\USet}}$,
we have that 
$2^{\PISHeavy}\cdot \uncov{\Bucket{\FirstN}{\GTSetH{\{\PISHeavy\}}}}{\Bucket{\FirstN}{\PISHeavy}}
\geq
\UNSMultConst\cdot \LOPT{\Bucket{\FirstN}{\USet}}$.
\end{proof}

\begin{theorem}\label{thm:Tradeoff}
Let  $L$ be a set of at most $\valuableRangeMult\cdot\log{\rank{\FirstN}} \PValuableIndexLBAdd$ integers.
If $\rank{\FirstN} > 2^{\maxWeightPowerMT}$ and
$\sum_{j\in K}2^j\cdot
\uncov{\Bucket{\FirstN}{K\setminus\{j\}}}{\Bucket{\FirstN}{j}}\leq \frac{\LOPT{\Bucket{\FirstN}{L}}}{\PreSetCnst\cdot \log\log{\rank{\FirstN}}}$,
for every \manageable\ $K \subseteq L$, 
then there exists a \CT\ $(\BlockN,\GdN,\BdN)$ such that, for every $i\in \ZZ$,
\begin{enumerate}
\item\label{item:CTContained} $\Gd{i}$, $\Bd{i}$ and $\Block{i}$ are subsets of  $L$,
\item\label{item:CTMinIndexBucket}
$\rank{\Bucket{\FirstN}{j}} > \left(\manageableGdCnst\cdot\sum_{\ell\in \Bd{i}\cup \Block{i}} \rank{\Bucket{\FirstN}{\ell}}\right)^{\manageablePower}$, for every $j\in \Block{i}$,
 and
\item\label{item:CTValue}
$
\begin{array}{lcc}
\sum_{j\in \BLOCK}2^j\left(
\uncov{\Bucket{\FirstN}{\Bd{j}\cup \Block{j}\setminus\{j\}}}{\Bucket{\FirstN}{j}}
 - 
\uncov{\Bucket{\FirstN}{\Gd{j}\setminus\{j\}}}{\Bucket{\FirstN}{j}} \right) 
 \geq 
\frac{\LOPT{\Bucket{\FirstN}{L}}}{\PreTupleCnst\cdot \log\log{\rank{\FirstN}}}.
\end{array}
$
\end{enumerate}
\end{theorem}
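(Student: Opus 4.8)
The plan is to construct the critical tuple by running the machinery developed in this section in sequence. First I would invoke Lemma~\ref{lem:Cf} to obtain a \cf\ $\mFam$ for $L$ of cardinality at most $\mFamCardUBCnst\cdot\log\log{\rank{\FirstN}}$ (the hypothesis $\rank{\FirstN} > 2^{\maxWeightPowerMT}$ is exactly what Lemma~\ref{lem:Cf} needs). Then, by Lemma~\ref{lem:CtreeExists}, there is a \CTreeN\ $\CTree$ for $\mFam$, and by Lemma~\ref{lem:CTreeDepth} its depth is at most $\preDepthMult\cdot\log\log{\rank{\FirstN}}$. The key point is that the total number of vertices of $\CTree$ is then $O(\log\log{\rank{\FirstN}})$, since each internal vertex is \splittable\ and the children at each level partition disjoint index-intervals, so the tree is essentially a binary tree of logarithmic depth whose leaf count is controlled; more carefully one bounds the number of leaves by a polynomial in the depth, which is $O(\log\log\rank{\FirstN})$ — this bookkeeping is where I would be careful but it is routine.

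The heart of the argument is the following dichotomy applied to the leaves of $\CTree$. By Item~\ref{item:TreeLeaves} of Definition~\ref{def:CriticalTree}, every leaf is \negligible, \usefulN\ or \burned. The \negligible\ leaves contribute a combined $\LOPT$ of at most $\frac{\LOPT{\Bucket{\FirstN}{H}}}{\preLOPTLBDiv\cdot\log{\rank{\FirstN}}}$ times the number of leaves, which is $o(\LOPT{\Bucket{\FirstN}{H}})$ — negligible, as the name suggests. The \burned\ leaves are the crucial ones that are ruled out by the hypothesis: a \burned\ leaf $\USet$ satisfies $\sum_{j\in \USet}2^j\cdot\uncov{\Bucket{\FirstN}{\GTSetH{\USet}\cup \USet\setminus\{j\}}}{\Bucket{\FirstN}{j}} > \doNothingMult\cdot\LOPT{\Bucket{\FirstN}{\USet}}$, but this conflicts with the assumed upper bound $\sum_{j\in K}2^j\cdot\uncov{\Bucket{\FirstN}{K\setminus\{j\}}}{\Bucket{\FirstN}{j}}\leq \frac{\LOPT{\Bucket{\FirstN}{L}}}{\PreSetCnst\cdot \log\log{\rank{\FirstN}}}$ once one notes that $\uncov{\Bucket{\FirstN}{\GTSetH{\USet}\cup \USet\setminus\{j\}}}{\Bucket{\FirstN}{j}}$ differs from $\uncov{\Bucket{\FirstN}{\USet\setminus\{j\}}}{\Bucket{\FirstN}{j}}$ only through the extra indices in $\GTSetH{\USet}$, whose total rank is small because $\USet$ is \manageable\ (Item~\ref{item:CfManageable} of Definition~\ref{def:Cf} plus the \cf\ structure) — here I would use Item~\ref{item:CfUncov} of Definition~\ref{def:Cf} and Observation~\ref{obs:uncov} to compare. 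Since every \burned\ leaf that lies inside $L$ would violate the hypothesis (it is a subset of $H\subseteq L$ and is \manageable), either there are no \burned\ leaves at all, or the contribution attributed to them is forced to be tiny; in either case the \usefulN\ leaves carry almost all of $\LOPT{\Bucket{\FirstN}{H}}$, hence (by Item~\ref{item:CfLOPT} of Definition~\ref{def:Cf}) a constant fraction of $\LOPT{\Bucket{\FirstN}{L}}$.

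Finally I would assemble the tuple from the \usefulN\ leaves. For each \usefulN\ leaf $\USet$ with witness $\USet^*\subseteq\USet$ (as in Definition~\ref{def:useful}), I set, for $i\in\USet^*$, $\Block{i}$ to be the block of indices within $\USet^*$ that share the same interval (so $\BLOCK$ collects the $\USet^*$'s across all \usefulN\ leaves), $\Gd{i} = \GTSetH{\USet}\cup\USet$ and $\Bd{i} = \GTSetH{\USet^*}\cup(\USet^*\setminus\Block{i})$ — arranged so that the ordering conditions (Items~\ref{item:CTBlock}--\ref{item:BBGStructure} of Definition~\ref{def:tuple}) hold, which they do because the leaves of $\CTree$ sit in nested or disjoint index-intervals and $\GTSetH{\cdot}$ is monotone. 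Item~\ref{item:CTContained} is immediate since everything lives in $H\subseteq L$; Item~\ref{item:CTMinIndexBucket} follows from \manageability\ of $\USet$ (Definition~\ref{def:manageable}) together with the fact that $\Bd{i}\cup\Block{i}\subseteq\GTSetH{\USet}\cup\USet$ has total rank comparable to $\sum_{\ell\in\USet}\rank{\Bucket{\FirstN}{\ell}}$; and Item~\ref{item:CTValue} is exactly the sum over \usefulN\ leaves of Item~\ref{item:usefulUncov} of Definition~\ref{def:useful}, each term being at least $\frac{\LOPT{\Bucket{\FirstN}{\USet}}}{\UsefulDenumCnst\cdot\log\log{\rank{\FirstN}}}$, which sums to at least $\frac{\LOPT{\Bucket{\FirstN}{L}}}{\PreTupleCnst\cdot\log\log{\rank{\FirstN}}}$ once the constants are tracked (the extra $\log\log$ loss coming from the $O(\log\log\rank{\FirstN})$ leaves). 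The main obstacle I anticipate is verifying that the $\Gd{}$/$\Bd{}$/$\Block{}$ assignments genuinely satisfy the rigid ordering and nesting axioms of Definition~\ref{def:tuple} simultaneously for all \usefulN\ leaves — i.e., that the intervals coming from different leaves of $\CTree$ interleave in the way the critical-tuple structure demands — and in reconciling the $\Gd{i}\setminus\{i\}$ in Item~\ref{item:CTValue} with the $\GTSetH{\USet}\cup\USet\setminus\{j\}$ appearing in Definition~\ref{def:useful}, which requires another application of \manageability\ to absorb the discrepancy between $\Gd{i}$ and $\GTSetH{\USet}$.
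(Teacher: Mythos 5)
Your overall plan---build a critical family via Lemma~\ref{lem:Cf}, grow a critical tree via Lemma~\ref{lem:CtreeExists}, sort the leaves into negligible/useful/burnt, show the useful leaves carry a constant fraction of $\LOPT{\Bucket{\FirstN}{H}}$, and then read off the critical tuple from the useful leaves---matches the paper's. Your tuple assembly (using $\Gd{i}=\GTSetH{\USet}\cup\USet$ and a $\Bd{i}$ built from $\GTSetH{\cdot}$ of the leaf or its witness) is close to the paper's, and the leaf-level observation that every leaf inherits \manageability\ from its ancestor in $\mFam$ is correct. However, two claims in the middle are wrong, and one of them is exactly where the theorem is hard.

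First, the bookkeeping claim that $\CTree$ has $O(\log\log\rank{\FirstN})$ vertices or leaves is false. Lemma~\ref{lem:CTreeDepth} bounds only the \emph{depth} by $O(\log\log\rank{\FirstN})$; the tree is binary, and the correct bound on the leaf count is $|H|\le|L|=O(\log\rank{\FirstN})$, which is exponentially larger. The paper does use $|H|$ to bound the negligible contribution, and that is still fine because the per-leaf $\LOPT$ bound has a $\log\rank{\FirstN}$ in the denominator; but your later reasoning implicitly relies on a $\log\log$-scale leaf count.

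Second, and more seriously, your treatment of the burnt leaves is a genuine gap. A burnt leaf $\USet$ does \emph{not} contradict the hypothesis. What you actually get from combining the burnt inequality, Observation~\ref{obs:uncov}, and the hypothesis (applied to the \manageable\ set $\USet$) is only that
$\LOPT{\Bucket{\FirstN}{\USet}} < \doNothingMultInv\cdot\frac{\LOPT{\Bucket{\FirstN}{L}}}{\PreSetCnst\cdot\log\log{\rank{\FirstN}}}$,
i.e., each burnt leaf has small $\LOPT$. Summing this per-leaf bound over up to $O(\log\rank{\FirstN})$ burnt leaves gives roughly $\frac{\log\rank{\FirstN}}{\log\log\rank{\FirstN}}\cdot\LOPT{\Bucket{\FirstN}{L}}$, which is useless --- far larger than $\LOPT{\Bucket{\FirstN}{H}}$. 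So burnt leaves are neither absent nor tiny, and your dichotomy ``no burnt leaves, or their contribution is tiny'' does not hold. The paper instead bounds the total burnt measure by a telescoping decomposition along the tree: it rewrites $\sum_{K\in\QS}\sum_{j\in K}2^j\cdot\uncov{\Bucket{\FirstN}{\GTSet{K}\cup K\setminus\{j\}}}{\Bucket{\FirstN}{j}}$ as (a) the sum over the depth-1 sets $H'\in\mFam$, controlled by applying the hypothesis to each $H'$ (these \emph{are} only $O(\log\log\rank{\FirstN})$ in number, by Lemma~\ref{lem:Cf}), plus (b) the sum, over internal non-root vertices, of the incremental increase when splitting $K$ into $K_1,K_2$, which is small precisely because $K$ is not \usefulN\ so its children fail Item~\ref{item:usefulUncov} of Definition~\ref{def:useful}; summing (b) over depth levels (each level contributing at most $\LOPT{\Bucket{\FirstN}{H}}/(\UsefulDenumCnst\log\log\rank{\FirstN})$ up to constants) and using the $O(\log\log\rank{\FirstN})$ depth bound makes (b) a small constant fraction of $\LOPT{\Bucket{\FirstN}{H}}$. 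Without this telescoping argument the theorem does not go through; it is the heart of the proof and is missing from your proposal.
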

\begin{proof}
By Lemma~\ref{lem:Cf}, there exists a \cf\ $\mathcal{H}$ for $L$.
Let $H = \bigcup_{H'\in \mathcal{H}}H'$.
By Lemma~\ref{lem:CtreeExists}, there exists a \CTreeN\ $\CTree$ for $\mathcal{H}$.
Let $\QS$ be the family containing all the leaves in $\CTree$.
Let $\QS_{\usefulN}$ be the family of all the sets in $\QS$ that are \usefulN\ in $H$. 
Define $\QS_{\burned}$ and $\QS_{\negligible}$ in the same manner.

We construct a tuple $(\BlockN,\GdN,\BdN)$ as follows:
for each $K\in \QS_{\usefulN}$, we pick a subset $K^*\subset K$, that is \usefulN\ for $K$, arbitrarily; then, for each $i\in K^*$, we let
$\Block{i} = K^*$, $\Bd{i} = \GTSetH{K}$ and $\Gd{i} = K\cup \GTSetH{K}$. Finally, 
for every $i$ such that $\Block{i}$ was not defined previously, we let 
$\Block{i} = \Gd{i} = \Bd{i} = \emptyset$.

By construction,
$(\BlockN,\GdN,\BdN)$ satisfies \textbf{Items~\ref{item:CTContained} and~\ref{item:CTMinIndexBucket}} of the theorem and Items~\ref{item:CTBlock},~\ref{item:CTSameBlock},~\ref{item:BlockGood} and~\ref{item:BBGStructure} of the definition of a \CT\ (Definition~\ref{def:tuple}).
By the definition of a \splittable\ set and the definition of a \CTreeN, 
for every pair of leaves $K,K'$ of $\CTree$,
either $\max{K}< \min{K'}$ or $\min{K} > \max{K'}$ and hence also $K\cap K' =\emptyset$.
Thus, by construction,
$(\BlockN,\GdN,\BdN)$ also satisfies Items~\ref{item:CTBlocks} and~\ref{item:CTLO} of the definition of a \CT.
Consequently, $(\BlockN,\GdN,\BdN)$ is a \CT.

By the construction of $(\BlockN,\GdN,\BdN)$ and the
 definition of \usefulN, to prove Item~\ref{item:CTValue} it is sufficient to show that
$\LOPT{\bigcup_{K\in \QS_{\usefulN}}\Bucket{\FirstN}{K}} \geq \portionOfUseful\cdot\LOPT{\Bucket{\FirstN}{H}}$ since,
by Item~\ref{item:CfLOPT} of Definition~\ref{def:Cf},
this implies that 
$\LOPT{\bigcup_{K\in \QS_{\usefulN}}\Bucket{\FirstN}{K}} \geq
\portionOfUsefulPostStrng\cdot\LOPT{\Bucket{\FirstN}{L}}$.
By Item~\ref{item:LOPT2} of Observation~\ref{obs:LOPT},
$\LOPT{\bigcup_{K\in \QS_{\usefulN}}\Bucket{\FirstN}{K}}$ is at least
$\LOPT{\bigcup_{K\in \QS}\Bucket{\FirstN}{K}} 
-
\LOPT{\bigcup_{K\in \QS_{\burned}}\Bucket{\FirstN}{K}} 
- 
\LOPT{\bigcup_{K\in \QS_{\negligible}}\Bucket{\FirstN}{K}}.
 $
To complete the proof, we bound each term in the preceding expression.

By the definition of a \CTreeN, $\bigcup_{K\in \QS}K = H$. 
Hence,
$\LOPT{\bigcup_{K\in \QS}\Bucket{\FirstN}{K}} = 
\LOPT{\Bucket{\FirstN}{H}}.
$
Since the sets in $\QS_{\negligible}$
are subsets of $H$, pairwise disjoint and not-empty, we see that $|\QS| \leq |H|$.
Thus, by the definition of \negligible,
$\LOPT{\bigcup_{K\in \QS_{\negligible}}\Bucket{\FirstN}{K}} \leq 
|H|\cdot \frac{\LOPT{\Bucket{\FirstN}{H}}}{\preLOPTLBDiv\cdot\log{\rank{\FirstN}}}$.
As $H\subseteq L$,
$|H|\leq |L| \leq \valuableRangeMult\cdot\log{\rank{\FirstN}}\PValuableIndexLBAdd$.
Consequently, 
$\LOPT{\bigcup_{K\in \QS_{\negligible}}\Bucket{\FirstN}{K}} \leq 
\portionOfNegligible\cdot \LOPT{\Bucket{\FirstN}{H}}.$
We next bound $\LOPT{\bigcup_{K\in \QS_{\burned}}\Bucket{\FirstN}{K}}$.

By the definition of \burned,
$\LOPT{\bigcup_{K\in \QS_{\burned}}\Bucket{\FirstN}{K}}\leq \doNothingMultInv\cdot \sum_{K\in\QS}\sum_{j\in K}2^j\cdot
  				   				\uncov{\Bucket{\FirstN}{\GTSet{K}\cup K\setminus\{j\}}}{\Bucket{\FirstN}{j}}$.
This in turn is bounded above by the sum of:
\begin{enumerate}[(a)]
\item\label{sum:One} $\sum_{H'\in \mFam}\sum_{j\in H'}2^j\cdot
  				\uncov{\Bucket{\FirstN}{\GTSet{H'}\cup H'\setminus\{j\}}}{\Bucket{\FirstN}{j}}$ and
\item\label{sum:Two} sum over every internal non-root vertex $K$, with children $K_1$ and $K_2$, of  
$\sum_{\ell=1}^{2}\sum_{j\in K_\ell}2^j\cdot
  				\uncov{\Bucket{\FirstN}{\GTSet{K_\ell}\cup K_\ell\setminus\{j\}}}{\Bucket{\FirstN}{j}} -
  				 \sum_{j\in K}2^j\cdot
  				   				\uncov{\Bucket{\FirstN}{\GTSet{K}\cup K\setminus\{j\}}}{\Bucket{\FirstN}{j}}	
  				$.
\end{enumerate}
We note that sum (\ref{sum:Two}) is the additional \uncovN\ measure because of the difference between the \uncovN\ of the children and their parent.
 
By construction, every $H'\in \mFam$ is \manageable\ and therefore,
by assumption, 
$\sum_{j\in H'}2^j\cdot
  				\uncov{\Bucket{\FirstN}{\GTSet{H'}\cup H'\setminus\{j\}}}{\Bucket{\FirstN}{j}}\leq
  				\sum_{j\in H'}2^j\cdot
  				  				\uncov{\Bucket{\FirstN}{ H'\setminus\{j\}}}{\Bucket{\FirstN}{j}}\leq \frac{\LOPT{\Bucket{\FirstN}{L}}}{\PreSetCnst\cdot \log\log{\rank{\FirstN}}}$,
  				where the first inequality follows from Observation~\ref{obs:uncov}.
Thus, the value of (\ref{sum:One}) is bounded above by
$\frac{|\mFam|\cdot\LOPT{\Bucket{\FirstN}{L}}}{\PreSetCnst\cdot \log\log{\rank{\FirstN}}}$.
Because
$|\mFam| \leq \PartitionCardCnst\cdot \log\log{\rank{\FirstN}}$,
by Lemma~\ref{lem:Cf}, we see that
$\frac{|\mFam|\cdot\LOPT{\Bucket{\FirstN}{L}}}{\PreSetCnst\cdot \log\log{\rank{\FirstN}}} \leq
\portionOfBurnedDepthOnePre\cdot \LOPT{\Bucket{\FirstN}{L}}
$.
This implies that the value of (\ref{sum:One}) is bounded above by $\portionOfBurnedDepthOne\cdot \LOPT{\Bucket{\FirstN}{H}}$,
by Item~\ref{item:CfLOPT} of Definition~\ref{def:Cf}.

By construction, every internal non-root vertex $K$ of $\CTree$ is  \splittable\ and not \usefulN.
Therefore its children do not satisfy Item~\ref{item:usefulUncov} of the definition of \usefulN. 
Hence, the value of (\ref{sum:Two}) is bounded above by the sum of 
$\frac{2\cdot\LOPT{\Bucket{\FirstN}{K}}}{\UsefulDenumCnst\cdot\log\log{\rank{\FirstN}}}$
 over every internal non-root vertex $K$ of $\CTree$.
The sum of $\LOPT{\Bucket{\FirstN}{K}}$ over all such vertices at any given depth is at most $\LOPT{\Bucket{\FirstN}{H}}$.
So, by Lemma~\ref{lem:CTreeDepth}, the value of (\ref{sum:Two}) does not exceed 
$\frac{2\cdot\preDepthMult\cdot\log\log{\rank{\FirstN}}\cdot \LOPT{\Bucket{\FirstN}{H}}}{\UsefulDenumCnst\cdot\log\log{\rank{\FirstN}}}$.
Consequently, (\ref{sum:Two}) is bounded above by
$\portionOfBurned\cdot \LOPT{\Bucket{\FirstN}{H}}$ and the result follows.
\end{proof}

\section{Main Result}\label{sec:MainResult}
The main result in this section is 
Theorem~\ref{thm:main}, which states that the \MA\ indeed has the claimed \cor.
The proof of the theorem provides the details of how the \MA\ works and utilizes Theorems~\ref{thm:greedy},~\ref{thm:tuple},~\ref{thm:Concentrations} and~\ref{thm:Tradeoff}.

One of the crucial details of the proof is that the \MA\ only involves a subset of the buckets.
Specifically, those that belong to the set $\Valuable$, defined as follows: 

\begin{definition}\textbf{[\Valuable, $L$, $L'$]}\label{def:Valuable}
  We define 
  $\Valuable = \left\{ j \Big| \rank{\Bucket{\FirstN}{j}} 
  > 
  \left(2^{-j\MValuableCnst}\cdot\LOPT{\FirstN}\right)^{\ValuablePwr}\right\},$
 $L = \{i\in \Valuable \mid i < \log{\LOPT{\FirstN}}-\valuableRankSplitMult\cdot \log\log{\rank{\FirstN}}\}$ and
$L' = \Valuable\setminus L$.  
\end{definition}

The importance of $\Valuable,L$ and $L'$, as implied by Lemma~\ref{lem:Valuable} below, is that 
Item~\ref{item:SingleBucket} of Theorem~\ref{thm:Concentrations} applies to every bucket in $L'$, and
both Theorem~\ref{thm:Concentrations} and Theorem~\ref{thm:Tradeoff} apply to the relevant subsets of $L$.
The next result, Lemma~\ref{lem:slack}, is required in order to bound the influence of the deviation in Theorem~\ref{thm:Concentrations}.

\begin{lemma}~\label{lem:Valuable}
If the event of Theorem~\ref{thm:Concentrations} holds, then
  \begin{enumerate}
  \item\label{item:ValuableCardinality}
  $|L| \leq \valuableRangeMult\cdot\log{\rank{\FirstN}} \PValuableIndexLBAdd$,  
  \item \label{item:ValuableFirstLast} for every $j\in \Super$, $\singleFirstLastRankRatio\cdot\rank{\Bucket{\LastN}{j}} \geq \rank{\Bucket{\FirstN}{j}}\geq \SuperSingleBucketFirstPortion\cdot \rank{\BucketU{j}}$,  
  \item\label{item:KOneKTwo}
  	if $\LOPT{\Bucket{\FirstN}{L'}} < \KOnePortion\cdot\LOPT{\UU}$, then
  $\LOPT{\Bucket{\FirstN}{L}} \geq \KTwoPortion\cdot\LOPT{\UU}$ and
  \item\label{item:SuperValuable} 
    $\Valuable\subseteq \Super$.
  \end{enumerate}
\end{lemma}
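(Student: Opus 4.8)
The plan is to prove the four items essentially independently, as each is a calculation that unwinds the relevant definition ($\Valuable$, $\Super$, $L$) under the assumption that the event of Theorem~\ref{thm:Concentrations} holds, so in particular $4\cdot\rank{\FirstN} > \rank{\UU}$ (Item~\ref{item:rankFirst} there) and Item~\ref{item:SingleBucket} of that theorem is available for every $i\in\Super$. I would first record the two basic facts that will be used repeatedly: $\LOPT{\FirstN}\le \LOPT{\UU}$, and, since $\FirstN$ is roughly half of $\UU$, $\LOPT{\Bucket{\FirstN}{}} \ge \frac12\LOPT{\UU}$ up to the concentration slack — more precisely, combining $4\rank{\FirstN}>\rank{\UU}$ with Item~\ref{item:SingleBucket} of Theorem~\ref{thm:Concentrations} bucket by bucket gives a two-sided comparison between $\rank{\Bucket{\FirstN}{j}}$ and $\rank{\BucketU{j}}$, hence between $\LOPT{\Bucket{\FirstN}{}}$ and $\LOPT{\UU}$, for all buckets that are large enough (which is exactly what membership in $\Super$ guarantees).

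For Item~\ref{item:SuperValuable}, $\Valuable\subseteq\Super$, I would take $j\in\Valuable$, so $\rank{\Bucket{\FirstN}{j}} > (2^{-j\MValuableCnst}\LOPT{\FirstN})^{\ValuablePwr}$, and show this forces $\rank{\BucketU{j}} > (2^{-j\PSuperCnst}\LOPT{\UU})^{\SuperPwr}$. Since $\rank{\BucketU{j}}\ge\rank{\Bucket{\FirstN}{j}}$ and the exponents $\ValuablePwr=\SuperPwr=\frac34$ agree, it suffices to check that the inner quantity only shrinks by a controlled factor when replacing $\FirstN$ by $\UU$ and $\MValuableCnst=-1$ by $\PSuperCnst=-4$; the gap between the additive constants $\maxWeightPower$ in the two defining thresholds, and the fact $\LOPT{\FirstN}\ge\frac12\LOPT{\UU}$ up to slack, absorb the loss. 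Item~\ref{item:ValuableCardinality} then follows: every $j\in L\subseteq\Valuable\subseteq\Super$ has $\rank{\Bucket{\FirstN}{j}}>1$, and by definition of $L$ the indices $j$ lie below $\log\LOPT{\FirstN}$; a geometric summation of the lower bounds $(2^{-j\MValuableCnst}\LOPT{\FirstN})^{\ValuablePwr}$ against the trivial ceiling $\sum_j 2^j\rank{\Bucket{\FirstN}{j}}\le\LOPT{\Bucket{\FirstN}{L}}\le\LOPT{\FirstN}$ shows $L$ cannot contain more than about $\valuableRangeMult\log\rank{\FirstN}$ indices (this is the standard "few heavy buckets" argument, identical in spirit to the start of the proof of Lemma~\ref{lem:Cf}).

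For Item~\ref{item:ValuableFirstLast}, I would apply Item~\ref{item:SingleBucket} of Theorem~\ref{thm:Concentrations}, $|\rank{\Bucket{\FirstN}{j}}-\rank{\Bucket{\LastN}{j}}| < \TalagrandMultCnst\cdot\rank{\BucketU{j}}^{\singleUnionPowerPHalf}$, and combine it with the lower bound $\rank{\BucketU{j}}\ge\rank{\Bucket{\FirstN}{j}}$ together with the fact that $j\in\Super$ makes $\rank{\BucketU{j}}$ polynomially large in $\LOPT{\UU}$, so the additive $\rank{\BucketU{j}}^{\singleUnionPowerPHalf}$ deviation is a lower-order term relative to $\rank{\BucketU{j}}$ itself; since $\rank{\BucketU{j}}=\rank{\Bucket{\FirstN}{j}}+\rank{\Bucket{\LastN}{j}}$ up to such a deviation, this pins both $\rank{\Bucket{\FirstN}{j}}$ and $\rank{\Bucket{\LastN}{j}}$ to within a constant factor of $\rank{\BucketU{j}}$, giving the ratios $\singleFirstLastRankRatio$ and $\SuperSingleBucketFirstPortion$ with room to spare. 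Finally, Item~\ref{item:KOneKTwo} is pure bookkeeping over $\Valuable=L\sqcup L'$: I would argue that the buckets outside $\Valuable$ contribute only a small fraction of $\LOPT{\UU}$ — each such bucket $i$ has $\rank{\Bucket{\FirstN}{i}}$ below the $\Valuable$-threshold, and summing $2^i$ times those thresholds geometrically, as in Item~\ref{item:ValuableCardinality}, bounds $\LOPT{\UU}-\LOPT{\Bucket{\FirstN}{\Valuable}}$ by a small constant multiple of $\LOPT{\UU}$ (using $\LOPT{\Bucket{\FirstN}{}}\approx\LOPT{\UU}$ from Item~\ref{item:ValuableFirstLast} applied to $\Super\supseteq\Valuable$, and $\LOPT{\UU}\ge\LOPT{\Bucket{\FirstN}{}}$ trivially) — so that $\LOPT{\Bucket{\FirstN}{\Valuable}}\ge(1-o(1))\LOPT{\UU}$; then if $\LOPT{\Bucket{\FirstN}{L'}} < \KOnePortion\LOPT{\UU}$, the complementary part $\LOPT{\Bucket{\FirstN}{L}}=\LOPT{\Bucket{\FirstN}{\Valuable}}-\LOPT{\Bucket{\FirstN}{L'}}$ is at least $\KTwoPortion\LOPT{\UU}$ by arithmetic.

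The main obstacle is getting the constants to line up: every item trades one defining threshold ($\Valuable$ vs.\ $\Super$, or the cutoff defining $L$) against another through the concentration slack of Theorem~\ref{thm:Concentrations} and the factor-$4$ slack in $\rank{\FirstN}$ vs.\ $\rank{\UU}$, and one has to verify that the additive logarithmic shifts ($\valuableRankSplitMult\log\log\rank{\FirstN}$, $\maxWeightPower$, etc.) and the polynomial exponents ($\ValuablePwr$, $\SuperPwr$, $\singleUnionPowerPHalf$, $\manageablePower$) are chosen so that the slack is genuinely lower-order; this is where the bound $\rank{\UU}>2^{\maxWeightPower}$ and Assumption~\ref{ass:MaxVal} are needed, and it is the only place real care is required.
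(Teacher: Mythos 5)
Your overall route is the paper's: the same four-way split, the same use of Item~\ref{item:SingleBucket} of Theorem~\ref{thm:Concentrations} combined with the subadditivity bound $\rank{\Bucket{\FirstN}{j}}+\rank{\Bucket{\LastN}{j}}\geq\rank{\BucketU{j}}$ for Item~\ref{item:ValuableFirstLast}, the same interval/geometric-threshold argument for Item~\ref{item:ValuableCardinality}, the same geometric-tail summation to discard non-$\Super$ and non-$\Valuable$ buckets, and the same subtraction arithmetic for Item~\ref{item:KOneKTwo}.

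There is, however, one concrete quantitative error that would sink Item~\ref{item:KOneKTwo} as you have written it. You claim $\LOPT{\FirstN}\approx\LOPT{\UU}$ (earlier ``$\geq\frac12\LOPT{\UU}$ up to slack'') and conclude $\LOPT{\Bucket{\FirstN}{\Valuable}}\geq(1-o(1))\cdot\LOPT{\UU}$. This is false: Item~\ref{item:SingleBucket} of Theorem~\ref{thm:Concentrations} only pins the split of each $\Super$ bucket to be roughly even, so the best it yields is $\rank{\Bucket{\FirstN}{j}}\geq\SuperSingleBucketFirstPortion\cdot\rank{\BucketU{j}}$ (a factor $\SuperSingleBucketFirstPortion$, not $1-o(1)$); roughly half of each bucket genuinely sits in $\LastN$ and is lost. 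Chaining the correct constants as the paper does gives $\LOPT{\FirstN}\geq\FirstPortionInUU\cdot\LOPT{\UU}$ (after discarding the non-$\Super$ tail) and then $\LOPT{\Bucket{\FirstN}{\Valuable}}\geq\valuablePortion\cdot\LOPT{\FirstN}$ (after discarding the non-$\Valuable$ tail), i.e.
\[
\LOPT{\Bucket{\FirstN}{\Valuable}}\;\geq\;\valuablePortion\cdot\FirstPortionInUU\cdot\LOPT{\UU}\;=\;\KPortionTT\cdot\LOPT{\UU},
\]
and this is \emph{exactly} $\KOnePortion+\KTwoPortion$: there is zero slack in the arithmetic you defer to at the end. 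So the conclusion is salvageable, but only because the constants $\KOnePortion,\KTwoPortion$ were calibrated to this precise bound; the ``$(1-o(1))$'' intermediate claim is unobtainable and the step that actually needs checking --- that the product of the two constant-factor losses still clears $\KOnePortion+\KTwoPortion$ --- is the one you have skipped. The same (milder) overclaim appears in your Item~\ref{item:SuperValuable}: you need $\LOPT{\FirstN}\geq 2^{-3}\cdot\LOPT{\UU}$ to absorb the shift from $2^{-j\MValuableCnst}$ to $2^{-j\PSuperCnst}$, which does follow from $\FirstPortionInUU$ but not from an unproven ``$\frac12$ up to slack.''
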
 

\begin{proof}
We first prove Item~\ref{item:ValuableCardinality}.
Using Definition~\ref{def:LOPT},
 $\max{L} \leq \log{\LOPT{\FirstN}}$.
By Definition~\ref{def:Valuable}, 
$\left(2^{-\min{L}\MValuableCnst}\cdot\LOPT{\FirstN}\right)^{\ValuablePwr} \leq \rank{\Bucket{\FirstN}{\min{L}}} \leq \rank{\FirstN}$ and hence
$\min{L} \geq \log{\LOPT{\FirstN}} - \valuableRangeMult\cdot\log{\rank{\FirstN}}\PValuableIndexLBAddMPO.$
Since $L$ contains only integers,  \textbf{Item~\ref{item:ValuableCardinality}} follows. 

Let $j\in\Super$.
By the definition of $\Super$ (Definition~\ref{def:SuperBuckets}),
$\rank{\BucketU{j}} \geq \left(2^{-j\PSuperCnst}\cdot\LOPT{\UU}\right)^{\SuperPwr}$.
Thus, as 
$j <  \log\LOPT{\UU}-\maxWeightPower$, we see that
 $\rank{\BucketU{j}} > 2^{\singleSuperRankLB}$.
This implies that
$\TalagrandMultCnst\cdot\rank{\BucketU{j}}^{\singleUnionPowerPHalf} \leq \frac{1}{2}\cdot\rank{\BucketU{j}}$.
By Item~\ref{item:SingleBucket} of Theorem~\ref{thm:Concentrations},
$\Big|\rank{\Bucket{\LastN}{j}} - \rank{\Bucket{\FirstN}{j}}\Big| \leq 
\TalagrandMultCnst\cdot\rank{\BucketU{j}}^{\singleUnionPowerPHalf}$.
By Item~\ref{item:MatroidUnion} of Proposition~\ref{prop:Matroid}, we also have
$\rank{\Bucket{\LastN}{j}} + \rank{\Bucket{\FirstN}{j}} \geq 
\rank{\BucketU{j}}$.
The preceding three inequalities imply that  \textbf{Item~\ref{item:ValuableFirstLast}} holds.

Suppose first that
$\LOPT{\Bucket{\FirstN}{\Valuable}} \geq  \KPortionTT\cdot \LOPT{\UU}$.
Then, \textbf{Item~\ref{item:KOneKTwo}} holds and
 $\Valuable \neq \emptyset$.
Since $\LOPT{\FirstN} \geq \LOPT{\Bucket{\FirstN}{\Valuable}}$, using Definition~\ref{def:Valuable}, for every $i\in \Valuable$,
$\rank{\BucketU{i}} \geq \rank{\Bucket{\FirstN}{i}}>\left(2^{-i\PSuperCnst}\cdot\LOPT{\UU}\right)^{\SuperPwr}$.
Hence, by Definition~\ref{def:SuperBuckets}, \textbf{Item~\ref{item:SuperValuable}} follows.

Finally we prove that $\LOPT{\Bucket{\FirstN}{\Valuable}} \geq  \frac{1}{4}\cdot \LOPT{\UU}$. 
We do so by first proving that $\LOPT{\FirstN}\geq \FirstPortionInUU \cdot \LOPT{\UU}$  and then that 
$\LOPT{\Bucket{\FirstN}{\Valuable}}
  \geq \valuablePortion\cdot\LOPT{\FirstN}$.

Let $J^*$ be the set of all integers $i$ such that 
 $1 \leq \rank{\BucketU{i}} \leq \left(2^{-i\PSuperCnst}\cdot\LOPT{\UU}\right)^{\SuperPwr}$.
Thus, 
$\LOPT{\BucketU{J^*}}\leq \sum_{i\in J^*}2^i\cdot\left(2^{-i\PSuperCnst}\cdot\LOPT{\UU}\right)^{\SuperPwr}
=
\sum_{i\in J^*}\left(2^{\frac{i}{\SuperPwrEnum}\PSuperCnst}\cdot\LOPT{\UU}\right)^{\SuperPwr} 
 $. 
 This is less than $
 \sum_{\ell \geq 0}\left(2^{\frac{\max J^* - \ell}{\SuperPwrEnum}\PSuperCnst}\cdot\LOPT{\UU}\right)^{\SuperPwr} \leq
 \LOPT{\UU}\cdot\sum_{\ell> 0}2^{-\frac{\ell}{4}-\maxWeightPowerTSUPTTSP\PSuperCnstTSP} <  \frac{1}{\superPortionDenom}\cdot\LOPT{\UU}$, 
 because  $\max{J^*}
  < \log\LOPT{\UU}-\maxWeightPower$, by Assumption~\ref{ass:MaxVal}.
Now, since $\Super\cup J^*$ contains the indices of all non-empty \bucketN s, 
$\LOPT{\BucketU{\Super}}
\geq
\LOPT{\UU} - \LOPT{\BucketU{J^*}} > \superPortion\cdot\LOPT{\UU}.$ 
Thus, by Item~\ref{item:ValuableFirstLast} and Definition~\ref{def:LOPT},
$\LOPT{\FirstN} \geq \LOPT{\Bucket{\FirstN}{\Super}} \geq \SuperSingleBucketFirstPortion\cdot \LOPT{\BucketU{\Super}}
\geq \FirstPortionInUU \cdot \LOPT{\UU}.
$

Let $J$ be the set of all integers $i$ such that 
 $1 \leq \rank{\Bucket{\FirstN}{i}} \leq \left(2^{-i\MValuableCnst}\cdot\LOPT{\FirstN}\right)^{\ValuablePwr}$.
Thus, 
$\LOPT{\Bucket{\FirstN}{J}}\leq \sum_{i\in J}2^i\cdot\left(2^{-i\MValuableCnst}\cdot\LOPT{\FirstN}\right)^{\ValuablePwr}=\sum_{i\in J}\left(2^{\frac{i}{\ValuablePwrEnum}\MValuableCnst}\cdot\LOPT{\FirstN}\right)^{\ValuablePwr} 
 $. 
 This is less than 
 $\sum_{\ell\in J}
 \left(2^{\frac{\max{J}-\ell}{\ValuablePwrEnum}\MValuableCnst}\cdot\LOPT{\FirstN}\right)^{\ValuablePwr}
= 
 \LOPT{\FirstN}\cdot\sum_{\ell>0}2^{-\frac{\ell}{\ValuablePwrDenom}-\maxWeightPowerMTTSUPTTSP\MValuableCnstTVP} <  \frac{1}{\valuablePortionDenom}\cdot\LOPT{\FirstN}$, 
 because $\max{J} <  \log\LOPT{\FirstN}-\maxWeightPower$, by Assumption~\ref{ass:MaxVal}.
Consequently, since $\Valuable\cup J$ contains all the indices of non-empty buckets,  
  $\LOPT{\Bucket{\FirstN}{\Valuable}}
  \geq
  \LOPT{\FirstN} - \LOPT{\Bucket{\FirstN}{J}} > \valuablePortion\cdot\LOPT{\FirstN}.$ 

\end{proof}

\begin{lemma}\label{lem:slack}
Suppose that $\rank{\FirstN} >  2^{\maxWeightPowerMT}$.
Let $K\subseteq L$
and  $(\BlockN,\GdN,\BdN)$ be a \CT.
If
Item~\ref{item:ValuableFirstLast} of Lemma~\ref{lem:Valuable} holds,
 $K$ is \manageable\ and $(\BlockN,\GdN,\BdN)$
 satisfies Items~\ref{item:CTContained} and~\ref{item:CTMinIndexBucket} of Theorem~\ref{thm:Tradeoff} then,
\begin{enumerate}
\item\label{item:SlackManagable}
 $
 \TalagrandMultCnstTT\cdot\rank{\BucketU{K}}^{\simpleGapRankExp}\cdot\sum_{i\in K}
  2^i
 < 
\frac{\LOPT{\FirstN}}{\SlackCnst\cdot \log\log{\rank{\FirstN}}}.$
\item\label{item:SlackTuple}
 $
 \TalagrandMultCnstTTPAzzumaDevMultCnstTT\cdot\sum_{i\in \BLOCK}  2^i\cdot\rank{\BucketU{\Bd{i}\cup \Block{i}}}^{\simpleGapRankExp}
 < 
\frac{\LOPT{\FirstN}}{\SlackCnst\cdot \log\log{\rank{\FirstN}}}.$
\end{enumerate}
\end{lemma}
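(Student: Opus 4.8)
The plan is to prove both inequalities by the same argument. Each left-hand side is a sum of terms $c\cdot 2^{i}\cdot\rank{R_i}^{\simpleGapRankExp}$ with $R_i$ a union of buckets whose index set lies in $L$. The governing observation is that $\rank{R}^{\simpleGapRankExp}$ is a vanishing fraction of $\rank{R}$ once $\rank{R}$ is large, while the manageability-type hypotheses force each individual bucket rank to be at least the $\manageablePower$-power of the sum of all the relevant bucket ranks; hence each term is a vanishing fraction of $2^{i}\rank{\Bucket{\FirstN}{i}}$ summed, i.e.\ of the $\LOPTN$ of the corresponding $\FirstN$-buckets, which is at most $\LOPT{\FirstN}$. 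The only quantitative point is that this vanishing fraction must beat $1/(\SlackCnst\cdot\log\log\rank{\FirstN})$, and this is exactly what passing from $\Valuable$ to $L$ provides: discarding the high-value buckets converts the defining inequality of $\Valuable$ into a real lower bound on the bucket ranks.

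For Item~\ref{item:SlackManagable}, set $m=\sum_{i\in K}\rank{\Bucket{\FirstN}{i}}$. First, since $K\subseteq L\subseteq\Valuable\subseteq\Super$ by Item~\ref{item:SuperValuable} of Lemma~\ref{lem:Valuable}, Item~\ref{item:ValuableFirstLast} gives $\rank{\BucketU{i}}\le 4\rank{\Bucket{\FirstN}{i}}$ for $i\in K$, so by subadditivity of $\rankN$, $\rank{\BucketU{K}}\le 4m$. Second, manageability of $K$ gives $\rank{\Bucket{\FirstN}{j}}\ge(m/2)^{\manageablePower}$ for every $j\in K$, whence $\LOPT{\Bucket{\FirstN}{K}}=\sum_{i\in K}2^{i}\rank{\Bucket{\FirstN}{i}}\ge(m/2)^{\manageablePower}\sum_{i\in K}2^{i}$. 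Multiplying the two estimates, $\rank{\BucketU{K}}^{\simpleGapRankExp}\sum_{i\in K}2^{i}\le 4^{3/4}\,2^{8/9}\,m^{-5/36}\,\LOPT{\Bucket{\FirstN}{K}}\le 4^{3/4}\,2^{8/9}\,m^{-5/36}\,\LOPT{\FirstN}$ (as $\simpleGapRankExp-\manageablePower=-5/36$). Finally $m$ is large: for any $j\in K\subseteq L$ one has $j<\log\LOPT{\FirstN}-\valuableRankSplitMult\cdot\log\log\rank{\FirstN}$, hence $2^{-j-1}\LOPT{\FirstN}>\tfrac12(\log\rank{\FirstN})^{\valuableRankSplitMult}$, and since $j\in\Valuable$ this forces $m\ge\rank{\Bucket{\FirstN}{j}}>\bigl(2^{-j-1}\LOPT{\FirstN}\bigr)^{3/4}>2^{-3/4}(\log\rank{\FirstN})^{384}$. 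Since $\rank{\FirstN}>2^{\maxWeightPowerMT}$, this lower bound on $m$ makes $m^{5/36}$ exceed $\TalagrandMultCnstTT\cdot 4^{3/4}\cdot 2^{8/9}\cdot\SlackCnst\cdot\log\log\rank{\FirstN}$, which is Item~\ref{item:SlackManagable}.

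For Item~\ref{item:SlackTuple}, I would first reduce to the previous computation. Group the summation index $i\in\BLOCK$ by the value of $\Block{i}$: on each group $\BdN$, $\GdN$, $\BlockN$ are constant (Item~\ref{item:CTSameBlock} of Definition~\ref{def:tuple}) and the group is contained in its common block (Item~\ref{item:CTBlock}); writing a distinct block as $B$ and the corresponding common value of $\BdN$ as $D$, the sum is therefore at most $\sum_{B}\rank{\BucketU{D\cup B}}^{\simpleGapRankExp}\sum_{i\in B}2^{i}$. For each $B$, $D\cup B\subseteq L$ by Item~\ref{item:CTContained} of Theorem~\ref{thm:Tradeoff}, so as above $\rank{\BucketU{D\cup B}}\le 4m_B$ with $m_B=\sum_{\ell\in D\cup B}\rank{\Bucket{\FirstN}{\ell}}$; Item~\ref{item:CTMinIndexBucket} of Theorem~\ref{thm:Tradeoff} now plays the role of manageability, yielding $\rank{\Bucket{\FirstN}{j}}>(m_B/8)^{\manageablePower}$ for $j\in B$ and hence $\LOPT{\Bucket{\FirstN}{B}}\ge(m_B/8)^{\manageablePower}\sum_{i\in B}2^{i}$; and the $L$-membership bound of the previous paragraph applies to every element of $B$, so $m_B>2^{-3/4}(\log\rank{\FirstN})^{384}$. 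Thus $B$ contributes at most $4^{3/4}\,8^{8/9}\,m_B^{-5/36}\,\LOPT{\Bucket{\FirstN}{B}}$, and since distinct blocks are pairwise disjoint index sets (Item~\ref{item:badblockDisjoint} of Observation~\ref{obs:badblock}) iterating Item~\ref{item:LOPT2} of Observation~\ref{obs:LOPT} bounds $\sum_B\LOPT{\Bucket{\FirstN}{B}}$ by $\LOPT{\FirstN}$; summing, and using $\rank{\FirstN}>2^{\maxWeightPowerMT}$ to absorb $2^4\cdot 4^{3/4}\cdot 8^{8/9}$, finishes the proof.

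The step I expect to be the real obstacle is the last one in each part: certifying that $m$, respectively every $m_B$, is large — at least a fixed power of $\log\rank{\FirstN}$ — so that the $m^{-5/36}$ factor genuinely beats $1/(\SlackCnst\cdot\log\log\rank{\FirstN})$. Without the cutoff in the definition of $L$ this would fail, since an index in $\Valuable$ may be arbitrarily negative, making $2^{-j-1}\LOPT{\FirstN}$ tiny and the defining inequality of $\Valuable$ vacuous; it is the removal of the high-value buckets that turns the assertion that $\rank{\Bucket{\FirstN}{j}}$ exceeds the $3/4$-power of $2^{-j-1}\LOPT{\FirstN}$ into a genuine polylogarithmic-in-$\rank{\FirstN}$ lower bound. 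Once that is in hand, the remainder is routine bookkeeping with the constants $\TalagrandMultCnstTT$, $2^4$, $\SlackCnst$ and the exponents $3/4$ and $\manageablePower$.
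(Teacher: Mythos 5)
Your proof is correct and follows essentially the same route as the paper's: use Item~\ref{item:ValuableFirstLast} of Lemma~\ref{lem:Valuable} to replace $\rank{\BucketU{\cdot}}$ by (four times) the corresponding sum of $\FirstN$-bucket ranks, use manageability (resp.\ Item~\ref{item:CTMinIndexBucket} of Theorem~\ref{thm:Tradeoff}) to trade that sum for a power of an individual bucket rank so that each term becomes a negative power of the bucket-rank sum times $2^i\rank{\Bucket{\FirstN}{i}}$, sum to get $\LOPT{\FirstN}$ times that negative power, and finally lower-bound the bucket-rank sum polylogarithmically in $\rank{\FirstN}$ by combining the defining inequality of $\Valuable$ with the index cutoff defining $L$. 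The only cosmetic differences are that you carry the exact exponent $-5/36$ where the paper uniformly extracts a $\min\{\rank{\Bucket{\FirstN}{i}}\}^{-\slackPwr}$ factor, and that in Item~\ref{item:SlackTuple} you group the sum by blocks rather than bounding it index-by-index.
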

\begin{proof}
Let $c^{-1} = \min\{\rank{\Bucket{\FirstN}{i}}\mid i\in K\cup L\}^{\slackPwr}$.
By the properties of Matroids, Definition~\ref{def:manageable} and
 Item~\ref{item:ValuableFirstLast} of Lemma~\ref{lem:Valuable},
\begin{equation}\label{equ:SlackOne}
 \begin{array}{rlc}
 \TalagrandMultCnstTT\cdot\rank{\BucketU{K}}^{\simpleGapRankExp}\cdot\sum_{i\in K} 2^i 
& \leq 
 \TalagrandMultCnstTT\cdot\sum_{i\in K} 2^i\cdot\left(\sum_{j\in K}\rank{\BucketU{j}}\right)^{\simpleGapRankExp}\\
& \leq 
 \TalagrandMultCnstTT\cdot \sum_{i\in K}2^i\cdot\left(\SuperSingleBucketFirstPortionInv\cdot\sum_{j\in K} \rank{\Bucket{\FirstN}{j}}\right)^{\simpleGapRankExp} \\
& <
 \slackMultCnst\cdot\sum_{i\in K}2^i\cdot\rank{\Bucket{\FirstN}{i}}^{\slackPwrOM} \\
& \leq
 \slackMultCnst\cdot
  \min\{\rank{\Bucket{\FirstN}{i}}\mid i\in K\}^{-\slackPwr}
 \cdot\sum_{i\in K}2^i\cdot\rank{\Bucket{\FirstN}{i}} \\
& \leq
 \slackMultCnst\cdot c\cdot\LOPT{\FirstN}
  \end{array}
\end{equation}
By the properties of Matroids,  Item~\ref{item:ValuableFirstLast} of Lemma~\ref{lem:Valuable}, and Items~\ref{item:CTContained} and~\ref{item:CTMinIndexBucket} of Theorem~\ref{thm:Tradeoff},
\begin{equation}\label{equ:SlackTwo}
 \begin{array}{rlc}
 \TalagrandMultCnstTTPAzzumaDevMultCnstTT\cdot\sum_{i\in \BLOCK}  2^i\cdot\rank{\BucketU{\Bd{i}\cup \Block{i}}}^{\simpleGapRankExp}  
& \leq  
\TalagrandMultCnstTTPAzzumaDevMultCnstTT\cdot\sum_{i\in \BLOCK}  2^i\cdot
 \left(\SuperSingleBucketFirstPortionInv\cdot\sum_{j\in\Bd{i}\cup \Block{i}}\rank{\Bucket{\FirstN}{j}}\right)^{\simpleGapRankExp} \\
& <
  \slackMultCnst\cdot\sum_{i\in \BLOCK}  2^i\cdot\rank{\Bucket{\FirstN}{i}}^{\slackPwrOM} \\
&   \leq 
   \slackMultCnst\cdot c\cdot\LOPT{\FirstN}.
 \end{array}
\end{equation}
\noindent
By Definition~\ref{def:Valuable},  
$$c^{-1} \geq \left(2^{-\log{\LOPT{\FirstN}}+\valuableRankSplitMult\cdot \log\log{\rank{\FirstN}}\MValuableCnst}\cdot\LOPT{\FirstN}\right)^{\slackPwrTVal}
>
\frac{1}{2}\cdot\log{\rank{\FirstN}}^{\slackLogExpCnst}\cdot\log{\log{\rank{\FirstN}}},$$
so,
$c^{-1} > \SlackCnstTTF\cdot\log{\log{\rank{\FirstN}}},$ because
$\rank{\FirstN} > 2^{\maxWeightPowerMT}$.
Thus, by (\ref{equ:SlackOne}) and (\ref{equ:SlackTwo}), the result follows.
\end{proof}

The following theorem is the main result of this paper.
\begin{theorem}~\label{thm:main}
The \MA\ is \OO, \OC\ and has returns and, with constant probability, returns and independent set of elements of value $\Omega(\frac{\OPT{\UU}}{\log{\log{\rank{\UU}}}})$.
\end{theorem}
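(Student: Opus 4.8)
The plan is to combine the four main ingredients --- Theorem~\ref{thm:greedy} (the \SA\ guarantee), Theorem~\ref{thm:tuple} (the \GAPA\ guarantee), Theorem~\ref{thm:Concentrations} (prediction from $\FirstN$) and Theorem~\ref{thm:Tradeoff} (the structural trade-off) --- with a case analysis driven by the \StageTwo\ decision. First I would dispose of the easy cases: if $\rank{\UU} \le 2^{\maxWeightPower}$ the rank is bounded by a constant, so even selecting a single highest-value element gives a constant \cor; and by Assumption~\ref{ass:MaxVal} together with the \TA\ (which is invoked with probability $\frac12$ regardless of $\FirstN$), whenever some element has value $\ge 2^{-\maxWeightPower}\OPT{\UU}$ the claimed ratio already holds. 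So I may assume $\rank{\UU} > 2^{\maxWeightPower}$ and that the maximum value is small, and I condition on the constant-probability event of Theorem~\ref{thm:Concentrations} (which also gives $4\rank{\FirstN} > \rank{\UU}$, hence $\rank{\FirstN} > 2^{\maxWeightPowerMT}$). Throughout I lose only constant and $\frac12$ factors, which are absorbed into the $O(\cdot)$.

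Next, working with the set $\Valuable$ and its split $L \cup L'$ from Definition~\ref{def:Valuable}, I would invoke Lemma~\ref{lem:Valuable}: either $\LOPT{\Bucket{\FirstN}{L'}} \ge \KOnePortion\cdot\LOPT{\UU}$, or else $\LOPT{\Bucket{\FirstN}{L}} \ge \KTwoPortion\cdot\LOPT{\UU}$. In the first case, $L'$ consists of the (few) high-index buckets to which Item~\ref{item:SingleBucket} of Theorem~\ref{thm:Concentrations} applies; since $\Valuable\subseteq\Super$, for some single $k\in L'$ the bucket $\Bucket{\LastN}{k}$ has $2^k\rank{\Bucket{\LastN}{k}} = \Omega(\LOPT{\UU}/\log\log\rank{\UU})$ (the number of indices in $L'$ is $O(\log\log\rank{\FirstN})$ by the definition of $L'$), and \StageTwo\ outputs $\JSet=\{k\}$, so Corollary~\ref{cor:SA} finishes this case. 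In the second case $\LOPT{\Bucket{\FirstN}{L}}$ is a constant fraction of $\OPT{\UU}$, and I split on the hypothesis of Theorem~\ref{thm:Tradeoff}: either there is a \manageable\ $K\subseteq L$ with $\sum_{j\in K}2^j\uncov{\Bucket{\FirstN}{K\setminus\{j\}}}{\Bucket{\FirstN}{j}}$ large (then \StageTwo\ outputs $\JSet=K$ and Theorem~\ref{thm:greedy} together with Item~\ref{item:ConcentrationUncov} of Theorem~\ref{thm:Concentrations} --- used to replace the $\LastN$-quantities by $\FirstN$-quantities --- plus Lemma~\ref{lem:slack} to absorb the deviation gives $\OPT{\PN} = \Omega(\LOPT{\Bucket{\FirstN}{L}}/\log\log\rank{\FirstN})$), or else the hypothesis of Theorem~\ref{thm:Tradeoff} holds and it produces a \CT\ $(\BlockN,\GdN,\BdN)$ with Item~\ref{item:CTValue}. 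In the latter subcase \StageTwo\ outputs that \CT, the \GAPA\ is run, and Theorem~\ref{thm:tuple} lower-bounds $\OPT{\PN}$ by $\sum_{j\in\BLOCK}2^j(\uncov{\cdots}{\Bucket{\LastN}{j}} - \indep{\cdots}{\Bucket{\LastN}{j}})$; I then use Items~\ref{item:ConcentrationUncov} and~\ref{item:ConcentrationIndep} of Theorem~\ref{thm:Concentrations} (with $\KSetStar,\KSet$ instantiated as $\Bd{i},\Block{i}$) to pass to $\FirstN$-quantities, bound the $\indepN$ term above by $\uncov{\Bucket{\FirstN}{\Gd{i}\setminus\{i\}}}{\Bucket{\FirstN}{i}}$ plus slack, and apply Lemma~\ref{lem:slack} to show the slack is a lower-order term; what remains is exactly the left side of Item~\ref{item:CTValue} of Theorem~\ref{thm:Tradeoff}, which is $\Omega(\LOPT{\Bucket{\FirstN}{L}}/\log\log\rank{\FirstN})$.

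Finally I would assemble the bookkeeping: in every branch the value returned is $\Omega(\LOPT{\Bucket{\FirstN}{L}}/\log\log\rank{\FirstN})$ or $\Omega(2^k\rank{\Bucket{\LastN}{k}})$, and in both cases Lemma~\ref{lem:Valuable} (Items~\ref{item:KOneKTwo} and~\ref{item:ValuableFirstLast}) shows this is $\Omega(\OPT{\UU}/\log\log\rank{\UU})$; multiplying by the probability $\ge\frac12$ that \StageTwo\ does not default to the \TA, and by the constant probability of the Theorem~\ref{thm:Concentrations} event, and recalling $\rank{\UU} = \Theta(\rank{\FirstN})$, yields the claim. The \OO\ property is immediate from the structure of the \MA\ (it observes $\FirstN$ without selecting, then runs one of three algorithms on $\LastN$) together with the observation that the guarantees of Theorems~\ref{thm:greedy} and~\ref{thm:tuple} hold for \emph{any} arrival order of $\LastN$; and \OC\ holds since only $n$ is used to define $W=\pseudoMid$. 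The main obstacle I expect is the careful matching of the deviation/error terms: verifying that the approximation errors from Theorem~\ref{thm:Concentrations} (the $\rank{\BucketU{\cdot}}^{\simpleGapRankExp}$ and $\rank{\BucketU{k}}^{\rZZDiff}$ terms, summed against $2^j$ over $\BLOCK$ or $K$) are dominated, via Lemma~\ref{lem:slack}, by the $\LOPT{\FirstN}/(\SlackCnst\log\log\rank{\FirstN})$ budget, so that subtracting them from the $\Omega(\LOPT{\Bucket{\FirstN}{L}}/\log\log\rank{\FirstN})$ main term still leaves a positive $\Omega(\OPT{\UU}/\log\log\rank{\UU})$ quantity --- this is where the exact constants in the definitions of $\Super$, $\Valuable$, \manageable\ and \usefulN\ all have to line up.
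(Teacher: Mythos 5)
Your proposal follows essentially the same route as the paper's proof: the same three-way case analysis (a single heavy bucket in $L'$ via pigeonhole and Corollary~\ref{cor:SA}; a \manageable\ set in $L$ via Theorem~\ref{thm:greedy}; otherwise Theorem~\ref{thm:Tradeoff} plus Theorem~\ref{thm:tuple}), with the same use of Theorem~\ref{thm:Concentrations} to transfer $\LastN$-quantities to $\FirstN$-quantities, Lemma~\ref{lem:Valuable} to relate $\LOPT{\Bucket{\FirstN}{L}}$ to $\LOPT{\UU}$, and Lemma~\ref{lem:slack} to absorb the deviation terms. The only cosmetic difference is that you handle $\rank{\UU}\leq 2^{\maxWeightPower}$ as a separate easy case, whereas the paper derives $\rank{\UU}>2^{\maxWeightPower}$ directly from Assumption~\ref{ass:MaxVal}; both are fine.
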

\begin{proof}
We note that the \MA\ is \OC, 
since the computation in \StageOne\ is independent of the matroid elements and 
the computation in the \StageTwoWS\ and the \StageThreeWS\ stages uses only elements of the matroid that have already been revealed.
We also note that the \MA\ is \OO, because by construction, and following Definition~\ref{def:OO}, the analysis depends on the elements in the sets $\FirstN$ and $\LastN$ but not on their order.

By Assumption~\ref{ass:MaxVal},
 the properties of Matroids and Observation~\ref{obs:OPT}, it follows that $\rank{\UU} > 2^{\maxWeightPower}$.
Thus, the event in Theorem~\ref{thm:Concentrations} holds
with probability at least $\TradeoffProb$.
So, by the definition of \cor, it is sufficient to prove the result assuming the event in Theorem~\ref{thm:Concentrations} holds.
We proceed on this assumption.
We note that this means that the conditions needed for Lemma~\ref{lem:Valuable} hold.
By Item~\ref{item:rankFirst} of Theorem~\ref{thm:Concentrations},
we also have $\rank{\FirstN} > 2^{\maxWeightPowerMT}$.
Thus, the conditions needed for Lemma~\ref{lem:slack} also hold.

To conclude the proof we require the use of Items~\ref{item:ConcentrationUncov} and~\ref{item:ConcentrationIndep} of Theorem~\ref{thm:Concentrations}.
We now, prove that they hold for the sets relevant to the proof.
By Item~\ref{item:rankFirst} of 
 Theorem~\ref{thm:Concentrations} 
and Definition~\ref{def:Valuable}, 
we see that $\max{L} \leq \log\LOPT{\UU} - \SuperSplitMultCnst\cdot \log\log{\rank{\UU}}$.
Also, by Item~\ref{item:SuperValuable} of Lemma~\ref{lem:Valuable} and Definition~\ref{def:Valuable}, $L\subseteq \Valuable \subseteq \Super$.
Hence, 
 Items~\ref{item:ConcentrationUncov} and~\ref{item:ConcentrationIndep} of Theorem~\ref{thm:Concentrations} hold, for every $K,K'\subseteq L$, where  $\min{K'} > \max{K}$ or $K' =\emptyset$ and 
 $\min\{\rank{\BucketU{j}}\mid j\in K\} \geq \left(\concentrationRangeMultCnst\cdot\rank{\BucketU{\min{K}}}\right)^{\manageablePower}$, and
 for every $k\in K$.

\paragraph{Case 1:}
Suppose that $\LOPT{\Bucket{\FirstN}{L'}} \geq \KOnePortion\cdot \LOPT{\UU}$.
We observe that $|L'| \leq \valuableRankSplitMult\cdot \log\log{\rank{\FirstN}}$ since, by Definition~\ref{def:LOPT}, $\max{\Valuable}\leq \log{\LOPT{\FirstN}}$. 
Therefore, by the Pigeon Hole Principle and  Definition~\ref{def:LOPT}, 
there exists $k\in L'$ such that 
$\LOPT{\Bucket{\FirstN}{k}} \geq \frac{\LOPT{\UU}}{\LastSetCnst\cdot \log\log{\rank{\FirstN}}}$.
By Items~\ref{item:ValuableFirstLast} and~\ref{item:SuperValuable} of Lemma~\ref{lem:Valuable} and Corollary~\ref{cor:SA}, on input $J=\{k\}$, the \SA\ will return an independent set of elements with an optimal value of at least 
$\frac{\LOPT{\UU}}{\LastSetCnstTT\cdot \log\log{\rank{\FirstN}}}$.
Since $\rank{\FirstN} < \rank{\UU}$, it follows that this is $\Omega(\frac{\OPT{\UU}}{\log{\log{\rank{\UU}}}})$.

\paragraph{Case 2:} Suppose that $\LOPT{\Bucket{\FirstN}{L'}} < \KOnePortion \cdot \LOPT{\UU}$ and  there exists a \manageable\ set $J\subseteq L$ such that $\sum_{j\in J}2^j\cdot
\uncov{\Bucket{\FirstN}{J\setminus\{j\}}}{\Bucket{\FirstN}{j}} \geq\frac{\LOPT{\Bucket{\FirstN}{L}}}{\PreSetCnst\cdot \log\log{\rank{\FirstN}}}$.
By  
Item~\ref{item:KOneKTwo} of Lemma~\ref{lem:Valuable},
$\LOPT{\Bucket{\FirstN}{L}} 
\geq 
\KOnePortion\cdot\LOPT{\UU}$.
By Definition~\ref{def:manageable} and Items~\ref{item:ValuableFirstLast} and~\ref{item:SuperValuable} of Lemma~\ref{lem:Valuable},
for every $j\in J$,
$$\rank{\BucketU{j}} \geq \rank{\Bucket{\FirstN}{j}} \geq \left(\manageableMultCnst\cdot\sum_{i\in J} \rank{\Bucket{\FirstN}{i}}\right)^{\manageablePower} \geq
\left(\SetMainMultCnst\cdot\rank{\BucketU{\min{J}}}\right)^{\manageablePower}.$$
Thus, it follows that   
 Item~\ref{item:ConcentrationUncov} of Theorem~\ref{thm:Concentrations} holds 
  with $K=J$ and $K' = \emptyset$.
So, by Theorem~\ref{thm:greedy}, on input $J$, the \SA\ will return an independent set of elements whose optimal value is at least 
$\frac{\LOPT{\Bucket{\FirstN}{L}}}{\PreSetCnst\cdot \log\log{\rank{\FirstN}}} -
\TalagrandMultCnstTT\cdot\sum_{j\in J}2^j\cdot \rank{\BucketU{J}}^{\simpleGapRankExp}.
$
Since $\rank{\FirstN} < \rank{\UU}$ and $\LOPT{\Bucket{\FirstN}{L}} 
\geq 
\KOnePortion\cdot\LOPT{\UU}$ and $J$ is \manageable,
using Item~\ref{item:SlackManagable} of Lemma~\ref{lem:slack},
the preceding value is $\Omega\left(\frac{\OPT{\UU}}{\log\log{\rank{\UU}}}\right)$. 

\paragraph{Case 3:} Suppose that $\LOPT{\Bucket{\FirstN}{L'}} < \KOnePortion \cdot \LOPT{\UU}$ and that the assumption that the \manageable\ set $J$ exists does not hold.
By Item~\ref{item:ValuableCardinality} of Lemma~\ref{lem:Valuable}, Theorems~\ref{thm:Tradeoff} holds.
Hence, there exists a \CT\ $(\BlockN,\GdN,\BdN)$ as described in  Theorem~\ref{thm:Tradeoff}, which specifically satisfies:
\begin{equation}\label{equ:MainSum}
\sum_{j\in \BLOCK}2^j\cdot\left(
\uncov{\Bucket{\FirstN}{\Bd{j}\cup \Block{j}\setminus\{j\}}}{\Bucket{\FirstN}{j}}
 - 
\uncov{\Bucket{\FirstN}{\Gd{j}\setminus\{j\}}}{\Bucket{\FirstN}{j}} \right) 
\geq 
\frac{\LOPT{\Bucket{\FirstN}{L}}}{\PreTupleCnst\cdot \log\log{\rank{\FirstN}}}.
\end{equation}
By Items~\ref{item:CTContained} and~\ref{item:CTMinIndexBucket} of Theorem~\ref{thm:Tradeoff} and Item~\ref{item:ValuableFirstLast} of Lemma~\ref{lem:Valuable},
for every $i\in \BLOCK$ and $j\in\Block{i}$, we have that $\Gd{i}$, $\Bd{i}$ and $\Block{i}$ are subsets of $L$ and
$$\rank{\BucketU{j}} \geq \rank{\Bucket{\FirstN}{j}}
 > 
 \left(\manageableGdCnst\cdot\sum_{\ell\in \Bd{i}\cup \Block{i}} \rank{\Bucket{\FirstN}{\ell}}\right)^{\manageablePower}
 \geq 
 \left(\tupleMainMultCnst\cdot \rank{\BucketU{\min{\Block{i}}}}\right)^{\manageablePower}.
 $$
Hence, using Definition~\ref{def:tuple}, Theorem~\ref{thm:tuple} and
Items~\ref{item:ConcentrationUncov} and~\ref{item:ConcentrationIndep} of Theorem~\ref{thm:Concentrations}, we get that
$$
 \Big|\uncov{\Bucket{\FirstN}{\Bd{i}}\cup \Bucket{\LastN}{\Block{i}\setminus\{i\}}}{\Bucket{\LastN}{i}} 
-
\uncov{\Bucket{\FirstN}{\Bd{i}\cup\Block{i}\setminus\{i\}}}{\Bucket{\FirstN}{i}}\Big| \leq
\TalagrandMultCnstTT\rank{\BucketU{\Bd{i}\cup \Block{i}}}^{\simpleGapRankExp},
$$
and
$$\indep{\Bucket{\FirstN}{\Gd{i}}}{\Bucket{\LastN}{i}}
 \leq
    \uncov{\Bucket{\FirstN}{\Gd{i}\setminus\{i\}}}{\Bucket{\FirstN}{i}} + \AzzumaDevMultCnstTT\cdot \rank{\BucketU{i}}^{\rZZDiff}.$$
So, by Theorem~\ref{thm:tuple}, 
using (\ref{equ:MainSum}) it is straightforward to show that
given
$(\BlockN,\GdN,\BdN)$ as input, the
 \GAPA\ returns an independent set of elements whose optimal value is at least 
$$\frac{\LOPT{\Bucket{\FirstN}{L}}}{\PreTupleCnst\cdot \log\log{\rank{\FirstN}}}
-
\sum_{j\in \BLOCK}2^j\cdot 
\left(
\AzzumaDevMultCnstTT\cdot\rank{\BucketU{j}}^{\simpleGapRankExp} +
\TalagrandMultCnstTT\cdot\rank{\BucketU{\Bd{j}\cup\Block{j}}}^{\simpleGapRankExp}
\right).
$$
This, in turn, is bounded below by
$ 
\frac{\LOPT{\Bucket{\FirstN}{L}}}{\PreTupleCnst\cdot \log\log{\rank{\FirstN}}}
-
\TalagrandMultCnstTTPAzzumaDevMultCnstTT\cdot
\sum_{j\in \BLOCK}2^j\cdot 
\rank{\BucketU{\Bd{j}\cup\Block{j}}}^{\simpleGapRankExp}.
$
Since $\rank{\FirstN} < \rank{\UU}$ and $\LOPT{\Bucket{\FirstN}{L}} 
\geq 
\KOnePortion\cdot\LOPT{\UU}$,
using Item~\ref{item:SlackTuple} of Lemma~\ref{lem:slack},
the preceding value is $\Omega\left(\frac{\OPT{\UU}}{\log\log{\rank{\UU}}}\right)$. 

Recall that, if Assumption~\ref{ass:MaxVal} does not hold, then the \TA\ ensures that with constant probability an independent set of one element of value $\Omega\left(\frac{\OPT{\UU}}{\log\log{\rank{\UU}}}\right)$ is returned. 
Hence, we may assume that  Case 1, Case 2 or Case 3 holds.
So, in \StageTwo, we can check, using only the knowledge obtained about the elements of $\FirstN$ via the oracle, which one of the cases hold as follows:
First compute $\rank{\FirstN}$. 
Then, use $\rank{\FirstN}$ to determine the sets $\Valuable,L$ and $L'$.
Now find every \manageable\ subset of $L$ and every \CT\ that satisfies the items of Theorem~\ref{thm:Tradeoff}.
Using this information check if there exists a bucket as guaranteed if Case 1 holds, a \manageable\ set as guaranteed if Case 2 holds, or a \CT\ as guaranteed if Case 3 holds.
The analysis of the cases ensures that at least one of the preceding exists. Pick arbitrarily if there exists more than one option.
Finally, in the case of a single \bucketN\ or a \manageable\ set proceed to \StageThree\ and use the \SA, otherwise proceed to \StageThree\ and use the \GAPA.
\end{proof}

\section{Discussion}\label{sec:Discussion}
The \MA\ achieves only the claimed \cor, when the following hold:
the maximum value of an element of the Matroid is  $O\left(\frac{\OPT{\UU}}{\log{\log{\rank{\UU}}}}\right)$ and, with probability at least $1-O(\log\log{\rank{\UU}}^{-1})$,
\begin{enumerate}
\item\label{item:singleBadCase}
$2^j\cdot \rank{\Bucket{\FirstN}{j}} =O\left(\frac{\LOPT{\UU}}{\log{\log{\rank{\UU}}}}\right)$, 
for every $j\in \Valuable$,
\item\label{item:SetBadCase}
 $\sum_{j\in J} 2^j\cdot \uncov{\Bucket{\FirstN}{J\setminus \{j\}}}{\Bucket{\FirstN}{j}} = O\left(\frac{\LOPT{\UU}}{\log{\log{\rank{\UU}}}}\right)$,
for every \manageable\ subset $J$ of the set $L$ used in Theorem~\ref{thm:main}, and
\item\label{item:TupleBadCase} for every \CT\ $(\BlockN,\GdN,\BdN)$ in $L$, that satisfies the items of Theorem~\ref{thm:Tradeoff}\newline
$\sum_{j\in \ZZ}2^j\cdot\left(
\uncov{\Bucket{\FirstN}{\Bd{j}\cup \Block{j}\setminus\{j\}}}{\Bucket{\FirstN}{j}}
 - 
\uncov{\Bucket{\FirstN}{\Gd{j}\setminus\{j\}}}{\Bucket{\FirstN}{j}} \right) = O\left(\frac{\LOPT{\UU}}{\log{\log{\rank{\UU}}}}\right).$
\end{enumerate}
Understanding this case may lead to improved algorithms for the problem or, conversely, to non-trivial lower bounds.
Clearly, the \GAPA\ will perform better than claimed when 
the maximum value of an element of the Matroid is significantly larger than $\frac{\OPT{\UU}}{\log{\log{\rank{\UU}}}}$.
The above also implies that the \GAPA\ will perform better in many other cases, for example, if one of the following occurs with constant probability:
\begin{enumerate}
\item There exists $j\in \Valuable$ such that
$2^j\cdot \rank{\Bucket{\FirstN}{j}} >> \frac{\LOPT{\UU}}{\log{\log{\rank{\UU}}}}$,
which implies that Item~\ref{item:singleBadCase} above does not hold.
\item 
There exists a \manageable\ $J\subseteq L$, such that 
$\sum_{j\in J} 2^j\cdot \rank{\Bucket{\FirstN}{j}} >> \frac{\LOPT{\UU}}{\log{\log{\rank{\UU}}}}$ and
$\BucketU{J}$ is an independent set, which implies that Item~\ref{item:SetBadCase} above does not hold.
\item 
The sum of $2^j\cdot \rank{\Bucket{\FirstN}{j}}$
over every even $j\in L$, is significantly larger than
$\frac{\LOPT{\UU}}{\log{\log{\rank{\UU}}}}$, and for every even $j$ in the set $L$, that is used in Theorem~\ref{thm:main},
$\rank{\Bucket{\FirstN}{i+1}}< 4\cdot \rank{\Bucket{\FirstN}{i}}$ and
$\uncov{\Bucket{\FirstN}{i+1}}{\Bucket{\FirstN}{i}} = 0$.
This implies that Item~\ref{item:TupleBadCase} above does not hold.
\end{enumerate}

\bibliographystyle{plain}

\thispagestyle{empty}
\end{document}